\documentclass[12pt]{article}
\usepackage{amsmath,amsthm,amssymb}
\usepackage{graphicx}
\usepackage{titling}
\title{Synthesis of passive multidimensional scattering matrices: a survey}
\author{Sankar Basu\thanks{Any opinions, findings, conclusions, or recommendations expressed in this material are those of the author and do not necessarily reflect the views of the National Science Foundation.}\\Computer and Information Sciences and Engineering\\National Science Foundation\\Alexandria, Virginia 22314}

\date{ }
\numberwithin{equation}{section} 
\DeclareMathOperator{\rank}{rank}
\DeclareMathOperator{\diag}{diag}
\begin{document}
\begin{titlingpage}
    \maketitle
    \begin{abstract}
    The goal of this exposition is twofold. The first is to survey the status of synthesizability of rational multivariable ($n$-dimensional) passive linear shift invariant bounded (or positive) transfer functions as the scattering (or immittance) matrices of networks consisting of a finite number of $n$ different types of inductances/capacitances and memoryless reciprocal or nonreciprocal elements such as the transformers, gyrators etc. The theory thus attempts to extend the classical network synthesis from univariate (i.e., 1-D) case to the multidimensional ($n$-D) case. It is known that the main bottleneck in pursuing this goal of multivariable circuit synthesis is Hilbert's 17-th problem on expressibility of positive polynomials as a sum of squares, which is an active problem area in real algebraic geometry much studied for more than last hundred years.  The interplay between these two closely related topics, each important in their own right, is the main topic of this survey.  Synthesis of both lossless and lossy (i.e., dissipative) transfer functions are considered. While a 2-D lossless transfer function can be shown to be synthesizable, it turns out that the technique of embedding a dissipative transfer function within a lossless transfer function with a larger number of ports, which is equivalent to a certain matrix dilation problem, can be carried out by means of spectral factorization only partially in 2-D, but completely fails in higher dimensions (i.e., for $n>2$). The fundamental reason for this breakdown is the unavailability of an adequate sum of squares representation of multivariable positive polynomials. In higher dimensions (i.e., for $n>2$) even lossless transfer functions cannot in general be synthesized except for certain stable all-pass functions of low ``degree". Once again, this failure can be attributed to unavailability of sum of squares representation mentioned above. Failure of other 1-D synthesis procedures, e.g., via matrix factorizations of various types are also briefly discussed in the interest of a complete status report on the multidimensional synthesis problem. Depending on the convenience of exposition, our mode of presentation toggles between continuous time systems and discrete time systems. From a conceptual point of view this strategy is believed not to cause any loss of generality at least for the linear shift invariant systems under consideration.
    \end{abstract}
\end{titlingpage}
%
%
\newcommand{\R}{{\cal R}}
\newcommand{\C}{{\cal C}}
\newcommand{\scp}{{\cal S}}
\newcommand{\hp}{{\cal H}}
\newcommand{\z}{\mbox{\bf z}}
\newcommand{\p}{\mbox{\bf p}}
\newcommand{\0}{\mbox{\bf 0}}
\newcommand{\Ibold}{\mbox{\bf I}}
\newcommand{\ol}{\overline}
\newcommand{\ul}{\underline}
\newcommand{\n}{\underline n}
\newcommand{\m}{\underline m}
\newcommand{\be}{\begin{equation}}
\newcommand{\ee}{\end{equation}}
\newtheorem{definition}{Definition}[section]
\newtheorem{roole}{Rule}[section]
\newtheorem{example}{Example}[section]
\newtheorem{fact}{Fact}[section]
\newtheorem{lemma}{Lemma}[section]
\newtheorem{proposition}{Proposition}[section]
\newtheorem{property}{Property}[section]
\newtheorem{theorem}{Theorem}[section]
\newtheorem{remark}{Remark}[section]
\newtheorem{corollary}{Corollary}[section]
\newtheorem{claim}{Claim}[section]
 \newcommand{\WSH}{\mbox{$\cal WSH$}}
 \newcommand{\WSS}{\mbox{$\cal WSS$}}
 \newcommand{\SH}{\mbox{$\cal SH$}}
 \newcommand{\ScS}{\mbox{$\cal SS$}}
 \newcommand{\SPH}{\mbox{$\cal SPH$}}
 \newcommand{\SRS}{\mbox{$\cal SRS$}}
 \newcommand{\RH}{\mbox{$\cal RH$}}
 \newcommand{\RS}{\mbox{$\cal RS$}}
 \newcommand{\IH}{\mbox{$\cal IH$}}
 \newcommand{\IS}{\mbox{$\cal IS$}}
%
 \newcommand{\kpp}{{\bf \kappa}}
 \newcommand{\N}{{\bf N}}
 \newcommand{\x}{{\bf x}}
 \newcommand{\deter}{\mbox{\rm det}}
 \newcommand{\s}{\mbox{ ${\cal S}$ }}
 \newcommand{\bolddelta}{\mbox{\boldmath $\delta$}}
 \newcommand{\zt}{\mbox{$\zeta$}}

%
 \newcommand{\HH}{{\bf H}}
 \newcommand{\one}{{\bf 1}}
 \newcommand{\zz}{{\bf z}}
 \newcommand{\zero}{{\bf 0}}
 \newcommand{\pp}{\bf p}
 \newcommand{\PP}{{\bf P}}
 \newcommand{\DD}{{\bf D}}
 \newcommand{\MM}{{\bf M}}
 \newcommand{\y}{{\bf A}}
 \newcommand{\NN}{{\bf N}}
 \newcommand{\CC}{{\bf C}}
 \newcommand{\UU}{{\bf U}}
 \newcommand{\TT}{{\bf T}}
 \newcommand{\BB}{{\bf B}}
 \newcommand{\KK}{{\bf K}}
 \newcommand{\RR}{{\bf R}}
 \newcommand{\VV}{{\bf V}}
 \newcommand{\QQ}{{\bf Q}}
 \newcommand{\WW}{{\bf W}}
 \newcommand{\GG}{{\bf G}}
 \newcommand{\bchoose}{\left( \begin{array}{c}}
 \newcommand{\echoose}{\end{array} \right)}

 \newcommand{\Lambdabar}{\underline{\Lambda}}
 \newcommand{\pbar}{\underline{p}}
 \newcommand{\phibar}{\underline{\phi}}
 \newcommand{\Hbar}{\underline{H}}
 \newcommand{\omegabar}{\underline{\omega}}
 \newcommand{\Gbar}{\underline{G}}
 \newcommand{\Qbar}{\underline{Q}}
 \newcommand{\Abar}{\underline{A}}
 \newcommand{\onebar}{\underline{1}}
 \newcommand{\Vbar}{\underline{V}}
 \newcommand{\Ubar}{\underline{U}}
 \newcommand{\Pbar}{\underline{P}}
 \newcommand{\zerobar}{\underline{0}}
 \newcommand{\Rbar}{\underline{R}}
 \newcommand{\Dbar}{\underline{D}}
 \newcommand{\Wbar}{\underline{W}}
 \newcommand{\Lbar}{\underline{L}}
 \newcommand{\Sbar}{\underline{S}}
 \newcommand{\Fbar}{\underline{F}}
 \newcommand{\Bbar}{\underline{B}}
 \newcommand{\Ebar}{\underline{E}}
 \newcommand{\Kbar}{\underline{K}}
 \newcommand{\ubar}{\underline{u}}
 \newcommand{\ybar}{\underline{y}}
 \newcommand{\xbar}{\underline{x}}

\newcommand{\X}{\bf X}
\newcommand{\LL }{\bf L}
\newcommand{\A}{\bf A}
\newcommand{\B}{\bf B}
\newcommand{\D}{\bf D}
\newcommand{\E}{\bf E}
\newcommand{\F}{\bf F}
\newcommand{\V}{\bf V}
\newcommand{\K}{\bf K}
\newcommand{\G}{\bf G}
\newcommand{\je}{\bf J}
\newcommand{\I}{\bf I}
\newcommand{\Q}{\bf Q}
\newcommand{\uu}{\bf U}
\newcommand{\Sbold}{\bf S}
\newcommand{\boldO}{\bf O}

\newcommand{\bn}{\begin{enumerate}}
\newcommand{\en}{\end{enumerate}}

\newcommand{\ms}{\mbox{$1-\sqrt{3}$}}
\newcommand{\ps}{\mbox{$1+\sqrt{3}$}}
\newcommand{\sh}{ \mbox{$\sqrt{3}$}}
\newcommand{\st}{ \mbox{$\sqrt{2}$}}
\newcommand{\msn}{\mbox{$\sqrt{2-\sqrt{3}}$}}
\newcommand{\msp}{\mbox{$\sqrt{2+\sqrt{3}}$}}

\newcommand{\ch}{\cal H}
\newcommand{\cg}{\cal G}

\newcommand{\h}{\bf H}
\newcommand{\g}{\bf G}
\newcommand{\M}{\bf M}

\newcommand{\w}{\bf w}
\newcommand{\el}{\bf l}

\newcommand{\ccone}{\mbox{$(zI-A + BK)^{-1}$}}
\newcommand{\cctwo}{\mbox{$(zI-A + LC)^{-1}$}}

\newcommand{\bq}{\begin{equation}}
\newcommand{\eq}{\end{equation}}
\newcommand{\bea}{\begin{eqnarray}}
\newcommand{\eea}{\end{eqnarray}}
\newcommand{\ba}{\begin{array}}
\newcommand{\ea}{\end{array}}

\newcommand{\um}{\mbox{$\uparrow_{M}$}}
\newcommand{\dm}{\mbox{$\downarrow_{M}$}}
\newcommand{\gc}{\mbox{$\cal G$}}
\newcommand{\hc}{\mbox{$\cal H$}}
\newcommand{\de}{\mbox{$\delta$}}
\newcommand{\q}{\bf Q}
\newcommand{\e}{\bf E}
\newcommand{\U}{\bf U}
\newcommand{\T}{\bf T}

\newcommand{\ta}{\tilde a}
\newcommand{\tb}{\tilde b}
\newcommand{\tp}{\tilde{\bf P}}
\newcommand{\tq}{\tilde{\bf Q}}
\newcommand{\tr}{\tilde{\bf R}}
\newcommand{\tu}{\tilde{\bf U}}
\newcommand{\tT}{\tilde{\bf T}}

 \newcommand{\eord}{\stackrel{e}{<}}
 \newcommand{\oord}{\stackrel{o}{<}}
 \newcommand{\Scal}{{\cal S}}

\renewcommand{\Re}{\operatorname{Re}}
\renewcommand{\Im}{\operatorname{Im}}

\section{Introduction}
This introductory section is only intended to provide a brief listing of the various sections and subsections to follow, and to explain how they are tied together. More details on specific problems addressed, including information on background literature, are provided at the beginning of respective sections.

Section \ref{section-sos} starts with a leisurely introduction to the relevance of the sum of squares (SOS) problem to the 1-D spectral factorization problem. It then considers the SOS problem in 2-D and higher dimensions in subsection \ref{SOS-higher-dimension}. Here, some historical remarks beginning from the seminal work of Hilbert are made, and some references to the classical and modern literature relevant to our purpose are included. This is followed by a more detailed examination of the bivariate SOS problem, and its relevance to the scalar 2-D spectral factorization in subsection \ref{bivariate-SP-factor}. A matrix generalization of the spectral factorization problem is then undertaken and worked out in detail in subsection \ref{MatrixSPfactor}. While all this is considered in the continuous  domain,  Section \ref{section-sos} ends with  obvious analogs of the corresponding results in the discrete domain.

The synthesis of lossless 2-D bounded matrices forms the subject of entire Section \ref{scatt:S21}. A (global) synthesis procedure is essentially worked out in 3 steps elaborated in subsections \ref{first-step}-\ref{third-step}. In subsection \ref{minimal-synthesis} it is shown that such a synthesis is indeed minimal in some sense.

Corresponding problems in higher dimensions (i.e., $n$-D, $n>2$) are considered  in Section \ref{scatt:S24}. Here, the main result is demonstration of the fact that 3-D all-pass functions of ``low degree", as a scalar example of lossless bounded matrices, can indeed be synthesized. By drawing upon  counterexamples from the literature it is shown that synthesis of all-pass functions are infeasible, in general, particularly when the number $n$ of variables involved are larger than 3, or the prescribed transfer function is of higher degree. 

Section \ref{S22} demonstrates how by using the main spectral factorization result from Section \ref{section-sos}, which in fact fundamentally rests on Hilbert's results on the SOS problem,  a 2-D  bounded rational matrix can be embedded in a 2-D lossless bounded rational matrix. In circuit theoretic terms,  the problem of  synthesis of a prescribed lossy, or dissipative 2-D rational transfer function is then translated to the problem of synthesizing a 2-D lossless bounded rational matrix (i.e., the scattering matrix of a 2-D lossless circuit). Since, as demonstrated in Section \ref{scatt:S21}, this latter problem can be fully solved, synthesizability of a prescribed 2-D bounded rational matrix as the scattering matrix of a 2-D network  is thus established.

Passive state space realizations are considered in Section \ref{scatt:S23}. Discussion naturally leads to  2-D analog of the well-known bounded real lemma (or the Kalman-Popov-Yakubovitch lemma), and it is shown that  for 2-D, a weak form of the bounded real lemma can indeed be established. It is to be noted that all discussions leading up to this point fundamentally rests on issues surrounding the SOS problem originating in the work of Hilbert. For the sake of completeness, two other approaches to multidimensional synthesis are very briefly outlined in Section \ref{other-synthesis}. These, however, do not directly depend on the SOS problem. Finally, the main results are briefly summarized in Section \ref{conclude}.

In considering circuits and system theoretic problems, our presentation toggles between the continuous domain and the discrete domain depending on the convenience of exposition.  From a conceptual point of view, this strategy should not cause any loss of continuity at least for the linear shift invariant systems under consideration.

All circuit and systems theoretic notions are introduced and defined in the context, when it appears first. Obvious  notation are not explained, unless something nonstandard is used. However, at the very outset we introduce some notations commonly used in the abstract abstract algebra literature.

 The set of all complex numbers are to be denoted by $\mathbb{C}$. Correspondingly, the set of all rational functions in the variable
 \footnote{
Here $p$ denotes the Laplace transform variable; alternative notation $s$ is also used in system theoretic literature. In the bivariate context we have the variables $p_1$, $p_2$, etc. }
$p_1$ with complex  coefficients  will be denoted by $\mathbb{C}(p_1)$. Note that  $\mathbb{C}(p_1)$ forms an algebraic {\em field}.  The set of all polynomials in the variable $p_2$ with coefficients drawn from the field  $\mathbb{C}(p_1)$ will be denoted by
 $\mathbb{C}(p_1)[p_2]$. Such a set forms an algebraic {\em ring} of very special type, and it will be important for us to note that
 $\mathbb{C}(p_1)[p_2]$ is known as a principal ideal domain (PID), and in particular an {\em Euclidean domain} in which  Euclid's (pseudo) division algorithm holds.  We will also deal with matrices whose entries are
drawn from  $\mathbb{C}(p_1)[p_2]$. Strictly speaking, the set of all matrices of size $(m\times n)$ whose entries are from  $\mathbb{C}(p_1)[p_2]$ are to be denoted by  $\mathbb{C}^{(m\times n)}(p_1)[p_2]$. However, with some abuse of notation we shall drop the superscript $(m\times n)$ and use  the same notation to  denote the class of matrices whose entries belong to $\mathbb{C}(p_1)[p_2]$.   We will say a square matrix in $\mathbb{C}(p_1)[p_2]$ is $p_2\text{\em -unimodular}$ if its determinant is not identically zero, and is in $\mathbb{C}(p_1)$, i.e.,  independent of $p_2$. A matrix is simply said to be {\em unimodular} if its determinant is a nonzero constant.  The same notational convention would apply if the polynomials and/or rational functions involved have coefficients drawn from, instead of $\mathbb{C}$,  the set of real numbers $\mathbb{R}$, i.e., we will then use notations such as $\mathbb{R}(p_1)$ or  $\mathbb{R}(p_1)[p_2]$ etc. Likewise, if the transform variable is a Z-transform variable used in discrete domain considerations, then we may use notations, e.g., $\mathbb{C}(z_1)$, $\mathbb{C}(z_1)[z_2]$, or $\mathbb{R}(z_1)$, $\mathbb{R}(z_1)[z_2]$ as the case may be. Natural generalizations of these notations in the $n$-D ($n>2$) case, clear from the context, may also be used.

\section{Preliminaries on sum of squares} \label{section-sos}
We begin with a leisurely introduction of the necessary ingredients of our discussion in the $1$-D case.
Let $s(p)=b(p)/a(p)$ be a real rational bounded function of a single complex variable $p$, i.e., its numerator $b(p)$ and $a(p)$ are polynomials with \underline{real} coefficients and $|s(p)|<1$ in $\operatorname{Re}p>0$. The last condition can be equivalently replaced by 
\be\label{1.1}
1-{{s}_{*}}(p)s(p)>0\text{ for }p=j\omega ,\text{  }s(p)\text{ holomorphic in }\operatorname{Re}p>0.\ee 
The holomorphicity condition in (\ref{1.1}) requires $a(p)$ to be a stable polynomial, i.e., $a(p)\ne 0$ in $\operatorname{Re}p>0$. Clearly, we have 
\be\label{1.2}
1-s_{*}(p){s(p)=\frac{\phi (p)}{{{a}_{*}}(p)a(p)};\quad \phi (p)=a_{*}}(p)a(p)-b_{*}(p)b(p).\ee 	
Notice that since $\phi ({{p}_{0}})=0$ with some $\operatorname{Re}{{p}_{0}}>0$ implies $\phi (-{{p}_{0}})=0$ and vice versa, it is easy to see that the zeros of $\phi (p)$ away from the $j\omega $ axis form quadrantal symmetry and thus the following factorization of $\phi (p)$ holds
\be\label{1.3}
\phi (p)={{a}_{*}}(p)a(p)-{{b}_{*}}(p)b(p)={{c}_{*}}(p)c(p),\ee
where $c(p)$ is a real polynomial, thus establishing the so called spectral factorization  
\be\label{1.4}
1-{{s}_{*}}(p)s(p)={{r}_{*}}(p)r(p);\quad r(p)=\frac{c(p)}{a(p)}.\ee
Note further that the factorizations indicated in (\ref{1.3}) and (\ref{1.4}) are non-unique, and in particular,  $c(p)$ can be chosen to be a stable polynomial, i.e., $c(p)\ne 0$ in $\operatorname{Re}p>0$, thus endowing the rational spectral factor $r(p)$ the further property that it is not only stable but it also has a stable inverse. Such functions have been called the \underline{outer} functions in the literature.

It may be noted that the contractivity condition (\ref{1.1}) does, in fact, play a sublime but  critical role in the feasibility of the factorization (\ref{1.3}) or (\ref{1.4}), because it prevents the occurrence of factors of the type $({{p}^{2}}+{{\omega }^{2}})$ with odd multiplicity in $\phi (p)$, thus enforcing a quadrantal  symmetry on the zeros of $\phi (p)$ in the complex plane.

While the above spectral factorization result (\ref{1.4}) is easily seen via polynomial factorization, it can also be derived by using purely rational algebraic operations. This latter technique provides deeper insight into the feasibility of an analogous factorization for polynomials in two or more variables. 

In order to appreciate the developments to follow and set the mathematical environment against proper background, next we briefly digress to a well-known result on expressibility of univariate positive polynomials as a sum of squares and its adaptation to the present context. To this end, we have the following fact, examining the validity of which via a specific technique prove to be relevant to our purposes. 

\begin{fact}\label{SOS1D}
{\bf[sum of squares representation, 1D case]}\label{1DSOS} \mbox{ }\\
(a) If $f(x)$ is a real polynomial, positive for all real values of the variable $x$, then $f(x)$ can be expressed as
\be \label{1.5}
f(x)=q_{1}^{2}(x)+q_{2}^{2}(x),
\ee
where ${{q}_{1}}(x)$ and ${{q}_{2}}(x)$ are real polynomials in $x$. 
\\
(b) Furthermore, if $f(x)$ is an even polynomial in $x$ (i.e., $f(x)=f(-x)$, in other words, only powers of ${{x}^{2}}$ occur in the expression for $f(x)$), then it is possible to have ${{q}_{1}}(x)$ as an even polynomial, i.e., ${{q}_{1}}(x)={{q}_{1}}(-x)$, and ${{q}_{2}}(x)$ as an odd polynomial in $x$, i.e., ${{q}_{2}}(x)=-{{q}_{2}}(-x)$.
\end{fact}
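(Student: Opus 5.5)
For part (a) I will factor $f$ over $\mathbb{C}$. Since $f(x)>0$ for all real $x$, $f$ has no real zeros and its leading coefficient is positive. Because $f$ has real coefficients, its zeros come in conjugate pairs $\alpha_k,\bar\alpha_k$, $k=1,\dots,m$, so $\deg f=2m$ and
\be
f(x)=c\prod_{k=1}^{m}(x-\alpha_k)(x-\bar\alpha_k),\qquad c>0 .
\ee
Set $g(x)=\sqrt{c}\,\prod_{k}(x-\alpha_k)\in\mathbb{C}[x]$. For real $x$ we have $\overline{g(x)}=\sqrt{c}\prod_k(x-\bar\alpha_k)$, so $f(x)=g(x)\overline{g(x)}=|g(x)|^2$ on the real line. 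Now write $g=q_1+jq_2$, where $q_1,q_2$ are the real and imaginary parts of the coefficients of $g$; both are real polynomials. Then $f=q_1^2+q_2^2$ on $\mathbb{R}$, and since both sides are polynomials this is an identity. This is exactly the univariate analog of the spectral factorization (\ref{1.3}), with $x$ playing the role of $\omega$ on $p=j\omega$.

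For part (b) I will choose the zeros more carefully so that $g$ has the right symmetry. Put $y=x^2$ and $f(x)=h(x^2)$. The zeros of $f$ are symmetric under $x\mapsto -x$ and under conjugation, so they form quadruples $\{\alpha,-\alpha,\bar\alpha,-\bar\alpha\}$, or pairs $\{\pm j\beta\}$ when $\alpha$ is purely imaginary (the analog of quadrantal symmetry in (\ref{1.3})). Real zeros are excluded by positivity.

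I then pick $g$ as $\sqrt{c}$ times the product over the zeros lying in the closed upper half plane $\operatorname{Im}\alpha>0$, counted with multiplicity. This set is closed under $\alpha\mapsto-\bar\alpha$. Hence $g(-x)=\pm\overline{g(\bar x)}$, that is, $g_*(x):=\overline{g(-\bar x)}$ equals $\pm g(x)$; after multiplying $g$ by a unimodular constant ($1$ or $j$) I can arrange $g(-x)=\overline{g}(x)$. Here $\overline{g}$ denotes the polynomial with conjugated coefficients.

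Writing $g=q_1+jq_2$, the relation $g(-x)=\overline g(x)$ gives $q_1(-x)+jq_2(-x)=q_1(x)-jq_2(x)$. So $q_1$ is even and $q_2$ is odd, while still $f=g\overline g=q_1^2+q_2^2$. Multiplying $g$ by $j$ only swaps $q_1$ and $q_2$ up to sign and preserves the sum of squares.

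The main obstacle is the sign and phase bookkeeping in (b). I must check that the product over upper-half-plane roots satisfies $g(-x)=\epsilon\,\overline g(x)$ with $|\epsilon|=1$, and that $\epsilon$ can be normalized to $1$. I will verify this directly: each factor $(x-\alpha)(x+\bar\alpha)$ with $\alpha\neq-\bar\alpha$ maps under $x\to-x$ to its conjugate-coefficient version, and each single factor $(x-j\beta)$ maps to $-(x+j\beta)=-\overline{(x-j\beta)}$. So $\epsilon=(-1)^r$, where $r$ is the number of imaginary-axis roots in the upper half plane. When $\epsilon=-1$, replacing $g$ by $jg$ fixes it.

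As a cross-check, I would also present the "rational" route the paper alludes to. Write $h(y)=h_1(y)^2+y\,h_2(y)^2$ for $y\ge0$, using the 1-D factorization of $h$ on $[0,\infty)$. Then set $q_1=h_1(x^2)$ and $q_2=x\,h_2(x^2)$.
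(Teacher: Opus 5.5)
Your proof is correct, but it takes a different route from the paper's. The paper stays in $\mathbb{R}[x]$. It factors $f$ into irreducible real quadratics $(x-\alpha)^2+\beta^2$ and multiplies them using the two-squares identity (\ref{1.6}). For (b) it pairs $(x-\alpha)^2+\beta^2$ with $(x+\alpha)^2+\beta^2$ via (\ref{1.7}) and checks that (\ref{1.6}) preserves the even/odd pattern of the squared terms. You instead factor over $\mathbb{C}$ as $f=g\overline{g}$ and read off $q_1,q_2$ as the real and imaginary parts of $g$.

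The two are the same mechanism in different notation. Identity (\ref{1.6}) is just $|z_1z_2|^2=|z_1|^2|z_2|^2$ with $z_1=a+jb$, $z_2=c\mp jd$. In particular, (\ref{1.7}) is $|((x-\alpha)+j\beta)((x+\alpha)+j\beta)|^2$, i.e.\ the paper also selects two roots that are mirror images about the imaginary axis, which is your selection rule.

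What your version buys:
\begin{itemize}
\item The parity requirement collapses to the single identity $g(-x)=\overline{g}(x)$. This is exactly the condition that $c(p)=g(-jp)$ has real coefficients.
\item The roots of $c$ are $j\alpha$ with $\operatorname{Im}\alpha>0$, so your choice directly produces the strictly Hurwitz real spectral factor in (\ref{1.3}), in the form (\ref{1.12}).
\item Your bookkeeping $\epsilon=(-1)^r$, with $r$ counted with multiplicity, explicitly handles self-paired factors $x^2+\beta^2$. The paper's pairing argument leaves this case implicit; such a factor is trivially $\beta^2+x^2$.
\end{itemize}
The paper's real-algebraic route, in turn, displays that sums of two squares are closed under multiplication, and the sign freedom in (\ref{1.6}) (cf.\ (\ref{1..8})). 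That is the thread the paper later follows into the multivariate SOS discussion.

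One caveat: your closing cross-check is not independent evidence. A representation $h(y)=h_1(y)^2+y\,h_2(y)^2$ for $h$ positive on $[0,\infty)$ is equivalent to part (b) itself; it is exactly the form (\ref{1.10}). Offering it as support would therefore be circular.
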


Part (a) is well-known, and follows rather easily from the following arguments. First, $f(x)>0$ for all real $x$ implies, that the factors corresponding to real zeros of $f(x)$ must have even multiplicity, i.e., will occur as squares. Next observe that an irreducible (over reals) quadratic factor of $f(x)$ i.e., those corresponding to pairs of complex conjugate zeros, appear as sum of squares, e.g., ${{(x-\alpha )}^{2}}+{{\beta }^{2}}$, where $\alpha$  and $\beta$ are real numbers. Furthermore, the product of any two sum of squares can also be expressed as a sum of squares, e.g.,
\be\label{1.6}
({{a}^{2}}+{{b}^{2}})({{c}^{2}}+{{d}^{2}})={{(ac\pm bd)}^{2}}+{{(ad\mp bc)}^{2}}.
\ee
Part (a) thus follows from a combined application of the above mentioned facts in a straightforward manner.

Part (b) is usually not stated in the context of sum of squares representation, but is more relevant for system theoretic applications via its use in spectral factorization. To justify its validity, first observe that since $f(x)=f(-x)$, factors corresponding to real zeros of $f(x)$  not only come in even multiplicity but must occur in even powers of  even polynomials, i.e., of the form ${{({{x}^{2}}-{{\gamma }^{2}})}^{2}}$ , $\gamma$ real. Next, corresponding to every irreducible quadratic factor ${{(x-\alpha )}^{2}}+{{\beta }^{2}}$ of $f(x)$ mentioned above, ${{(x+\alpha )}^{2}}+{{\beta }^{2}}$ must also be an irreducible quadratic factor of $f(x)$. Now, setting $a=x-\alpha $, $b=\beta $, $c=x+\alpha $, and $d=\beta $  in (\ref{1.6}) with the choice of lower  signs one can write:
\be\label{1.7} 
\{{{(x-\alpha )}^{2}}+{{\beta }^{2}}\}\{{{(x+\alpha )}^{2}}+{{\beta }^{2}}\}={{({{x}^{2}}-{{\alpha }^{2}}-{{\beta }^{2}})}^{2}}+{{(2\beta x)}^{2}},
\ee
which is a sum of square of an even polynomial and an odd polynomial. Next, we make the observation that if on the left hand side of (\ref{1.6}) $a$, $c$ are both even polynomials, and $b$, $d$ are both odd polynomials then on the right hand side of (\ref{1.6}) $ac\pm bd$ is an even polynomial, whereas $ad\mp bc$ is an odd polynomial, i.e., in a sense the even/odd character of the sum of squares in the factors remains invariant in the product as well. Combining these facts, now it is easy to see that the products of all irreducible factors of $f(x)$ may be arranged in a form such that in (\ref{1.5}) we may have ${{q}_{1}}(x)={{q}_{1}}(-x)$, and ${{q}_{2}}(x)=-{{q}_{2}}(-x)$. \hfill{Q.E.D}

{\flushleft\bf Remark:} 
While the justification of the part (b) of the fact mentioned above is now complete, the above discussion also yields the non-uniqueness of the representation (\ref{1.5}). For example, the choice of lower signs in (\ref{1.6}) yields
\be\label{1..8}
 \{{{(x-\alpha )}^{2}}+{{\beta }^{2}}\}\{{{(x+\alpha )}^{2}}+{{\beta }^{2}}\}={{({{x}^{2}}-{{\alpha }^{2}}+{{\beta }^{2}})}^{2}}+{{(2\alpha \beta )}^{2}},
\ee 	
which is also a  sum of squares of two polynomials, both of which have the property of being even. 

We now return to application of the above general principles to spectral factorization. For this, we start by considering the function
\be\label{1.9}
\phi (j\omega )={{a}_{*}}(j\omega )a(j\omega )-{{b}_{*}}(j\omega )b(j\omega ),
\ee 
which, as a function of $\omega$,  is clearly a real polynomial, and is even, i.e., only monomials containing powers of  ${{\omega }^{2}}$ are present in it. Furthermore, it follows from (\ref{1.1}) and (\ref{1.2}) that $\phi (j\omega )=|a(j\omega ){{|}^{2}}-|b(j\omega ){{|}^{2}}>0$ for all real values of $\omega$, which after invoking both part (a) and part(b) of the above  discussion imply the existence of two further real even polynomials ${{q}_{1}}(\omega )$ and ${{q}_{2}}(\omega )$ such that 
\be\label{1.10}
\phi (j\omega )=q_{1}^{2}({{\omega }^{2}})+{{\omega }^{2}}q_{2}^{2}({{\omega }^{2}}).
\ee 
We next extend the polynomials in (\ref{1.10}) analytically to the complex plane by inserting $\omega=-jp $, i.e.,  $p=j\omega $, and we write
\be\label{1.11}
\phi (p)=q_{1}^{2}(-{{p}^{2}})-{{p}^{2}}q_{2}^{2}(-{{p}^{2}})=({{q}_{1}}(-{{p}^{2}})+p{{q}_{2}}(-{{p}^{2}}))({{q}_{1}}(-{{p}^{2}})-p{{q}_{2}}(-{{p}^{2}}).
\ee
If we define the real polynomial $c(p)$ and thus its para-conjugate ${{c}_{*}}(p)$ as 
\be\label{1.12} 
c(p)={{q}_{1}}(-{{p}^{2}})+p{{q}_{2}}(-{{p}^{2}});\quad {{c}_{*}}(p)={{q}_{1}}(-{{p}^{2}})-p{{q}_{2}}(-{{p}^{2}}),
\ee 
then  (\ref{1.11})  can also be written as
\be\label{1.13}
\phi (p)=c(p){{c}_{*}}(p),
\ee
which is indeed the desired spectral factorization  (\ref{1.3}) resulting in (\ref{1.4}).

{\flushleft\bf Remark:} While the critical step for the spectral factorability is the sum of squares representation (part (a)) as in (\ref{1.5}), the fact that the spectral factor $c(p)$ is a real polynomial is a consequence of part (b), i.e., the positive polynomial $\phi (j\omega )$ can be expressed as sum of squares of an  even polynomial and an odd polynomial as in (\ref{1.10}). In our multidimensional generalizations to be discussed later, this latter fact can prove to be a point of departure even under circumstances when a sum of squares representation as, e.g., in part (a) hold true.

\subsection{Generalizations to higher dimensions:}\label{SOS-higher-dimension}
The sum of squares representation of a positive polynomial in $n$-variables is the crux of Hilbert’s 17th problem. While the problem has a long history \cite{S_fact:lam,S_fact:marshall} leading to beginning of the field of real algebraic geometry, and its more recent use in nonconvex optimization, we highlight the main facts by restricting ourselves to the immediate needs of the present context.

The earliest example of a positive polynomial not expressible as a sum of squares is the well-known Motzkin polynomial \cite{S_fact:motz}
\be\label{1.14}
M(x,y)={{x}^{2}}{{y}^{4}}+{{x}^{4}}{{y}^{2}}+1-3{{x}^{2}}{{y}^{2}},
\ee
which is easily seen to be positive by using the arithmetic-geometric mean inequality $\frac{1}{3}({{x}_{1}}+{{x}_{2}}+{{x}_{3}})\ge \sqrt[3]{{{x}_{1}}{{x}_{2}}{{x}_{3}}}$ with ${{x}_{1}}=1$, ${{x}_{2}}={{x}^{2}}{{y}^{4}}$, and ${{x}_{2}}={{x}^{4}}{{y}^{2}}$. The proof of the fact that $M(x,y)$ cannot be expressed as a sum of squares of polynomials simply follows from a brute force attempt to  match the coefficients of the candidate polynomials, is widely known in open literature, and is thus not repeated here \cite{S_fact:marshall,contruct-algorithm,Reznick,S_fact:lam}. To wit, it may be noted that despite long history of the problem, concrete counterexamples of the above type are of relatively recent origin, and a well known counterexample due to Choi and Lam \cite{Reznick} that followed Motzkin's first counterexample was indeed inspired by circuit theorist Toshiro Koga's erroneous treatment \cite{rob:Koga1} of synthesizability of bivariate positive  real functions. 

Hilbert's 17th problem was solved by the so called Artin-Schrier theory of ordered fields \cite{S_fact:marshall}, the essential result of which is stated as follows. If $f(\mathbf{x})$ is a real positive polynomial in $n$-variables $\mathbf{x}=({{x}_{1}},{{x}_{2}},\cdots {{x}_{n}})$ then there exists a real polynomial $g(\mathbf{x})$ such that ${{g}^{2}}(\mathbf{x})f(\mathbf{x})$ can be expressed as a sum of squares of polynomials. The original result of Artin was an existence result, i.e., its proof was nonconstructive, and the minimum number of square terms needed in such an expression was not available from the Artin-Schrier theory either. Indeed, for the Motzkin polynomial $M(x,y)$ with $g(x,y)={{x}^{2}}+{{y}^{2}}$ one can have 
\be\label{1.15}
{{g}^{2}}(x,y)M(x,y)={{x}^{2}}{{y}^{2}}({{x}^{2}}+{{y}^{2}}+1){{({{x}^{2}}+{{y}^{2}}-2)}^{2}}+{{({{x}^{2}}-{{y}^{2}})}^{2}},
\ee
which is clearly a sum of squares of four polynomials. Thus, the Artin-Schrier result says that real positive polynomials can be expressed as sum of squares of real rational functions. The minimum number $\pi (n)$ of such square terms needed was later provided by the so-called theory of Pfister forms \cite{S_fact:lam,S_fact:marshall}. The essential result of relevance to us is that 
\be\label{1.16}
n+1\le \pi (n)\le {{2}^{n}}.
\ee 
Thus, in the univariate case, i.e., when $n=1$ we have the classical result $\pi (1)=2$. In the bivariate case i.e., when $n=2$ from (\ref{1.16}) we have $\pi (2)\le 4$. However, there exist bivariate polynomials, $M(x,y)$ being one such example, which cannot be expressed as sum of $3$ squares. Thus, for $n=2$ we have $\pi (2)=4$ in general. Since we will be mostly concerned with the bivariate case $n=2$, this latter fact will  play a significant role in further considerations. 

\subsubsection{Computation:}
While Artin's result, cast in the theory of ordered fields is nonconstructive, a constructive proof leading to an algorithm can indeed be given  by using a combination of various elementary techniques buried in the literature. To this end, one can cite \cite{contruct-algorithm} in the circuit theoretic context, but more modern and elaborate treatments have subsequently become available in the mathematical literature. Thus, the problem is solvable in a finite number of steps, and is thus ‘decidable’ in the sense of mathematical logic (\`a la Tarski, Seidenberg et. al.). However,  the algorithm in \cite{contruct-algorithm}, which can in principle be implemented via computations using symbolic algebra, is bound to be unmanageably large for problems of even reasonable size. It may be noted that although an elementary and constructive proof of Artin's result is available in this way, the algorithm does not produce, nor does it justify, the minimum number of factors $\pi (n)$ as given in (\ref{1.16}). An elementary proof of this latter fact, suitable for the current purpose, is unknown to us at present even for the case $n=2$.

Relevance to non-convex optimization has more recently triggered interest in algorithms having more numerical flavor in the computing community \cite{Parrilo}. Here one seeks ‘certificates’ of positivity of a multi-variable polynomial $f(\mathbf{x})$, by seeking a positive semidefinite matrix $\mathbf{A}$ such that $f(\mathbf{x})={{\mathbf{X}}^{T}}\mathbf{AX}$, where $\mathbf{X}$ is a vector of all relevant monomials. The square root operation $\mathbf{A}={{\mathbf{B}}^{T}}\mathbf{B}$ then yields the sum of squares representation$f(\mathbf{x})=|\mathbf{BX}{{|}^{2}}$ in terms of $r$ sum of squares with $r=\text{rank}(\mathbf{A})$, whereby the problem reduces to that of solving a semidefinite program. The size of the vector $\mathbf{X}$ clearly grows rapidly with the number of variables $n$ and the degree $d$ of the polynomial $f(\mathbf{x})$, and such algorithms, although have a numerical flavor, are known to be NP-hard \`{a} la computational complexity theory.

{\flushleft\bf The cones ${{\mathbf{\Sigma }}_{n,d}}$ and ${{\mathbf{\Pi }}_{n,d}}$:} 
It is easy to see that the set of all real positive polynomials  ${{\mathbf{\Pi }}_{n,d}}$ in $n$ variables with degree\footnote{%
Total degree of  a monomial is the sum of its partial degrees in each variable. Total degree of a polynomial is the largest of the total degrees of its monomials. By degree of a polynomial without further qualification, we always refer to its total degree.
}
 $d$,  as well as the set of all real polynomials  ${{\mathbf{\Sigma }}_{n,d}}$ in $n$ variables with degree $d$ admitting sum of squares representation form cones in the vector space of all polynomials in $n$ variables with degree $d$. In view of the preceding discussions it is clear that ${{\mathbf{\Sigma }}_{n,d}}\subseteq {{\mathbf{\Pi }}_{n,d}}$ in general, and ${{\mathbf{\Sigma }}_{n,d}}\subset {{\mathbf{\Pi }}_{n,d}}$ for most values of $n$ and $d$. A summary of such results for different values of $n$  and $d$ is displayed in the following Table, in which ``yes" indicates that a real polynomial can be expressed as a sum of squares of real polynomials, and ``no" indicates otherwise. Note the entry $n=2$, $d=4$ shows that all bivariate quartic polynomials can be expressed as sum  of (three) squares of polynomials, as was originally proved by Hilbert \cite{S_fact:hilb} as a part of his broader study of the problem. 
\begin{table}
\centering
  \begin{tabular}{| | c | c | c |  c | c | c || }
\hline\hline
\mbox{ } & $d=2$ & $d=4$ & $d=6$ & $d=8$ & $\cdots$ \\\hline
$n=1$    &   Yes    &   Yes    &   Yes    &   Yes    & $\cdots$ \\\hline
$n=2$    &   Yes    &   {\bf Yes}    &   No      &   No     & $\cdots$ \\\hline
$n=3$    &   Yes    &   No     &   No      &   No     & $\cdots$ \\\hline
$n=4$    &   Yes    &   No     &   No      &   No     & $\cdots$ \\\hline
\vdots & \vdots & \vdots & \vdots  & \vdots & $\ddots$ \\\hline
\hline 
\end{tabular}
\caption{Feasibility of SOS representation of polynomials in $n$ variables with total degree $d$. Note $d=4$, $n=2$  correspond to ternary quartic forms \cite{S_fact:hilb}.}
\label{table_SOS}
\end{table}
The Motzkin polynomial $M(x,y)$ clearly belongs to the  ``gap"  ${{\mathbf{\Sigma }}_{2,6}}\backslash {{\mathbf{\Pi }}_{2,6}}$ between  ${{\mathbf{\Sigma }}_{2,6}}$ and ${{\mathbf{\Pi }}_{2,6}}$, and it is nontrivial to generate examples of this type. Characterization of the nonempty gap between the cones ${{\mathbf{\Sigma }}_{n,d}}$ and ${{\mathbf{\Pi }}_{n,d}}$ for relevant values of $n$ and $d$ as indicated in Table \ref{table_SOS} (indeed, $M(x,y)\in {{\mathbf{\Sigma }}_{2,6}}\backslash {{\mathbf{\Pi }}_{2,6}}$ serves as an element of this gap), has been the object of much study, which we shall not delve into. 

It may however be interesting to note that the probability that an arbitrarily chosen polynomial from ${{\mathbf{\Sigma }}_{n,d}}$ also belongs to ${{\mathbf{\Pi }}_{n,d}}$ becomes vanishingly small (i.e., the relative size of the set ${{\mathbf{\Sigma }}_{2,6}}\backslash{{\mathbf{\Pi }}_{2,6}}$ tends to become larger) as $n$ tends to infinity \cite{Blekherman}. On the other hand, it has also been proven that  for a fixed $n$, the set ${{\mathbf{\Sigma}}_{n,d}}$ approaches  ${{\mathbf{\Pi}}_{n,d}}$ as $d$ tends to infinity \cite{Lasserre}.

\subsubsection{Bivariate case $n=2$:}
For reasons that will be obvious in course of the discussion, the bivariate case $n=2$ is of more interest to us for applications in circuits and systems theory. To this end, we first summarize aspects of the above discussion relevant to our purpose.

\begin{fact}
{\bf[SOS representation, bivariate case]}\label{SOS4S} \mbox{ }\\ 
Any real positive bivariate polynomial $f(x,y)$can be expressed as the sum of 4 squares of real rational functions, i.e., 
\be\label{1.17}
d_{0}^{2}(x,y)f(x,y)=\sum\limits_{i=1}^{4}{d_{i}^{2}(x,y)},
\ee 	
where ${{d}_{i}}(x,y)$, $i=0\text{ to }4$ are real polynomials in $x$ and $y$. 
\end{fact}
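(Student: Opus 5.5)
This Fact is essentially the specialization to $n=2$ of the Artin--Schreier solution of Hilbert's 17th problem combined with Pfister's bound $\pi(2)\le 2^2=4$ in (\ref{1.16}), so the proof reduces to citing those two results and clearing denominators. My plan is to present it that way, and then to sketch a more self-contained route in the spirit of Fact \ref{SOS1D}, so that the role of each ingredient is visible.

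\textbf{Step 1 (reduction to one variable over a field).} I regard $f(x,y)$ as a polynomial in $y$ with coefficients in the real field $K=\mathbb{R}(x)$. Since $f(x_0,y)>0$ for every real $x_0$ and every real $y$, $f$ is positive definite as a polynomial over the real closed fields containing $K$. By Artin--Schreier, $f$ is then a sum of squares in $\mathbb{R}(x,y)$. Writing each rational summand over a common denominator $d_0(x,y)$ gives $d_0^2 f=\sum_i d_i^2$ with polynomials $d_i$. What remains is to bound the number of squares by four.

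\textbf{Step 2 (bounding the number of squares).} I would invoke Pfister's theorem: in a function field of transcendence degree $n$ over $\mathbb{R}$, every sum of squares is a sum of $2^n$ squares. This rests on the multiplicativity of $2^k$-square forms, the higher analog of the identity (\ref{1.6}). For $n=2$ it gives four squares, and clearing denominators as in Step 1 yields (\ref{1.17}).

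As an elementary heuristic, the one-variable argument of Fact \ref{SOS1D} does run through over $K$ in the following sense. Factor $f$ over $\mathbb{R}(x)$ into irreducible factors in $y$. Each quadratic factor $y^2+\beta y+\gamma$ can be written as $(y+\beta/2)^2+(\gamma-\beta^2/4)$. Its discriminant term $\gamma-\beta^2/4\in\mathbb{R}(x)$ is nonnegative on the reals, so by Fact \ref{SOS1D}(a) it is a sum of two squares in $\mathbb{R}(x)$. Each such factor is therefore a sum of three squares, and products of such factors are handled by the quaternion (four-square) identity.

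\textbf{Main obstacle.} The heuristic above breaks down in two places, which is exactly where the real content lies. First, over $\mathbb{R}(x)$ the irreducible factors in $y$ may have arbitrary degree, not just degree $\le 2$. Second, positivity of an individual factor is not guaranteed; only the product $f$ is positive. Consequently the elementary route does not control either the existence of the representation or the count of four squares. For this reason I will rely on Artin--Schreier for existence and on Pfister for the count, rather than attempt a full elementary proof. This is consistent with the remark in the paper that an elementary proof of $\pi(2)=4$ is unknown.
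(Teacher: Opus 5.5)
Your proposal is correct and is essentially the paper's own justification: the paper obtains Fact~\ref{SOS4S} by combining the Artin--Schreier solution of Hilbert's 17th problem (every positive polynomial is a sum of squares of real rational functions) with the Pfister bound $\pi(2)\le 2^2=4$ from (\ref{1.16}), and then clearing denominators. Your detour through $K=\mathbb{R}(x)$ in Step~1 and the elementary heuristic are not needed, since Artin's theorem applies directly to $f\ge 0$ on $\mathbb{R}^2$, but you correctly note that the heuristic does not by itself prove the statement.
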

In order to adapt the latter result to system theoretic context we need to carry out another step known as the {\em Cassel reduction} in the mathematics literature \cite{S_fact:lam}. For this, we treat each ${{d}_{i}}(x,y)$ as a real polynomial in $y$ with coefficients as real rational functions of $x$, i.e., a member of $\mathbb{R}(p_{1})[p_{2}]$ and divide ${{d}_{i}}(x,y)$, $i=1\text{ to }4$ by ${{d}_{0}}(x,y)$, which produces
\be\label{1.18}
{{d}_{i}}(x,y)={{q}_{i}}(x,y){{d}_{0}}(x,y)+{{r}_{i}}(x,y),\quad {{\deg }_{y}}{{r}_{i}}(x,y)<{{\deg }_{y}}{{d}_{0}}(x,y)
\ee
in which ${{q}_{i}}(x,y)$, ${{r}_{i}}(x,y)$ for $i=1\text{ to }4$ are each in  $\mathbb{R}(p_{1})[p_{2}]$. In what follows for notational convenience we drop explicit reference to the variables $(x,y)$, whenever possible,  for compactness we define ${{\mathbf{q}}_{0}}=1$, ${{\mathbf{r}}_{0}}=0$, the vectors $\mathbf{d}={{({{d}_{0}},{{d}_{1}},\cdots {{d}_{4}})}^{T}}$, $\mathbf{q}={{({{q}_{0}},{{q}_{1}},\cdots {{q}_{4}})}^{T}}$, and the associated bilinear form $B(\mathbf{d},\mathbf{q})$ as 
\be\label{1.19}
B(\mathbf{d},\mathbf{q})={{\mathbf{d}}^{T}}\text{diag }\!\![\!\!\text{ }-f,1,\cdots ,1]\mathbf{q}.
\ee
This makes it possible to write (\ref{1.17}) in the compact form $B(\mathbf{d},\mathbf{d})=0$. Furthermore, if we define the vector $\mathbf{h}={{({{h}_{0}},{{h}_{1}},\cdots {{h}_{4}})}^{T}}$ as
\be\label{1.20}
\mathbf{h}=\alpha \mathbf{d}+\beta \mathbf{q},\text{ where }\alpha =B(\mathbf{q},\mathbf{q}),\quad \beta =-2B(\mathbf{d},\mathbf{q}),
\ee
then straightforward manipulations yields $B(\mathbf{h},\mathbf{h})=0$, which in more explicit form can be  written as
\be\label{1.21}
h_{0}^{2}(x,y)f(x,y)=\sum\limits_{i=1}^{4}{h_{i}^{2}(x,y)}.
\ee
Further algebraic manipulations with the stated equations produce 
\be\label{1.22}
{{h}_{0}}(x,y){{d}_{0}}(x,y)=\sum\limits_{i=1}^{4}{r_{i}^{2}(x,y)},
\ee
from which we conclude 
\be\label{1.23}
\begin{split}
  {{\deg }_{y}}{{h}_{0}}(x,y)+{{\deg }_{y}}{{d}_{0}}(x,y) & \leq 2{{\max }_{i}}({{\deg }_{y}}{{r}_{i}}(x,y)),\text{ } \\ 
 \text{i.e., }{{\deg }_{y}}{{h}_{0}}(x,y)                                & <     {{\deg }_{y}}{{d}_{0}}(x,y). 
\end{split}
\ee
The last inequality in (\ref{1.23}) follows from the inequality in (\ref{1.18}). Note that the representation in (\ref{1.21}) is akin to that in (\ref{1.17}) , but with $\deg_y h_0 < \deg _i d_0$. Thus, by repetitive application of the above procedure the degree of the denominator in (\ref{1.17}) or (\ref{1.21}), treated as a polynomial in the variable $y$, can be reduced to zero. Noting that ${{h}_{i}}(x,y)$ for $i=1\text{ to }4$ are polynomials in $y$,  the coefficients of which are real rational functions of $x$, we have then established the following fact.

\begin{fact}{\bf[bivariate SOS representation, univariate denominator]}\label{2DSOS1Ddeno} \mbox{ }\\
Any real positive bivariate polynomial $f(x,y)$ can be expressed 
\be\label{1.24}
s_{0}^{2}(x)f(x,y)=\sum\limits_{i=1}^{4}{s_{i}^{2}(x,y)},
\ee 
where ${{s}_{0}}(x)$ is a real polynomial in $x$ only, ${{s}_{i}}(x,y)$, $i=1\text{ to }4$ are real bivariate polynomials in $x$ and $y$. Equivalently, $f(x,y)$ can also be expressed as 
\be\label{1.25}
f(x,y)=\sum\limits_{i=1}^{4}{t_{i}^{2}(x,y)},
\ee 	
where each ${{t}_{i}}(x,y)$, $i=1\text{ to }4$ is a polynomial in $y$, the coefficients of which are real rational functions of $x$, i.e., $t_i \in \mathbb{R}(p_{1})[p_{2}]$.
\end{fact}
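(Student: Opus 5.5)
The plan is to start from Fact \ref{SOS4S} and apply the Cassels reduction step (\ref{1.18})--(\ref{1.23}) repeatedly. Throughout we work over the field $K=\mathbb{R}(x)$, so that all polynomials are elements of $K[y]$, which is a Euclidean domain. By Fact \ref{SOS4S} there is a representation $d_0^2 f=\sum_{i=1}^4 d_i^2$ with $d_0\neq 0$. If $\deg_y d_0=0$, we are done after clearing denominators. Otherwise we divide each $d_i$ by $d_0$ in $K[y]$ as in (\ref{1.18}) and form $\mathbf{h}=\alpha\mathbf{d}+\beta\mathbf{q}$ as in (\ref{1.20}).

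The first key step is to verify $B(\mathbf{h},\mathbf{h})=0$. By bilinearity,
\[
B(\mathbf{h},\mathbf{h})=\alpha^2 B(\mathbf{d},\mathbf{d})+2\alpha\beta B(\mathbf{d},\mathbf{q})+\beta^2B(\mathbf{q},\mathbf{q}).
\]
The first term vanishes, and the remaining two give $\alpha\beta(2B(\mathbf{d},\mathbf{q})+\beta)=0$.

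The second key step is the identity (\ref{1.22}). Here $h_0=\alpha d_0+\beta$, since $q_0=1$. Write $r_0=0$ and $\mathbf{r}=\mathbf{d}-d_0\mathbf{q}$, where the zeroth component is $d_0-d_0\cdot 1=0$. Then
\[
B(\mathbf{r},\mathbf{r})=B(\mathbf{d},\mathbf{d})-2d_0B(\mathbf{d},\mathbf{q})+d_0^2B(\mathbf{q},\mathbf{q})=d_0(\beta+d_0\alpha)=d_0h_0.
\]
On the other hand, because $r_0=0$, $B(\mathbf{r},\mathbf{r})=\sum_{i\ge1}r_i^2$. This gives (\ref{1.22}).

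The third step is the degree count. Over $\mathbb{R}(x)$ the leading coefficients in $y$ of the $r_i^2$ are squares of nonzero elements of a real field. Hence no cancellation occurs, and $\deg_y\sum r_i^2=2\max_i\deg_y r_i<2\deg_y d_0$. This yields $\deg_y h_0<\deg_y d_0$.

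I expect the main obstacle to be the degenerate case $h_0=0$. This happens exactly when all $r_i=0$, i.e.\ when $d_0$ divides every $d_i$. In that case we already have $f=\sum q_i^2$ with $q_i\in K[y]$, and we stop. Otherwise $h_0\neq 0$, and we replace $\mathbf{d}$ by $\mathbf{h}$. The $y$-degree of the zeroth entry strictly decreases, so after finitely many iterations we reach a representation $h_0^2 f=\sum h_i^2$ with $h_0\in\mathbb{R}(x)\setminus\{0\}$ and $h_i\in K[y]$.

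Dividing by $h_0$ gives (\ref{1.25}) with $t_i=h_i/h_0\in\mathbb{R}(x)[y]$. To get (\ref{1.24}), let $s_0(x)$ be the product of the denominators (in $x$) of all coefficients of all $t_i$, and set $s_i=s_0t_i$. These are real bivariate polynomials, and $s_0^2f=\sum s_i^2$. The two formulations are therefore equivalent by clearing or introducing the univariate denominator.
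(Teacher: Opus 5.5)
Your proposal is correct and follows the paper's own argument. You start from Fact~\ref{SOS4S} and perform Cassels reduction in $\mathbb{R}(x)[y]$ with the same vector $\mathbf{h}=\alpha\mathbf{d}+\beta\mathbf{q}$, then use the identity $h_0d_0=\sum r_i^2$ to force a strict drop in $\deg_y$ of the denominator. Beyond the paper, you verify the two identities it calls ``straightforward manipulations'' and explicitly handle the degenerate case $h_0=0$, i.e.\ all $r_i=0$ and so $f=\sum q_i^2$ directly; the paper passes over this case, yet the iteration needs it to terminate, since it is forced whenever $\deg_y d_0=1$.
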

{\flushleft\bf Remark:} It should be clear from the Cassel reduction procedure outlined above that although (\ref{1.25}) is presented in the bivariate case, an $n$-variable counterpart holds true as well. This generalized version states that a real positive polynomial $f(\mathbf{x})$ in the variables $\mathbf{x}=({{x}_{1}},{{x}_{2}},\cdots {{x}_{n}})$ can be expressed as the sum of at most ${{2}^{n}}$ squares of, say, ${{t}_{i}}({{x}_{n}})$, each of which are polynomials in ${{x}_{n}}$, the coefficients of which are real rational functions of the remaining variables $({{x}_{1}},{{x}_{2}},\cdots {{x}_{n-1}})$. However, this more general version does not appear to be of much use for system theoretic considerations, and will not be discussed any further.

\subsection{Bivariate ($n=2$) scalar spectral factorization:}\label{bivariate-SP-factor}
Returning to application in 2-D spectral factorization, we now have a real irreducible bounded rational function $s({{p}_{1}},{{p}_{2}})$ as in (\ref{1.26a}), in which  $a({{p}_{1}},{{p}_{2}})$ and  $b({{p}_{1}},{{p}_{2}})$ are real bivariate polynomials in ${{p}_{1}}$ and ${{p}_{2}}$. Specifically, we have $1-s({{p}_{1}},{{p}_{2}}){{s}_{*}}({{p}_{1}},{{p}_{2}})>0$ for ${{p}_{i}}=j{{\omega }_{i}}$, $i=1,2$,  and $s({{p}_{1}},{{p}_{2}})$ holomorphic in $\operatorname{Re}{{p}_{1}}>0$, $\operatorname{Re}{{p}_{2}}>0$. The bounded property of $s({{p}_{1}},{{p}_{2}})$ requires $a({{p}_{1}},{{p}_{2}})$ to be a scattering Hurwitz polynomial \cite{stb:Fett2} (i.e., $a$ and ${{a}_{*}}$ be coprime, and $a({{p}_{1}},{{p}_{2}})\ne 0$ in $\operatorname{Re}{{p}_{1}}>0$, $\operatorname{Re}{{p}_{2}}>0$). The 2-D counterpart of equation (\ref{1.2}) can then be written as  (\ref{1.26b}) and (\ref{1.26c}) to follow.
\begin{subequations}
\begin{gather}
  s({{p}_{1}},{{p}_{2}})                                                       =\frac{b({{p}_{1}},{{p}_{2}})}{a({{p}_{1}},{{p}_{2}})}, \label{1.26a}\\ 
 1-s_* ({{p}_{1}},{{p}_{2}}){{s}}({{p}_{1}},{{p}_{2}})  = \frac{\phi ({{p}_{1}},{{p}_{2}})}{a_* ({{p}_{1}},{{p}_{2}}){{a}}({{p}_{1}},{{p}_{2}})}, \label{1.26b}\\ 
 \phi ({{p}_{1}},{{p}_{2}})                                                   = a_* ({{p}_{1}},{{p}_{2}}){{a}}({{p}_{1}},{{p}_{2}})-b_* ({{p}_{1}},{{p}_{2}}){{b}}({{p}_{1}},{{p}_{2}}) \label{1.26c}. 
\end{gather}
\end{subequations}
To proceed further we note that $\phi ({{p}_{1}},{{p}_{2}})$ is a real polynomial with $\phi ({{p}_{1}},{{p}_{2}})={{\phi }_{*}}({{p}_{1}},{{p}_{2}})$, i.e., each monomial in $\phi ({{p}_{1}},{{p}_{2}})$ is of even (total) degree (odd partial degree in one of the variables ${{p}_{2}}$ or ${{p}_{2}}$ is not excluded). Thus, $\phi (j{{\omega }_{1}},j{{\omega }_{2}})$ is a real polynomial in $\omega_1$ and $\omega_2$ with its monomials having even (total) degree. Furthermore, since $1-s({{p}_{1}},{{p}_{2}}){{s}_{*}}({{p}_{1}},{{p}_{2}})>0$ for ${{p}_{i}}=j{{\omega }_{i}}$, $i=1,2$ we have
\be\label{1.27}
\phi (j{{\omega }_{1}},j{{\omega }_{2}})=|a(j{{\omega }_{1}},j{{\omega }_{2}}){{|}^{2}}-|b(j{{\omega }_{1}},j{{\omega }_{2}}){{|}^{2}}>0.
\ee 	
Since $\phi (j{{\omega }_{1}},j{{\omega }_{2}})$is a real positive polynomial, a decomposition of the type (\ref{1.24}) exists, and thus for some univariate real polynomial ${{s}_{0}}({{\omega }_{1}})$, and bivariate real polynomials ${{s}_{i}}({{\omega }_{1}},{{\omega }_{2}})$, $i=1\text{ to }4$ one can write\footnote{%
Although because of the specific manner in which we have constructed it, $\phi$ has real coefficients,  the following arguments holds under the more general condition $\phi=\phi_*$, and $\phi(j\omega) >0$ for real all $\omega$.
} 
\be\label{1.28}
s_{0}^{2}({{\omega }_{1}})\phi (j{{\omega }_{1}},j{{\omega }_{2}})=\sum\limits_{i=1}^{4}{s_{i}^{2}({{\omega }_{1}},{{\omega }_{2}})}.
\ee
We note that ${{s}_{1}}({{\omega }_{10}})\ne 0$ for any real ${{\omega }_{10}}$, because otherwise from (\ref{1.28}) we would have ${{s}_{i}}({{\omega }_{10}},{{\omega }_{2}})=0$ for all real $\omega_2$, implying in view of Bezout’s theorem that $({{s}_{1}}-{{\omega }_{10}})$ is a factor of ${{s}_{i}}({{\omega }_{1}},{{\omega }_{2}})$. Since this consideration holds for all $i=1$ to $4$, it would then be possible to cancel the factor ${{({{s}_{1}}-{{\omega }_{10}})}^{2}}$ from both sides of (1.28). 
As in the 1-D case, we now consider the substitutions ${{\omega }_{1}}=-j{{p}_{1}}$, ${{\omega }_{2}}=-j{{p}_{2}}$  and define 
\be\label{1.29}
\begin{split}
   {{g}_{1}}({{p}_{1}},{{p}_{2}})   &={{s}_{1}}(-j{{p}_{1}},-j{{p}_{2}})+j{{s}_{2}}(-j{{p}_{1}},-j{{p}_{2}}), \\ 
 {{g}_{2}}({{p}_{1}},{{p}_{2}})     &={{s}_{3}}(-j{{p}_{1}},-j{{p}_{2}})+j{{s}_{4}}(-j{{p}_{1}},-j{{p}_{2}}). 
\end{split}
\ee
Noting that ${{s}_{1}}$, ${{s}_{2}}$ are real polynomials it trivially follows that
\be\label{1.30}
\begin{split}
  {{g}_{1*}}({{p}_{1}},{{p}_{2}})    &={{s}_{1}}(-j{{p}_{1}},-j{{p}_{2}})-j{{s}_{2}}(-j{{p}_{1}},-j{{p}_{2}}), \\ 
  {{g}_{2*}}({{p}_{1}},{{p}_{2}})    &={{s}_{3}}(-j{{p}_{1}},-j{{p}_{2}})-j{{s}_{4}}(-j{{p}_{1}},-j{{p}_{2}}), 
\end{split}
\ee
whence we obtain from (\ref{1.28})
\begin{subequations}\label{1.31}
\be\label{1.31a}
u({{p}_{1}})\phi ({{p}_{1}},{{p}_{2}})=\boldsymbol{G}_{*}(p_{1},p_{2}) \boldsymbol{G}(p_{1},p_{2}), 
\ee
where
\be\label{1.31b}
u({{p}_{1}})=s_{0}^{2}(-j{{p}_{1}}), \quad 
\boldsymbol{G}(p_{1},p_{2})={{\left[ \begin{matrix} {{g}_{1}}({{p}_{1}},{{p}_{2}}) & {{g}_{2}} \end{matrix}({{p}_{1}},{{p}_{2}}) \right]}^{T}}. 
\ee
\end{subequations}
By considering the para-conjugate of  (\ref{1.31a}) and recalling that  $\phi=\phi_*$, it follows  that  $u=u_*$. Since we have seen that $u(j\omega_{10})=s_{0}^{2}(j\omega_{10})\neq 0$ for any real $\omega_{10}$, the zeros of  $u(p_1 )$  form quadrantal symmetry in the complex ${{p}_{1}}$-plane. Therefore, the following factorization holds
\be\label{1.32}
u({{p}_{1}})={{d}_{*}}({{p}_{1}})d({{p}_{1}}).
\ee
It is easy to ensure that in such a factorization the polynomial $d({{p}_{1}})$ is a Hurwitz polynomial, i,e.  $d({{p}_{1}})\ne 0$ in $\operatorname{Re}{{p}_{1}}>0$. Consequently, it follows from (\ref{1.31}) and (\ref{1.32}) that
\be\label{1.33}
\phi ({{p}_{1}},{{p}_{2}})={\boldsymbol{H}_{*}}({{p}_{1}},{{p}_{2}})
\boldsymbol{H}({{p}_{1}},{{p}_{2}});\quad \boldsymbol{H}({{p}_{1}},{{p}_{2}})=\frac{\boldsymbol{G}({{p}_{1}},{{p}_{2}})}{d({p}_{1} )},
\ee 
in which $\boldsymbol{H}({{p}_{1}},{{p}_{2}})$  is holomorphic in $\operatorname{Re}{{p}_{1}}>0$, $\operatorname{Re}{{p}_{2}}>0$.

{\flushleft\bf Remark:}
Although ${{s}_{1}}$ and ${{s}_{2}}$ are real polynomials, there is no guarantee that ${{g}_{1}}$ and ${{g}_{2}}$ are real polynomials, and thus $\boldsymbol{H}({{p}_{1}},{{p}_{2}})$ has real coefficients. One way to ensure this would be to force ${{s}_{1}}$ even and ${{s}_{2}}$ odd polynomials respectively (i.e., in the multivariable case, this would imply that the  monomials in them have total degrees that are respectively only even or only odd). Indeed, in the 1-D case this property was achieved via the use of part (b) of Fact \ref{SOS1D}.
 \renewcommand{\be}{\begin{equation}}
 \renewcommand{\ee}{\end{equation}}
 \renewcommand{\bchoose}{\left( \begin{array}{c}}
 \renewcommand{\echoose}{\end{array} \right)}
%
%

Thus, along with the results of \cite{contruct-algorithm} we have  provided a {\em constructive} proof of the following theorem.

\begin{theorem}\label{S_fact:T*}
 Let $\phi=\phi({\bf p})$ be a polynomial in two variables ${\bf p}=(p_{1}, p_{2})$ such that $\phi=\phi_*$, $\phi({\bf p})
 \geq 0$ for all ${\bf p}=j\mbox{\boldmath $\omega$}$, $\boldsymbol\omega$ real.  Then there exists a
vector ${\bf H} \in \mathbb{C}(p_{1})[p_{2}]$ of size  $(2 \times 1)$, holomorphic in $\Re {\bf p}>0$, 
 such that $\phi = {\bf  H_{*}}{\bf H}$.   
\end{theorem}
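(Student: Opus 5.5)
The plan is to run the construction of Subsection~\ref{bivariate-SP-factor} under the weaker hypotheses of the theorem: $\phi=\phi_*$, possibly complex coefficients, and $\phi\ge 0$ rather than $>0$ on the imaginary axes. I would use Fact~\ref{2DSOS1Ddeno} for the sum-of-squares step and a univariate spectral factorization for the denominator.

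\emph{Reduction to a real nonnegative polynomial.} Set $f(\omega_1,\omega_2)=\phi(j\omega_1,j\omega_2)$. Since $\phi=\phi_*$, $f$ is real-valued for real $\omega_1,\omega_2$. A polynomial that is real on $\mathbb{R}^2$ has real coefficients, because its imaginary part vanishes identically. Hence $f$ is a real polynomial with $f\ge 0$ on $\mathbb{R}^2$. Artin's theorem, as recalled in Subsection~\ref{SOS-higher-dimension}, holds for positive semidefinite polynomials, and so does the Pfister bound of four squares. The Cassel reduction (\ref{1.18})--(\ref{1.23}) is pure algebra over the field $\mathbb{R}(\omega_1)$ and uses no strict positivity. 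Therefore Fact~\ref{2DSOS1Ddeno} gives $s_0^2(\omega_1)f=\sum_{i=1}^4 s_i^2(\omega_1,\omega_2)$, with $s_0$ a nonzero real univariate polynomial.

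\emph{Removing real zeros of $s_0$.} Suppose $s_0(\omega_{10})=0$ for some real $\omega_{10}$. Then $\sum_i s_i^2(\omega_{10},\omega_2)=0$ for all real $\omega_2$, so each $s_i(\omega_{10},\cdot)$ vanishes identically. Hence $(\omega_1-\omega_{10})$ divides every $s_i$, and we may cancel $(\omega_1-\omega_{10})^2$ from both sides. Iterating on the degree of $s_0$, we may assume $s_0$ has no real zeros. I expect this to be the one delicate point: the cancellation is what later guarantees that $u$ has no zeros on the imaginary axis. If $\omega_{10}$ is a complex zero of $s_0$, we do not cancel; it is absorbed in the next step.

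\emph{Factorization and extension.} Define $g_1,g_2$ by (\ref{1.29}) and $\boldsymbol G=[g_1\ g_2]^T$, and set $u(p_1)=s_0^2(-jp_1)$. Since the $s_i$ and $s_0$ are real, the identity $u\phi=\boldsymbol G_*\boldsymbol G$ holds on $p_i=j\omega_i$. Both sides are polynomials, so it holds identically as in (\ref{1.31a}). The identity $u=u_*$ holds because $u(j\omega_1)=s_0^2(\omega_1)$ is real. By the previous step $u$ has no zeros on $\Re p_1=0$, so its zeros occur in pairs $p_0,-\bar p_0$. We then factor $u=d_*d$ as in (\ref{1.32}), assigning to $d$ the zeros in $\Re p_1<0$; thus $d\neq 0$ on $\Re p_1\ge 0$. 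Finally put $\boldsymbol H=\boldsymbol G/d$. Its entries are polynomials in $p_2$ with coefficients in $\mathbb{C}(p_1)$, whose only poles are zeros of $d$. Hence $\boldsymbol H$ lies in $\mathbb{C}(p_1)[p_2]$, is holomorphic in $\Re{\bf p}>0$ (indeed up to the boundary), and satisfies $\boldsymbol H_*\boldsymbol H=\boldsymbol G_*\boldsymbol G/(d_*d)=\phi$.

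Constructivity follows from \cite{contruct-algorithm} for the sum-of-squares step. Cassel reduction, the cancellation and the univariate factorization are all finite algebraic procedures.
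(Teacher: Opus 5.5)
Your proposal is correct and follows essentially the same route as the paper: the bivariate SOS representation with univariate denominator (Fact~\ref{2DSOS1Ddeno}), cancellation of the real zeros of $s_0$, the substitution (\ref{1.29}), and a Hurwitz factorization $u=d_*d$ yielding ${\bf H}=\boldsymbol G/d$. Your explicit checks fill in what the paper's proof only asserts with ``all arguments carry over.'' These are that $\phi=\phi_*$ makes $\phi(j\boldsymbol\omega)$ a real polynomial, and that Pfister's bound and the Cassel reduction need only $\phi\geq 0$ rather than strict positivity.
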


{Proof:}
The discussions preceding the statement of the theorem essentially provides the proof  when $\phi$ has real coefficients. However, as already remarked, all augments carry over under the slightly more general condition $\phi=\phi_*$. \hfill{Q.E.D}  

 Our discussions of the last sections also yield the following result which will be useful for treatment of the matrix version of the spectral factorization problem. In particular, it will be needed in the developments leading up to proof of the matrix version of Theorem \ref{S_fact:T*}.
 \begin{corollary}\label{C1}
 For every scalar $\lambda \in \mathbb{C}(p_{1})[p_{2}]$ with $\lambda \not\equiv 0, \lambda_{*} = \lambda$,
 and $\lambda(j\mbox{\boldmath $\omega$})\geq 0$  wherever $\lambda (j\mbox{\boldmath $\omega$})$ is
 holomorphic, there exists a matrix $\boldsymbol\Lambda ({\bf p}) \in\mathbb{C}(p_{1})[p_{2}]$ 
 of size $(2\times 2)$, with $\det \boldsymbol\Lambda \not\equiv 0$,  $\boldsymbol\Lambda$ holomorphic in $\Re {\bf p}> 0$
such that 
\be\label{inflated-corollary}
\boldsymbol\Lambda_* \boldsymbol\Lambda = \boldsymbol\Lambda \boldsymbol\Lambda_* = \lambda\one_{2},
\ee 
where $\one_{2}$ is the identity matrix of order two.
 \end{corollary}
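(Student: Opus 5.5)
The plan is to reduce to the polynomial case of Theorem \ref{S_fact:T*}, obtain a two-component spectral factor there, and then inflate this vector into a $2\times 2$ matrix by the quaternion-like construction hidden in the two-square identity (\ref{1.6}).

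Write $\lambda=n/m$ with $n\in\mathbb{C}[p_1,p_2]$ and $m\in\mathbb{C}[p_1]$; this is possible because $\lambda\in\mathbb{C}(p_1)[p_2]$. Then $\lambda=(n m_*)/(m m_*)$, where $\psi:=n m_*$ is a bivariate polynomial and $\mu:=m m_*$ depends on $p_1$ only. Both satisfy $\psi_*=\psi$ and $\mu_*=\mu$. Moreover $\mu(j\omega_1)=|m(j\omega_1)|^2\ge 0$, so on the imaginary axes $\psi(j\boldsymbol\omega)=\lambda(j\boldsymbol\omega)\mu(j\omega_1)\ge 0$ wherever $\lambda$ is holomorphic. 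By continuity of the polynomial $\psi$, this gives $\psi\ge 0$ everywhere on $\Re{\bf p}=0$.

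Theorem \ref{S_fact:T*} then gives ${\bf H}=[h_1\ h_2]^T\in\mathbb{C}(p_1)[p_2]$, holomorphic in $\Re{\bf p}>0$, with $\psi=h_{1*}h_1+h_{2*}h_2$. Next factor $\mu=e_*e$ with $e\in\mathbb{C}[p_1]$ Hurwitz. Zeros of $\mu$ on the $j\omega_1$ axis occur with even multiplicity, so this is the standard 1-D factorization. Put $k_i=h_i/e_*$. This choice makes $k_{1*}k_1+k_{2*}k_2=\psi/(e e_*)=\lambda$. It is also holomorphic in $\Re p_1>0$, because $e_*$ has its zeros in $\Re p_1\ge 0$... this needs care. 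Actually the right choice is to divide by $e_*$ so that it is Hurwitz in the sense of being nonzero in $\Re p_1>0$. I will choose the factorization so that the denominator polynomial is nonvanishing in the open right half plane, and reorder $e$ and $e_*$ if necessary. Boundary zeros on $\Re p_1=0$ are permitted, since holomorphy is only required in the open region.

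Now define
\[
\boldsymbol\Lambda=\left[\begin{array}{cc} k_1 & k_2\\ -k_{2*} & k_{1*}\end{array}\right].
\]
The product $\boldsymbol\Lambda_*\boldsymbol\Lambda$ has diagonal entries $k_{1*}k_1+k_2k_{2*}=\lambda$ and off-diagonal entries $k_{1*}k_2-k_2k_{1*}=0$. The same computation gives $\boldsymbol\Lambda\boldsymbol\Lambda_*=\lambda\one_2$, and $\det\boldsymbol\Lambda=\lambda\not\equiv 0$.

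\textbf{Main obstacle.} This $\boldsymbol\Lambda$ contains $k_{1*}$ and $k_{2*}$, which are holomorphic in the left region, not in $\Re{\bf p}>0$. So the naive inflation does not meet the holomorphy requirement. To repair it I would seek a para-unitary $2\times 2$ matrix ${\bf U}$ with ${\bf U}_*{\bf U}=\one_2$, together with a column operation, such that the second column becomes analytic while the Gram identity is kept. Concretely, I want a second column ${\bf K}'=[k_3\ k_4]^T$, holomorphic in $\Re{\bf p}>0$, with ${\bf K}'_*{\bf K}'=\lambda$ and ${\bf K}_*{\bf K}'=0$, where ${\bf K}=[k_1\ k_2]^T$. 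Then $\boldsymbol\Lambda=[{\bf K}\ {\bf K}']$ satisfies $\boldsymbol\Lambda_*\boldsymbol\Lambda=\lambda\one_2$. Because $\boldsymbol\Lambda$ is square with $\det\boldsymbol\Lambda\not\equiv 0$, this also yields $\boldsymbol\Lambda\boldsymbol\Lambda_*=\lambda\one_2$.

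The natural candidate is ${\bf K}'=c\,[-k_{2*}\ k_{1*}]^T$ multiplied by a scalar inner-type factor $c$ that removes the left-half-plane poles while keeping $c_*c=1$. For example, take $c=\theta/\theta_*$ with $\theta\in\mathbb{C}(p_1)[p_2]$ chosen to cancel the denominators of $k_{i*}$. Showing that such a $\theta$ exists in $\mathbb{C}(p_1)[p_2]$, with the required orthogonality and holomorphy, is where I expect the real work to lie.
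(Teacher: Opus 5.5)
Your route is the paper's: clear the univariate denominator, apply Theorem \ref{S_fact:T*} to get $[h_1\ h_2]^T$, and inflate it with the two-square (quaternion) pattern. The obstacle you flag is real, and the paper's one-line justification does not address it either. In its matrix $\frac{1}{d}\left[\begin{array}{cc} h_1 & -h_{2*}\\ h_2 & h_{1*}\end{array}\right]$, the entries $h_{i*}$ have $p_1$-poles at the mirror images of those of $h_i$, so in general they lie in $\Re p_1>0$. Holomorphy of ${\bf H}$ and the Hurwitz property of $d$ therefore do not suffice. However, your proposal stops at exactly this point, so as written it does not prove the corollary. The step you expect to be ``the real work'' is in fact immediate. (Also, when defining $k_i$ you should divide by the Hurwitz factor $e$, not by $e_*$.)

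The missing observation is that every denominator in $\mathbb{C}(p_1)[p_2]$ is univariate. Write ${\bf K}=[k_1\ k_2]^T={\bf F}/q$, where ${\bf F}=[f_1\ f_2]^T$ is a polynomial vector and $q=q(p_1)$ is the least common denominator of the $p_2$-coefficients. Holomorphy of ${\bf K}$ in $\Re{\bf p}>0$ forces $q\neq 0$ for $\Re p_1>0$. Then $k_{i*}=f_{i*}/q_*$, and your all-pass factor is simply $c=q_*/q$, i.e.\ $\theta=q_*$. This $c$ lies in $\mathbb{C}(p_1)$, satisfies $c_*c=1$, and gives ${\bf K}'=c\,[-k_{2*}\ k_{1*}]^T=\frac{1}{q}[-f_{2*}\ f_{1*}]^T$, which is holomorphic in $\Re{\bf p}>0$. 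The remaining identities need no work: ${\bf K}_*{\bf K}'=c\,(-k_{1*}k_{2*}+k_{2*}k_{1*})=0$, ${\bf K}'_*{\bf K}'=c_*c\,\lambda=\lambda$, and $\det[{\bf K}\ {\bf K}']=c\lambda\not\equiv 0$.

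An equivalent and more transparent fix is to inflate the \emph{polynomial} vector ${\bf G}$ from the proof of Theorem \ref{S_fact:T*}, where $u\psi={\bf G}_*{\bf G}$ and $u=d_*d$ with $d$ Hurwitz. Paraconjugation sends polynomials to polynomials, so $\left[\begin{array}{cc} g_1 & -g_{2*}\\ g_2 & g_{1*}\end{array}\right]$ has no poles at all. Then divide the whole matrix by the single Hurwitz polynomial $de$, for which $(de)_*(de)=u\mu$. This is also the correct way to read (\ref{S_fact:78}) in the paper.
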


 {\em Proof:}  Obviously, $\lambda$ can be written as
 \begin{equation}\label{S_fact:77}
 \lambda = \frac{\phi}{w_{*}w},
 \end{equation}
 where $w = w(p_{1})$ and $\phi = \phi({\bf p})$ are polynomials.
 Furthermore, from the given properties of $\lambda$ and the relation
 $\phi = \lambda w_{*}w$, from which it follows that $\phi(j\mbox{\boldmath
 $\omega$}) = \lambda (j\mbox{\boldmath $\omega$})|w(j\omega_{1})|^{2}$, we conclude
 $\phi_{*} = \phi$ and $\phi(j\mbox{\boldmath $\omega$}) \geq 0$ for all real
 $\mbox{\boldmath $\omega$}$.  Due to Theorem \ref{S_fact:T*}, $\phi$ can be
 decomposed as $\phi = {\bf H}_{*}{\bf H}$. In particular, let ${\bf H}=[h_1, h_2]^T$, i.e., $h_{i} \in \mathbb{C}(p_{1})[p_{2}]$,
 for $i = 1,2$ be the two elements of ${\bf H}$.  We write $w_* w = d_* d$, where $d$ is a Hurwitz polynomial. Then, the matrix
 \begin{equation}\label{S_fact:78}
 {\boldsymbol\Lambda} = \frac{1}{d}
 \left[ \begin{array}{cc}
 h_{1} & -h_{2*} \\
 h_{2} &  \  h_{1*}
 \end{array} \right]
 \end{equation}
 obviously satisfies (\ref{inflated-corollary}), with elements in $\mathbb{C}(p_{1})[p_{2}]$,
because $w, d \in \mathbb{C}(p_{1})$, $\boldsymbol{H}$ is holomorphic in $\Re {\bf p} >0$, and $d\neq 0$ in $\Re p_1 >0$.\hfill{Q.E.D}

Note that for a given $\lambda$ satisfying the conditions of Corollary \ref{C1}, the matrix $\boldsymbol\Lambda$ satisfying  
(\ref{inflated-corollary}) is highly non-unique. This non-uniqueness will be further exploited in course of our development to sharpen the last mentioned result in a nontrivial manner (cf. Fact \ref{lambda-sequence} to follow).

 \subsection{Matricial $2$-D spectral factorization}\label{MatrixSPfactor}
We now turn to the matrix case.  Let $\boldsymbol\Phi$ be a polynomial matrix of size $(m \times m)$
 in two variables ${\bf p}=(p_{1},p_{2})$, which  is para-Hermitian, i.e.,
 $\boldsymbol\Phi_{*} = \boldsymbol\Phi$, and nonnegative
 definite (i.e., $\boldsymbol\Phi (j\mbox{\boldmath $\omega$}) \geq 0$ for all ${\bf p} = j\mbox{\boldmath $\omega$}$ ,
$\boldsymbol\omega$ real. 
We wish to show that a factorization of the form
 \begin{equation} \label{S_fact:72}
 \boldsymbol\Phi ({\bf p}) = {\bf H}_{*}({\bf p}){\bf H}({\bf p})
 \end{equation}
 is feasible, where ${\bf H}$ is a $2r \times  m$-matrix which is polynomial
 in $p_{2}$, but rational in $p_{1}$,  where $r = \rank \boldsymbol\Phi ({\bf p})$.  Furthermore, 
 ${\bf H}$ is required to be holomorphic for
 ${\bf p} = (p_{1},p_{2})$, with Re $p_{1} > 0$.  
 \subsubsection{Preliminaries}
We will need to use the following  result regarding matrices with entries belonging to 
$\mathbb{C}(p_1)[p_2]$.
\begin{fact}\label{smith}
Any $(m\times m)$ matrix $\boldsymbol\Phi \in \mathbb{C}(p_1)[p_2]$ of normal rank $r$, can be factored as 
\be\label{smith-factor}
\boldsymbol\Phi={\bf U}{\bf B} \boldsymbol\Lambda{\bf V},
\ee
(i)  ${\bf U}, {\bf V} \in  \mathbb{C}(p_1)[p_2]$ are {\em unimodular} matrices,\\
(ii) ${\bf B}\in \mathbb{C}(p_1)$ and $\boldsymbol\Lambda \in \mathbb{C}(p_1)[p_2]$ are both diagonal matrices with
\begin{gather}
{\bf B} = \diag [b_{1}, b_{2},\cdots,b_{m}], \label{smith-B}\\
\boldsymbol\Lambda =\diag[\lambda_1, \lambda_2, \cdots, \lambda_r, 0\cdots , 0], \label{smith-Lambda}
 \end{gather}
 such that:\\
(iii) for each $i$, $\lambda_i \in \mathbb{C}(p_1)[p_2]$ is {\em monic}, i.e.,  coefficient of the term with highest degree in $p_2$ in $\lambda_i$  is equal to one, and \\ 
(iii) for each $i$, $\lambda_i$  divides $\lambda_{i+1}$, i.e., $\lambda_{i+1}=\mu_{i+1}\lambda_i$ for some $\mu_{i+1} \in \mathbb{C}(p_1)[p_2]$.
\end{fact}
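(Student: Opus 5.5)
Fact \ref{smith} is the Smith canonical form over the ring $\mathbb{C}(p_1)[p_2]$, suitably normalized. The plan is to derive it from the standard Smith normal form theorem over a principal ideal domain. The only extra work is to separate out the ``constant'' part lying in the coefficient field $\mathbb{C}(p_1)$. We note at the outset that in this ring the units are exactly the nonzero elements of $\mathbb{C}(p_1)$. So ``unimodular'' in the sense of (i) should be read as having determinant a nonzero element of $\mathbb{C}(p_1)$, i.e., $p_2$-unimodular in the terminology of the Introduction. These are precisely the matrices invertible over $\mathbb{C}(p_1)[p_2]$.

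First, we use that $\mathbb{C}(p_1)[p_2]$ is a Euclidean domain with the degree in $p_2$ as Euclidean function, as recalled in the Introduction. We then run the usual elementary-operations algorithm on $\boldsymbol\Phi$. Choose a nonzero entry of least $p_2$-degree and move it to the $(1,1)$ position by row and column permutations. Use pseudo-division (genuine division, since the coefficients lie in a field) to reduce the other entries of the first row and column. Whenever a nonzero remainder of smaller degree appears, we move it to the pivot position and repeat. Since the degree strictly decreases, the process terminates with the $(1,1)$ entry dividing every entry of its row and column, and then we clear them. If the pivot fails to divide some entry of the remaining block, we add that row to the first row and restart; again the pivot degree drops. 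Induction on the size then produces $\boldsymbol\Phi = {\bf U}_0 \diag[\delta_1,\dots,\delta_r,0,\dots,0]{\bf V}_0$ with $\delta_i\mid\delta_{i+1}$. Here ${\bf U}_0,{\bf V}_0$ are products of elementary matrices, so their determinants are nonzero elements of $\mathbb{C}(p_1)$. The number $r$ of nonzero $\delta_i$ equals the normal rank, because multiplication by invertible matrices preserves rank over the fraction field $\mathbb{C}(p_1,p_2)$.

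Second, we normalize. Write $\delta_i = b_i\lambda_i$, where $b_i\in\mathbb{C}(p_1)$ is the leading coefficient of $\delta_i$ in $p_2$ and $\lambda_i$ is monic; this gives (iii). For $i>r$ we set $b_i=1$ and $\lambda_i=0$. Since the $b_i$ are units, $\lambda_i\mid\lambda_{i+1}$ still holds, which gives the divisibility condition. Diagonal matrices commute, so $\diag[\delta_i] = {\bf B}\boldsymbol\Lambda$ with ${\bf B}$ and $\boldsymbol\Lambda$ as in (\ref{smith-B}) and (\ref{smith-Lambda}). Taking ${\bf U}={\bf U}_0$ and ${\bf V}={\bf V}_0$ then yields (\ref{smith-factor}).

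The only delicate point is the termination and divisibility bookkeeping in the first step, where we must make sure the pivot eventually divides the whole remaining block. This is handled by the strict decrease of $\deg_{p_2}$ of the pivot, exactly as in the classical proof over $\mathbb{C}[p]$. Alternatively, one can simply cite the Smith form theorem for any PID, together with the determinantal-divisor characterization $\lambda_1\cdots\lambda_i = $ gcd of the $i\times i$ minors (up to units), which also gives uniqueness of the $\lambda_i$.
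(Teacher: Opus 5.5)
Your proposal is correct and follows essentially the same route as the paper. The paper simply invokes the Smith normal form over the Euclidean domain $\mathbb{C}(p_1)[p_2]$ (division in $p_2$ with coefficients in the field $\mathbb{C}(p_1)$) and defers the details to standard references. Your split $\delta_i=b_i\lambda_i$ into a leading coefficient and a monic part supplies the normalization that the paper leaves unstated, and it produces the ${\bf B}$ and $\boldsymbol\Lambda$ of the statement.

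One small correction: you need not weaken ``unimodular'' to $p_2$-unimodular. Your reduction uses only swaps and transvections, so $\det{\bf U}_0$ and $\det{\bf V}_0$ are $\pm1$, which already meets the paper's definition (nonzero constant determinant). Even if a determinant were only a nonzero element of $\mathbb{C}(p_1)$, it could be absorbed into the diagonal matrix ${\bf B}$.
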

Slightly other versions of  Fact \ref{smith} are available, but our presentation here caters to the current  needs. We will not undertake its proof, but simply recall that  it is known as the {\em Smith normal form} of a matrix, and is always available for matrices with entries belonging to a Principal Ideal Domain (PID) \cite[page 404]{fb:vidyasagar}, and in particular an Euclidean domain \cite[page 179]{cameron}. The terms $\lambda_i$'s are known as the {\em invariant factors}. In standard linear systems theory (cf. \cite{dar:Kail1}), it is more commonly known for the case $\mathbb{C}(p_1)=\mathbb{C}$, i.e., when all entries of the matrices are $p_1$-independent, and are polynomials in $p_2$ only. A proof for the more general version stated above can be given essentially by noting that main ingredient of the proof is rational operations with polynomial coefficients as in Euclidean (pseudo) division algorithm, which holds for elements in $\mathbb{C}(p_1)[p_2]$ equally well.

We will also need the following result as an essential tool for our treatment of the 2-D spectral factorability result that we are about to develop in this section.
\begin{fact}\label{yasuura}
Let ${\bf R} \in \mathbb{C}(p_1)[p_2]$ be a matrix of size  $(m\times  m)$ that satisfy: 
(i) $\det {\bf R} \in \mathbb{C}(p_1)$, i.e., ${\bf R}$ is $p_2$-unimodular 
(ii) ${\bf R}={\bf R}_{*}$, i.e.,  ${\bf R}$ is parahermitian, and
(iii) ${\bf R}(j\mbox{\boldmath $\omega$})\geq 0$ for all real $\mbox{\boldmath $\omega$}$, i.e.,  ${\bf R}$ is nonnegative definite for all real frequencies. Then there exist a $p_2$-unimodular matrix ${\bf W} \in  \mathbb{C}(p_1)[p_2]$, and  a matrix ${\bf D}\in  \mathbb{C}(p_1)$, both of size  $(m\times m)$,  such that 
\be\label{yasuura-factor}
{\bf R}={\bf W}_{*}{\bf D}{\bf W}.
\ee
\end{fact}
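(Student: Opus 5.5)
We prove Fact \ref{yasuura} by induction on the size $m$. Throughout we work over the Euclidean domain $\mathbb{C}(p_1)[p_2]$, where the only degree that matters is the degree in $p_2$; all elements of $\mathbb{C}(p_1)$ behave as ``constants''. The model is the classical polynomial (1-D) argument for para-Hermitian unimodular matrices: repeatedly apply elementary para-congruences ${\bf R}\mapsto {\bf E}_*{\bf R}{\bf E}$, with ${\bf E}$ $p_2$-unimodular, that lower $p_2$-degrees until the matrix is $p_2$-independent. The key is to lower degrees by such congruences only, which preserve hypotheses (i)--(iii).

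\emph{Base case.} When $m=1$, ${\bf R}=r$ is a scalar with $\det r=r\in\mathbb{C}(p_1)$. We then take ${\bf W}=1$ and ${\bf D}=r$.

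\emph{Inductive step.} First we pick a diagonal entry of minimal $p_2$-degree among all entries. The main point is to show that a diagonal entry can be made to attain the minimal degree in its row. Suppose an off-diagonal entry $r_{ik}$ has degree strictly smaller than both $r_{ii}$ and $r_{kk}$. Then we use nonnegativity (iii) on the $2\times 2$ principal minor at $p=j\boldsymbol\omega$: for large $|\omega_2|$ the minor is dominated by the leading behaviour, and this forces $2\deg r_{ik}\le \deg r_{ii}+\deg r_{kk}$ in a suitable sense. This lets us apply a congruence ${\bf E}=\mathbf{1}+c\,e_ie_k^T$, with $c$ a monomial in $p_2$ over $\mathbb{C}(p_1)$, that lowers the top-degree term of a diagonal entry. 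The para-Hermitian structure means the leading coefficients, viewed as functions of $j\omega_1$, are real up to sign conventions, so cancellation with a suitable $c$ is possible.

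Once a diagonal entry $r_{11}$ has minimal degree among its row and column, we do Euclidean division of $r_{1k}$ by $r_{11}$ ($r_{1k}=q_kr_{11}+\rho_k$). The congruence ${\bf E}=\mathbf{1}-\sum_k q_k e_1e_k^T$ then replaces the off-diagonal entries of row 1 by the remainders $\rho_k$, and parahermiticity symmetrically handles column 1. We iterate, alternating between degree-reduction of $r_{11}$ and division, until $r_{11}$ divides every entry of its row. Since $\det{\bf R}\in\mathbb{C}(p_1)$, this forces $r_{11}\in\mathbb{C}(p_1)$ whenever the row is nonzero. We then eliminate row and column 1 to obtain $\diag[r_{11},{\bf R}']$, where ${\bf R}'$ again satisfies (i)--(iii). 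If $r_{11}\equiv 0$ then, by (iii), the whole first row vanishes on $j\boldsymbol\omega$, hence identically; this contradicts $\det{\bf R}\not\equiv 0$ unless handled trivially. Induction on ${\bf R}'$ then yields ${\bf W}$ as a product of the elementary matrices, which is $p_2$-unimodular, and ${\bf D}$ block diagonal in $\mathbb{C}(p_1)$.

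The main obstacle is the degree-reduction step. We must show that the leading $p_2$-coefficients of $r_{ii}$, $r_{kk}$ and $r_{ik}$ admit a congruence that strictly lowers a degree. This requires exploiting $\det{\bf R}\in\mathbb{C}(p_1)$ together with (iii) to show that the leading-coefficient matrix (the ``highest column degree'' matrix) is singular. A nonzero vector in its kernel, with entries in $\mathbb{C}(p_1)$, scaled by powers of $p_2$, then gives the reducing unimodular transformation. Our first approach is this predictable-degree argument: we compare $\sum_i$ (row degrees) with $\deg_{p_2}\det{\bf R}=0$, using nonnegativity to make row and column degrees match, i.e. $\deg r_{ik}\le (d_i+d_k)/2$.
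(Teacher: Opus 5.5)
\textbf{The step that fails} is the degree reduction in your inductive step, and your whole Euclidean loop depends on it.

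You assume $\deg r_{ik}<\min(\deg r_{ii},\deg r_{kk})$ and claim that ${\bf E}=\one+c\,e_ie_k^T$, with $c=\gamma p_2^{\delta}$, lowers a diagonal degree. It cannot. The new $(k,k)$ entry is $r_{kk}+c_*r_{ik}+c\,r_{ki}+c_*c\,r_{ii}$. After the normalization by $(-1)^{\deg/2}$ that (iii) forces, the leading $p_2$-coefficients of $r_{kk}$ and of $c_*c\,r_{ii}$ are both nonnegative on $p_1=j\omega_1$. So when they have the same degree they add rather than cancel. The cross terms reach degree $\deg r_{kk}$ only if $\delta\ge\deg r_{kk}-\deg r_{ik}$. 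In that case $\deg(c_*c\,r_{ii})=2\delta+\deg r_{ii}>\deg r_{kk}$, using $2\deg r_{ik}<\deg r_{ii}+\deg r_{kk}$, so the degree goes up.

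Positivity therefore obstructs the cancellation you invoke rather than permitting it. In this configuration the leading $2\times 2$ block at $(i,k)$ is diagonal and nonsingular, so no reduction can come from it. This is exactly the configuration your division step produces, since the remainders have degree below $\deg r_{11}$, which you took minimal. A congruence also cannot move an off-diagonal remainder onto the diagonal the way a Smith-form swap would. Hence the loop ``alternate reduction of $r_{11}$ and division until $r_{11}$ divides its row'' has no progress mechanism and no termination measure as written. The division congruence itself, the conclusion $r_{11}\in\mathbb{C}(p_1)$ once $r_{11}$ divides its row, and the block elimination are all fine.

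\textbf{The fix} is already in your last paragraph: that idea proves the Fact on its own, with no division and no induction on $m$.
\begin{itemize}
\item Put $d_i=\tfrac12\deg_{p_2}r_{ii}$. This is an integer, because for all but finitely many real $\omega_1$ the function $r_{ii}(j\omega_1,j\omega_2)$ is a nonnegative polynomial in $\omega_2$ of full degree. Also $r_{ii}\not\equiv0$, since otherwise row $i$ vanishes, contradicting $\det{\bf R}\not\equiv0$.
\item The $2\times2$ minors give $\deg r_{ik}\le d_i+d_k$.
\item Let ${\bf R}_h\in\mathbb{C}(p_1)^{m\times m}$ have $(i,k)$ entry equal to the coefficient of $p_2^{d_i+d_k}$ in $r_{ik}$. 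The Leibniz expansion shows that the coefficient of $p_2^{2\sum_i d_i}$ in $\det{\bf R}$ is $\det{\bf R}_h$. So $\sum_i d_i>0$ forces ${\bf R}_h$ to be singular over the field $\mathbb{C}(p_1)$.
\item Take $0\neq v\in\ker{\bf R}_h$. The detail your sketch omits is this: choose $k$ in the support $S$ of $v$ with $d_k$ \emph{maximal}. Let ${\bf E}$ be the identity with column $k$ replaced by $u/v_k$, where $u_i=v_ip_2^{d_k-d_i}$ for $i\in S$ and $u_i=0$ otherwise. Maximality of $d_k$ makes ${\bf E}$ polynomial in $p_2$, and $\det{\bf E}=1$.
\item Each $({\bf R}u)_i$ has coefficient $({\bf R}_hv)_i=0$ at $p_2^{d_i+d_k}$, hence degree $<d_i+d_k$. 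Therefore $({\bf E}_*{\bf R}{\bf E})_{kk}$ has degree $<2d_k$, while the other diagonal entries are unchanged.
\item Conditions (i)--(iii) persist under this congruence, so $\sum_i d_i$ strictly decreases. When it reaches $0$, every entry lies in $\mathbb{C}(p_1)$; this is ${\bf D}$, and ${\bf W}=({\bf E}_1\cdots{\bf E}_N)^{-1}$ is $p_2$-unimodular.
\end{itemize}

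The paper gives no proof of its own and only defers to the univariate Oono--Yasuura result. What this argument actually uses is that the $p_2$-coefficients lie in the field $\mathbb{C}(p_1)$ (the property that makes $\mathbb{C}(p_1)[p_2]$ Euclidean), together with testing positivity at generic $j\omega_1$. It does not use Euclidean division itself.
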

A proof of  Fact \ref{yasuura}  is rather lengthy in detail, and is skipped for the sake of brevity. Suffice it to mention, however, that it is a straightforward generalization \cite{fettweis-spectral} of an univariate result corresponding to the situation $\mathbb{C}(p_1)=\mathbb{C}$, that reportedly first appeared in the work of Oono and Yasuura \cite{oono}. Once again, the fact that the elements of the matrix belong to $\mathbb{C}(p_1)[p_2]$, an Euclidean domain, is the key to the generalized proof. 

Furthermore, we will require the following  sharpening of Corollary \ref{C1} critical to the developments in the sequel.

\begin{fact}\label{lambda-sequence}
Let each element of the finite sequence of scalars $\lambda_i, i=1,2,\cdots, m$ from $\mathbb{C}(p_1)[p_2]$, satisfy the conditions of Corollary \ref{C1}, i.e., $\lambda_{i*} = \lambda_i \not\equiv 0$,
 and $\lambda_i (j\mbox{\boldmath $\omega$})\geq 0$  wherever $\lambda_i (j\mbox{\boldmath $\omega$})$ is
 holomorphic. Furthermore, assume that  $\lambda_i$  divides $\lambda_{i+1}$ for each $i$
 (as, for example, in Fact \ref{smith}), i.e., in particular,
\be\label{define-mu}
\lambda_1=\mu_1, \text{ and } \lambda_{i+1}=\mu_{i+1 \lambda_i }, \quad \mu_{i+1} \in \mathbb{C}(p_1)[p_2]
\ee
for each $i=1,2\cdots, m-1$. Then for each $i=1,2,\cdots, m$ we may have 
\be\label{lambda-unitarities}
\boldsymbol\Lambda_{i*} \boldsymbol\Lambda_i = \boldsymbol\Lambda_i \boldsymbol\Lambda_{i*} = \lambda_i\one_{2},
\ee
in which the $\boldsymbol\Lambda_i \in \mathbb{C}(p_1)[p_2]$ are $(2\times2)$ matrices, holomorphic in $\Re {\bf p}>0$, such that %
\be\label{matrix-divisibility}
\boldsymbol\Lambda_{r}{\boldsymbol\Lambda}^{-1}_{s} \in \mathbb{C}(p_1)[p_2], \text{  for } r\geq s.
\ee
\end{fact}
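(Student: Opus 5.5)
The plan is to build the $\boldsymbol\Lambda_i$ multiplicatively out of factors attached to the quotients $\mu_i$, rather than applying Corollary \ref{C1} to each $\lambda_i$ separately.

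\textbf{Step 1: the quotients $\mu_i$ satisfy the hypotheses of Corollary \ref{C1}.} First, $\mu_{1}=\lambda_1$. For $i\ge 1$, $\mu_{i+1}=\lambda_{i+1}/\lambda_i$ lies in $\mathbb{C}(p_1)[p_2]$ by the divisibility assumption. Para-conjugation gives $\mu_{i+1*}=\lambda_{i+1*}/\lambda_{i*}=\mu_{i+1}$. Nonnegativity on $j\boldsymbol\omega$, wherever $\mu_{i+1}$ is holomorphic, follows from writing it as a ratio of two functions that are $\ge0$ there. At points where $\lambda_i(j\boldsymbol\omega)=0$ or a pole intervenes, I will use continuity. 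The real frequency set where these degeneracies occur is a proper algebraic subset of $\mathbb{R}^2$, since $\lambda_i\not\equiv0$. At its remaining points the ratio of nonnegative quantities is nonnegative, and by density and continuity $\mu_{i+1}(j\boldsymbol\omega)\ge0$ holds wherever it is holomorphic.

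\textbf{Step 2: factor each quotient.} Corollary \ref{C1} then gives $2\times2$ matrices ${\bf M}_k\in\mathbb{C}(p_1)[p_2]$ for $k=1,\dots,m$, each holomorphic in $\Re{\bf p}>0$, with ${\bf M}_{k*}{\bf M}_k={\bf M}_k{\bf M}_{k*}=\mu_k\one_2$.

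\textbf{Step 3: assemble $\boldsymbol\Lambda_i$.} Define
\[
\boldsymbol\Lambda_i={\bf M}_i{\bf M}_{i-1}\cdots{\bf M}_1 .
\]
This is a product of elements of $\mathbb{C}(p_1)[p_2]$ that are holomorphic in $\Re{\bf p}>0$, so it inherits both properties. The unitarity relation follows by telescoping. Each $\mu_k$ is a scalar, so it commutes with everything, and
\[
\boldsymbol\Lambda_{i*}\boldsymbol\Lambda_i={\bf M}_{1*}\cdots{\bf M}_{i*}{\bf M}_i\cdots{\bf M}_1=\mu_1\cdots\mu_i\one_2=\lambda_i\one_2 .
\]
For $\boldsymbol\Lambda_i\boldsymbol\Lambda_{i*}={\bf M}_i\cdots{\bf M}_1{\bf M}_{1*}\cdots{\bf M}_{i*}$ one collapses from the inside out using ${\bf M}_k{\bf M}_{k*}=\mu_k\one_2$, again pulling the scalars outside. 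Finally, for $r\ge s$,
\[
\boldsymbol\Lambda_r\boldsymbol\Lambda_s^{-1}={\bf M}_r\cdots{\bf M}_{s+1},
\]
which is a product of matrices in $\mathbb{C}(p_1)[p_2]$ and is the identity when $r=s$. This gives (\ref{matrix-divisibility}).

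\textbf{Main obstacle.} The only delicate point is Step 1: verifying that $\mu_{i+1}(j\boldsymbol\omega)\ge 0$ wherever it is holomorphic, given that $\lambda_i$ may vanish or have poles on the imaginary axes, since its coefficients are rational in $p_1$. The continuity/density argument above should settle it, because $\mu_{i+1}$ is continuous wherever it is holomorphic. Once Step 1 is in place, the remaining steps are purely algebraic.
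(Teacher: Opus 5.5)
Your proposal is correct and follows essentially the same route as the paper: apply Corollary~\ref{C1} to each quotient $\mu_k$ rather than to each $\lambda_k$, set $\boldsymbol\Lambda_i=\boldsymbol\mu_i\cdots\boldsymbol\mu_1$, and read off both the unitarity relations and $\boldsymbol\Lambda_r\boldsymbol\Lambda_s^{-1}=\boldsymbol\mu_r\cdots\boldsymbol\mu_{s+1}$. Your Step~1 argument (the nondegenerate real frequencies are dense, and $\mu_{i+1}$ is continuous wherever it is holomorphic) usefully fills in the nonnegativity of $\mu_{i+1}$ on the imaginary axes, which the paper only asserts as following from (\ref{define-mu}).
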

{\em Proof:}
It follows from equation (\ref{define-mu}) that each $\mu_i$ also satisfies the conditions of Corollary \ref{C1}, i.e., $\mu_{i*} = \mu_i \not\equiv 0$,
 and $\mu_i (j\mbox{\boldmath $\omega$})\geq 0$  wherever $\mu_i (j\mbox{\boldmath $\omega$})$ is
 holomorphic. Thus one can write by using Corollary \ref{C1}
\be\label{mu-unitarities}
\boldsymbol\mu_{i*} \boldsymbol\mu_i = \boldsymbol\mu_i \boldsymbol\mu_{i*} = \mu_i\one_{2},
\ee
 in which $\boldsymbol\mu_i$s are $(2 \times 2)$ matrices with elements from $\mathbb{C}(p_1)[p_2]$, and are holomorphic in $\Re {\bf p} >0$. If we now define our $\boldsymbol\Lambda_i$s as  
\be\label{prod-mu}
\boldsymbol\Lambda_i =\boldsymbol\mu_i \cdots \boldsymbol\mu_2\boldsymbol\mu_1 
\ee
then (\ref{lambda-unitarities}) follows from (\ref{define-mu}) and (\ref{mu-unitarities}), whereas it is immediate from (\ref{prod-mu}) that  for $r > s$ we have
\be\label{prod-mu-s-to-i}
 \boldsymbol\Lambda_{r}{\boldsymbol\Lambda}^{-1}_{s}=\boldsymbol\mu_r \cdots \boldsymbol\mu_{s+2}\boldsymbol\mu_{s+1}. 
\ee
The fact that elements of $\boldsymbol\Lambda_i$ and $\boldsymbol\Lambda_{r}{\boldsymbol\Lambda}^{-1}_{s}$ belong to $\mathbb{C}(p_1)[p_2]$ follows trivially from (\ref{prod-mu}) and (\ref{prod-mu-s-to-i}) respectively in view of the fact that $\boldsymbol\mu_i$s have the same property. The proof is then completed by observing that the last claim, i.e., (\ref{matrix-divisibility}) is trivial for $r=s$. \hfill{Q.E.D}

{\flushleft\bf Remark:}
While the existence of $\boldsymbol\Lambda_i$s satisfying (\ref{lambda-unitarities}) trivially follows from Corollary \ref{C1}, the additional requirement (\ref{matrix-divisibility}) is not automatically satisfied. The divisibility condition $\lambda_i | \lambda _{i+1}$, in a sense, induces a `divisibility' of the matrix $\boldsymbol\Lambda_r$ by $\boldsymbol\Lambda_s$, $r>s$, by making 
$\boldsymbol\Lambda_r {\boldsymbol\Lambda}^{-1}_s$ a member of the same ring $\mathbb{C}(p_1)[p_2]$. However, this requires a special construction  exploiting the fact that the $\boldsymbol\mu_i$s have the same property as the $\boldsymbol\Lambda_i$s, and by invoking Corollary \ref{C1} on them. 
\subsubsection{Nonsingular case}
 We first assume that $\boldsymbol\Phi$ to be of full normal rank, i.e.,   $r = \rank\boldsymbol\Phi = m$.
 By invoking Fact \ref{smith}, $\boldsymbol\Phi$ can be transformed to its Smith normal form as in (\ref{smith-factor}). 
 
  Next, we need to recall the fact that (cf. proof of Lemma 4  in \cite{S_fact:youl}) that the invariant polynomials 
 associated with  a univariate parahermitian positive definite  polynomial matrix are (self) paraconjugate, and sign definite on the imaginary axis.

 Now if we freeze $p_{1}$ at $j\omega_{10}$ in $\boldsymbol\Phi$ to
 obtain $\tilde{\boldsymbol\Phi}=\tilde{\boldsymbol\Phi}(p_2)$, then $\tilde{\boldsymbol\Phi}$ 
is certainly a parahermitian positive definite polynomial matrix in
 $p_2$, of which the invariant polynomials $\tilde{\lambda} _{i} = \tilde{\lambda} _{i} (p_2)$ are given by the 
 univariate polynomials obtained by freezing $p_{1}$ at $j\omega_{10}$ in
 $\lambda_i$. Consequently, due to the univariate result cited above \cite{S_fact:youl},
 $\tilde\lambda_i$ is (self) paraconjugate. Since this is, in fact, true for almost
 all  imaginary values of $p_1$ (except possibly those for which
 $\boldsymbol\Phi$ is not well defined), we conclude that $\lambda_i$'s themselves are (self) paraconjugate.
 In a similar manner,  it also follows from the univariate result on sign definiteness cited above from \cite{S_fact:youl}  that $\lambda_i$'s are sign definite\footnote{%
We believe that the arguments used to reach the conclusion of this paragraph can be improved, and can be made independent of results in \cite{S_fact:youl}. However, this has not been fully worked out yet.
}

 Thus, the conditions for Fact  \ref{lambda-sequence} are satisfied by the $\lambda_i$s in the Smith form (\ref{smith}), and thus the decompositions indicated by (\ref{lambda-unitarities}) holds.

Our strategy to factor $\boldsymbol\Phi$ will be to consider the factorization of $\diag [\boldsymbol\Phi, \boldsymbol\Phi ]$ first.
For this, we consider the right Kronecker product\footnote{%
For matrices of appropriate sizes  we have  
${\bf A}{\bf B}\otimes {\bf C}{\bf D}=({\bf A}\otimes{\bf C})({\bf B}\otimes{\bf D})$.} 
of (\ref{smith-factor}) with the $(2 \times 2)$ identity matrix $\one_2$, which yields:
\be\label{Phi-otimes-I}
 \boldsymbol\Phi\otimes \one_2 =
({\bf V}\otimes \one_2)
({\bf B}{\boldsymbol\Lambda}\otimes \one_2)
({\bf U}\otimes \one_2).
\ee
Furthermore, by using (\ref{lambda-unitarities}) and the Kronecker product notation one can be compactly write
\be\label{BA-otimes-I}
({\bf B}{\boldsymbol\Lambda}\otimes \one_2)=
\boldsymbol\Gamma_*({\bf B}\otimes \one_2)\boldsymbol\Gamma,
\ee
where
\be\label{Gamma-defined}
\boldsymbol\Gamma=\diag [ \boldsymbol\Lambda_1, \cdots \boldsymbol\Lambda_m ].
\ee
Next, by inserting (\ref{BA-otimes-I}) in (\ref{Phi-otimes-I}), and  using the identities
${\bf U}\otimes \one_2 = ({\bf U}{\bf V}_{*}^{-1}\otimes \one_2)({\bf V}_{*}\otimes \one_2)$
and
$\boldsymbol\Gamma_* (\boldsymbol\Lambda^{-1}\otimes\one_2)\boldsymbol\Gamma=\one_{2m}$
(cf. (\ref{smith-Lambda}), (\ref{lambda-unitarities}), (\ref{Gamma-defined}))
one can write 
\be\label{R-center}
 \boldsymbol\Phi\otimes \one_2 = 
({\bf V}\otimes \one_2)\boldsymbol\Gamma_*{\bf R} \boldsymbol\Gamma ({\bf V}_* \otimes \one_2),
\ee
where
\be\label{R-defined}
{\bf R}= 
 ({\bf B}\otimes \one_2)\boldsymbol\Gamma
 ({\bf U}{\bf V}_{*}^{-1}\otimes \one_2)\boldsymbol\Gamma_*
 ({\boldsymbol\Lambda}^{-1}\otimes \one_2).
\ee
 We now claim the following property of ${\bf R}$ so obtained.

 \begin{proposition}\label{S_fact:P4}
 The rational matrix ${\bf R}$ satisfies ${\bf R}_{*} ={\bf R}$, is an element of
 $\mathbb{C}(p_{1})[p_{2}]$, and is $p_2$-unimodular, i.e., $\det {\bf R}\not\equiv 0$ is a rational function of $p_{1}$ only.
 \end{proposition}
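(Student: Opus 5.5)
The plan is to read all three claims off the explicit form (\ref{R-defined}) of ${\bf R}$, using (\ref{R-center}) to transfer properties of $\boldsymbol\Phi$. Throughout I use one elementary observation: paraconjugation maps $\mathbb{C}(p_1)[p_2]$ into itself. Indeed, it sends $p_2\mapsto -\bar p_2$ and acts on the coefficients, which lie in $\mathbb{C}(p_1)$, by $p_1\mapsto-\bar p_1$ together with conjugation of the numerical coefficients; a polynomial in $p_2$ therefore stays a polynomial in $p_2$. Consequently, if ${\bf U},{\bf V}$ are unimodular in $\mathbb{C}(p_1)[p_2]$, then so are ${\bf V}_*$ and ${\bf V}_*^{-1}$. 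Hence ${\bf W}:={\bf U}{\bf V}_*^{-1}$ is unimodular with entries $w_{ik}\in\mathbb{C}(p_1)[p_2]$.

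\emph{Parahermitian property and determinant.} All factors in (\ref{R-center}) other than ${\bf R}$ are invertible. Since $\det\boldsymbol\Lambda_i\not\equiv0$ by (\ref{lambda-unitarities}), we can solve (\ref{R-center}) for ${\bf R}$:
\[
{\bf R}=\boldsymbol\Gamma_*^{-1}({\bf V}\otimes\one_2)^{-1}(\boldsymbol\Phi\otimes\one_2)({\bf V}_*\otimes\one_2)^{-1}\boldsymbol\Gamma^{-1}.
\]
Paraconjugating this expression and using $\boldsymbol\Phi_*=\boldsymbol\Phi$ reproduces the same expression, so ${\bf R}_*={\bf R}$.

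For the determinant, take determinants in (\ref{R-defined}). This gives
\[
\det{\bf R}=(\det{\bf B})^2(\det{\bf W})^2\,\det\boldsymbol\Gamma\,\det\boldsymbol\Gamma_*\,\prod_i\lambda_i^{-2}.
\]
By (\ref{lambda-unitarities}), $\det\boldsymbol\Lambda_i\,\det\boldsymbol\Lambda_{i*}=\lambda_i^2$, so the $\lambda$'s cancel completely. What remains is $(\det{\bf B})^2$ times a nonzero constant. Since ${\bf B}\in\mathbb{C}(p_1)$ is nonsingular in the full-rank case, $\det{\bf R}$ is a nonzero element of $\mathbb{C}(p_1)$.

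\emph{Membership in $\mathbb{C}(p_1)[p_2]$.} This is the main step. Partition ${\bf R}$ into $2\times2$ blocks ${\bf R}_{ik}$. From (\ref{R-defined}), using (\ref{lambda-unitarities}) in the form $\boldsymbol\Lambda_{k*}\lambda_k^{-1}=\boldsymbol\Lambda_k^{-1}$, we get
\[
{\bf R}_{ik}=b_i\,w_{ik}\,\boldsymbol\Lambda_i\boldsymbol\Lambda_k^{-1}.
\]
For $i\ge k$ this block lies in $\mathbb{C}(p_1)[p_2]$ by (\ref{matrix-divisibility}) of Fact \ref{lambda-sequence}, since $b_i\in\mathbb{C}(p_1)$ and $w_{ik}$ is in the ring.

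For $i<k$ the factor $\boldsymbol\Lambda_i\boldsymbol\Lambda_k^{-1}$ is not obviously polynomial in $p_2$. Here I invoke the parahermitian property already established, ${\bf R}_{ik}=({\bf R}_{ki})_*$. The block ${\bf R}_{ki}$ has $k>i$ and so lies in the ring, and paraconjugation preserves the ring. Hence every block is in $\mathbb{C}(p_1)[p_2]$.

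The obstacle I anticipate is exactly this upper-triangular half. Divisibility alone handles only $i\ge k$, so the order of the argument matters: the identity ${\bf R}_*={\bf R}$ must be proved first, from (\ref{R-center}), so that it can be used here. A second point needs checking carefully: the Kronecker bookkeeping in (\ref{Phi-otimes-I})--(\ref{R-defined}) swaps the roles of ${\bf U}$ and ${\bf V}$ relative to (\ref{smith-factor}). I would verify that with either convention the middle factor is a product of unimodular matrices and their paraconjugate inverses, so that the block formula above is unaffected.
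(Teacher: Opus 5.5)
Your proposal is correct and follows the paper's own proof. It reads ${\bf R}_*={\bf R}$ off (\ref{R-center}), uses the block formula with (\ref{matrix-divisibility}) for blocks on and below the diagonal and parahermitian symmetry for those above, and gets $\det{\bf R}\in\mathbb{C}(p_1)$ from (\ref{R-defined}). Your version is also more careful on the details: you show explicitly that $\det\boldsymbol\Gamma\det\boldsymbol\Gamma_*$ cancels against $\prod_i\lambda_i^{-2}$, which (\ref{R-unimodular}) silently omits; your row index $b_i$ in the block formula is the correct one (the paper writes $b_s$); and the ${\bf U}$/${\bf V}$ swap you flag is indeed only a relabeling.
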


 {\em Proof}:
 The  property ${\bf R} = {\bf R}_{*}$ easily follows from the first equation in ({\ref{R-center}), 
 the fact that $\boldsymbol\Phi_{*}=\boldsymbol\Phi$, and that  ${\bf V}$ and ${\boldsymbol\Gamma}$ are invertible matrices.
 
We next embark on demonstrating that ${\bf R} \in \mathbb{C}(p_{1})[p_{2}]$. For this, we consider a partition of ${\bf R}$ into blocks of size $(2\times 2)$, and define the $(m\times m)$ matrix ${\bf Q}={\bf U}{\bf V}_{*}^{-1}$, the $(r,s)$-the entry of which is depicted as $q_{rs}$. From (\ref{R-defined}) straightforward matrix multiplication then yields that the $(r,s)$-th block of size $(2\times 2)$ in the partitioned matrix ${\bf R}$ is given by
\be\label{block-element-R}
[{\bf R}]_{rs}= 
b_s q_{rs}\lambda^{-1}_{s} \boldsymbol\Lambda_r \boldsymbol\Lambda_{s*}=
b_s q_{rs} \boldsymbol\Lambda_r \boldsymbol\Lambda_{s}^{-1}.
\ee
 The last equality follows from $\boldsymbol\Lambda_s \boldsymbol\Lambda_{s*}=\lambda_s \one_2$ (cf. (\ref{lambda-unitarities})).
Since ${\bf V}$ is unimodular, it also follows that the elements of  ${\bf Q}={\bf U}{\bf V}_{*}^{-1}$, thus, e.g., $q_{rs}$ belong to 
$\mathbb{C}(p_{1})[p_{2}]$. Since $b_s$ is a member of $\mathbb{C}(p_{1})$, by invoking the property (\ref{matrix-divisibility}), it then follows from (\ref{block-element-R}) that $[{\bf R}]_{rs}$ belongs to  $\mathbb{C}(p_{1})[p_{2}]$ for $r\geq s$. However, since we have ${\bf R}={\bf R}_*$, by symmetry we also have the same property for $r\leq s$, thus completing the proof that every element of ${\bf R}$ is in $\mathbb{C}(p_{1})[p_{2}]$.

Also, by considering the determinant  of (\ref{R-defined}) we obtain
 \be\label{R-unimodular}
\det {\bf R} \det ({\bf V}\otimes \one_2) = \det ({\bf B}\otimes\one_2) \det ({\bf U}\otimes\one_2).
\ee
Since ${\bf U}$ and ${\bf V}$ are unimodular, $\det ({\bf U}\otimes \one_2)$ and $\det ({\bf V}\otimes \one_2)$ are constants. Moreover\footnote{%
If ${\bf A}$ and ${\bf B}$  are squares matrices of size $m$ and $n$ respectively then $\det ({\bf A}\otimes{\bf B})= [\det ({\bf A}]^n [\det{\bf B})]^m$.}, 
 $\det ({\bf B}\otimes \one_2)=(\det {\bf B})^{2}=(b_{1} b_{2} \cdots b_{m})^{2}$ 
belongs to $\mathbb{C}(p_{1})$, since each $b_i$ in (\ref{smith-B}) belongs to $\mathbb{C}(p_{1})$. It thus follows from (\ref{R-unimodular})  that $\det {\bf R}$ belongs to $\mathbb{C}(p_{1})$, which completes the proof of the proposition.\hfill{Q.E.D}

We now claim that the matrix $\bf R$ constructed above  satisfy all conditions required by Fact \ref{yasuura}. We have already demonstrated that $\bf R$ is (i) $p_1$-unimodular, and (ii) para-Hermitian. The fact that it is nonnegative  definite for all real frequencies, i.e., ${\bf R} (j\boldsymbol\omega) \geq 0$ for all real $\boldsymbol\omega$, follows by appealing to (\ref{R-center}), in which $\boldsymbol\Phi(\boldsymbol\omega)\geq 0$ for all real $\boldsymbol\omega$, and both $\bf V$ and $\boldsymbol\Gamma$ are invertible  matrices. Thus, by invoking Fact \ref{yasuura} it follows that 
$\bf R$ can be factored as in (\ref{yasuura-factor}), where ${\bf W} \in \mathbb{C}(p_{1})[p_{2}]$ is an unimodular $(2 \times 2)$ matrix and ${\bf D}(p_{1}) \in \mathbb{C}(p_{1})$.  

It is clear from an argument that has already been used several times by now, 
that $\bf D$ inherits the para-Hermitian property and non-negativity on the real axis, i.e., ${\bf D}={\bf D}_*$ and ${\bf D}(j\omega_1)\geq 0$ for all real $\omega_1$ for which $\bf D$ is well defined. Thus, the standard spectral factorization result of 1-D system theory (e.g., \cite{S_fact:youl}) can now be applied to $\bf D$, which yields
 \be\label{S_fact:86}
 {\bf D} = {\bf L}_{*}{\bf L},
 \ee
 where ${\bf L} \in\mathbb{C}(p_{1})$, is of size $(2m \times 2m)$, and is holomorphic in $\Re p_1 > 0$.  

Combining equations  (\ref{yasuura-factor})  and (\ref{S_fact:86}) with (\ref{R-center}) we then have 
\be\label{Phi-Phi-factor}
 \boldsymbol\Phi\otimes \one_2 = 
({\bf V}\otimes \one_2)\boldsymbol\Gamma_*
{\bf W}_{*}
 {\bf L}_{*}{\bf L}
{\bf W}
\boldsymbol\Gamma ({\bf V}_* \otimes \one_2).
\ee
Furthermore, by using a column permutation matrix $\bf P$ such that
\be\label{permutation}
\diag[\boldsymbol\Phi, \boldsymbol\Phi] =
{\bf P}_* (\boldsymbol\Phi\otimes\one_2){\bf P},
\ee
 by pre-multiplying (\ref{permutation}) by $[\one_m {\bf 0}]$, post-multiplying by its transpose, and finally by substituting (\ref{Phi-Phi-factor}) into the resulting equation, we obtain the desired 
spectral factor ${\bf H}$ as 
\[
\boldsymbol\Phi={\bf H}_* {\bf H},
\quad
{\bf H}={\bf L}{\bf W}
\boldsymbol\Gamma ({\bf V}_* \otimes \one_2){\bf P}
\left[ \begin{array}{c} \one_{m} \\ {\bf 0} \end{array} \right].
\]
Indeed, it is then easily seen that the $2m\times m$ matrix ${\bf H}$ so obtained as a spectral factor of $\boldsymbol\Phi$ is holomorphic in $\Re {\bf p} >0$.
 \subsubsection{Singular Case}\label{singular_case}
 Thus far, we have assumed that $\boldsymbol\Phi$ to be full normal rank.  We
 now allow $\boldsymbol\Phi$ to be a {\em singular} $(m \times m)$
 polynomial matrix having normal rank $r=\rank\boldsymbol\Phi <m$.  By post-multiplying $\boldsymbol\Phi$ 
 by an appropriate $p_2$-{\em unimodular} $(m \times m)$ matrix, ${\bf F}
 \in \mathbb{C}(p_{1})[p_{2}]$, $\boldsymbol\Phi$ can be transformed to column Hermite form \cite{dar:Kail1}, i.e.,
 \be\label{S_fact:88}
 \boldsymbol\Phi {\bf F} = \left[ \begin{array}{cc}
 {\bf S} & {\bf 0} \end{array} \right],
 \end{equation}
 where ${\bf S}$ is of size $(m \times r)$ and is an element of
 $\mathbb{C}(p_{1})[p_{2}]$.  The last $m - r$ columns of $\boldsymbol\Phi {\bf
 F}$ are identically zero.  Clearly, the last $m - r$ columns of the $(m
 \times m)$ matrix ${\bf B} = {\bf F}_{*}\boldsymbol\Phi {\bf F} \in
 \mathbb{C}(p_{1})[p_{2}]$ are also identically zero, and since, ${\bf B}_{*} = {\bf
 B}$, the last $m - r$ rows of ${\bf B}$ are identically zero as well.  In other
 words, ${\bf B}$ can be written as
 \be\label{S_fact:90}
 {\bf B} = {\bf F}_{*}\boldsymbol\Phi {\bf F} = 
 \frac{1}{f_{*}f} 
\left[
 \begin{array}{cc}
 {\bf E} & {\bf 0} \\
 {\bf 0} & {\bf 0}
 \end{array} 
\right],
\ee
 where $f \in \mathbb{C}[p_{1}]$ and ${\bf E}$ is an $(r \times r)$ polynomial matrix
 having the properties
 ${\bf E} \in  \mathbb{C}[p_{1},p_{2}]$,
 ${\bf E}_{*} = {\bf E}$, ${\bf E}(j\mbox{\boldmath $\omega$}) \geq 0$ 
 for all real
 $\mbox{\boldmath $\omega$}$, $r = \rank {\bf E}$.
 Hence, due to our earlier results, the nonsingular matrix ${\bf E}$ can be
 factorized as ${\bf E} = {\bf K}_{*}{\bf K}$,
 where ${\bf K} \in \mathbb{C}(p_{1})[p_{2}]$ and is of size $(2r \times r)$.  
 Thus, in
 view of ({\ref{S_fact:90}}) we finally obtain
 $\boldsymbol\Phi = {\bf H}_{*}{\bf H}$,
 where ${\bf H}$ is the $(2r \times r)$ matrix
 \be\label{S_fact:92}
 {\bf H} = \frac{1}{f}[{\bf K} \;\;\; {\bf 0}]{\bf F}^{-1}.
 \ee
 Here ${\bf H}$ is an element of $\mathbb{C}(p_{1})[p_{2}]$, because $\bf F$ is $p_2$-unimodular
 (i.e. ${\bf F}^{-1} \in \mathbb{C}(p_{1})[p_{2}]$, and in fact  $p_2$-unimodular).  The
 holomorphicity of ${\bf H}$ in 
 $\Re p_{1} > 0$ can be achieved by appropriately collecting the zeros of $ff_*$ in $f$ and $f_*$ .
 Hence, the desired factorization is achieved.

 The following theorem summarizes the highlights of  discussions of the last two subsections.
 \begin{theorem}\label{S_fact:T5}
 Let $\boldsymbol\Phi=\boldsymbol\Phi({\bf p})$ be an $m \times m$ polynomial matrix in two
 variables ${\bf p} = (p_{1},p_{2})$ with normal rank $r$ such that (i)
 $\boldsymbol\Phi$ is para-hermitian (i.e. $\boldsymbol\Phi_{*}
 =\boldsymbol\Phi$), (ii) $\boldsymbol\Phi$ is nonnegative
 definite for all ${\bf p} = j\mbox{\boldmath $\omega$}$.
 Then there exists a $( 2r \times m)$ rational matrix ${\bf H} \in \mathbb{C}(p_{1})[p_{2}]$, which is
 holomorphic in $\Re {\bf p} > 0$ and is a spectral factor of $\boldsymbol\Phi$ in the sense
 \be\label{S_fact:93}
\boldsymbol\Phi = {\bf H}_{*}{\bf H}.
 \ee
Moreover, it is possible to have the least common denominator of $\bf H$ as a polynomial in $p_1$ only.
 \end{theorem}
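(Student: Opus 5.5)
Theorem \ref{S_fact:T5} essentially collects the constructions of the two preceding subsections, so the proof assembles them and then checks the one genuinely new clause, the statement about the least common denominator. I would split into the full normal rank case $r=m$ and the singular case $r<m$.

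\emph{Case $r=m$.} Start from the Smith form of Fact \ref{smith}, $\boldsymbol\Phi={\bf U}{\bf B}\boldsymbol\Lambda{\bf V}$.
\begin{itemize}
\item Argue, by freezing $p_1=j\omega_{10}$ and invoking the univariate result of \cite{S_fact:youl}, that the invariant factors $\lambda_i$ are self-paraconjugate and nonnegative on the imaginary axis.
\item Apply Fact \ref{lambda-sequence} to get the nested $2\times 2$ factors $\boldsymbol\Lambda_i$.
\item Form $\boldsymbol\Gamma$ and ${\bf R}$ as in (\ref{R-defined}). By Proposition \ref{S_fact:P4}, ${\bf R}$ is para-Hermitian, lies in $\mathbb{C}(p_1)[p_2]$ and is $p_2$-unimodular. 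Nonnegativity on $j\boldsymbol\omega$ comes from (\ref{R-center}).
\item Apply Fact \ref{yasuura} to write ${\bf R}={\bf W}_*{\bf D}{\bf W}$, then factor ${\bf D}={\bf L}_*{\bf L}$ by 1-D matrix spectral factorization with ${\bf L}$ holomorphic in $\Re p_1>0$.
\item Extract $\boldsymbol\Phi$ from $\boldsymbol\Phi\otimes\one_2$ by the permutation ${\bf P}$.
\end{itemize}
This gives ${\bf H}$ of size $2m\times m=2r\times m$.

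\emph{Case $r<m$.} Reduce to the nonsingular case.
\begin{itemize}
\item Column-Hermite-reduce $\boldsymbol\Phi$ by a $p_2$-unimodular ${\bf F}$, as in (\ref{S_fact:88}).
\item Observe that ${\bf F}_*\boldsymbol\Phi{\bf F}=\frac{1}{f_*f}\diag[{\bf E},{\bf 0}]$, where ${\bf E}$ is an $r\times r$ nonnegative para-Hermitian polynomial matrix of full rank.
\item Factor ${\bf E}={\bf K}_*{\bf K}$ by the first case.
\item Set ${\bf H}=\frac1f[{\bf K}\;{\bf 0}]{\bf F}^{-1}$, which is $2r\times m$. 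Choose $f$ Hurwitz by splitting the zeros of $f_*f$ quadrantally, as in (\ref{1.32}).
\end{itemize}

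\emph{Holomorphy and the denominator clause.} Every factor entering ${\bf H}$ lies in $\mathbb{C}(p_1)[p_2]$, namely ${\bf U}$, ${\bf V}$, ${\bf W}$, ${\bf F}^{-1}$, $\boldsymbol\Gamma$ and ${\bf L}$. The only exceptions are the scalar denominators $d(p_1)$ and $f(p_1)$, which are already functions of $p_1$ alone. Hence each entry of ${\bf H}$ is a polynomial in $p_2$ with coefficients in $\mathbb{C}(p_1)$. Clearing these coefficients gives a common denominator in $p_1$ only, so the least common denominator is a polynomial in $p_1$.

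The main obstacle is holomorphy in $\Re{\bf p}>0$ rather than algebraic membership. The factors ${\bf V}$, ${\bf W}$ and ${\bf F}^{-1}$ come from Euclidean-division operations over $\mathbb{C}(p_1)$ and may carry $p_1$-denominators with zeros in $\Re p_1>0$. My plan is to gather all $p_1$-dependent denominators into the single scalar least common denominator $e(p_1)$, writing ${\bf H}=\tilde{\bf H}/e$ with $\tilde{\bf H}$ polynomial. I would then replace $e$ by a Hurwitz $\hat e$ with $\hat e_*\hat e=e_*e$, by reflecting each right half-plane zero of $e$ to its mirror image. The rescaled factor $\hat{\bf H}=(\hat e/e)\,{\bf H}=\tilde{\bf H}/\hat e$ still satisfies $\hat{\bf H}_*\hat{\bf H}=\boldsymbol\Phi$, because $(\hat e/e)_*(\hat e/e)=1$, and $\hat{\bf H}$ is holomorphic in $\Re{\bf p}>0$ with denominator in $p_1$ only. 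This all-pass scalar correction in $p_1$ is the step I would check most carefully.
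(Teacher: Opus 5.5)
Your proposal is correct and follows the paper's own proof. The shared steps are: the Smith form, the nested $2\times 2$ factors from Fact~\ref{lambda-sequence}, the $p_2$-unimodular ${\bf R}$ of Proposition~\ref{S_fact:P4}, Fact~\ref{yasuura} followed by 1-D factorization of ${\bf D}$, extraction via the permutation ${\bf P}$, and column-Hermite reduction when $r<m$.

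The one place you depart is the holomorphy/denominator clause. The paper merely asserts holomorphy in the nonsingular case, and in the singular case only redistributes the zeros of $f_*f$. Your global scalar all-pass correction ${\bf H}\mapsto(\hat e/e)\,{\bf H}$ is valid and more robust. This matters because ${\bf V}_*$, ${\bf W}$ and ${\bf F}^{-1}$ can carry $p_1$-poles in $\Re p_1>0$, and so can the entries $h_{i*}$ inside the matrices $\boldsymbol\Lambda_i$ built via Corollary~\ref{C1}. Your step disposes of all of these at once, so no intermediate factor needs to be holomorphic.
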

 In view of the central importance of this result in many areas
 of $2$-D system theory, we shall also make a definitive statement of the discrete
 version of the above spectral factorability result in the form of a theorem.
 \begin{theorem}\label{discrete_spectral_factor}
  Let $\boldsymbol\Psi = \boldsymbol\Psi (\bf z)$
 be an $(m \times m)$ polynomial matrix in two variables ${\bf  z}=(z_{1},z_{2})$ with normal rank $r$ such that 
(i) $\boldsymbol\Psi$ is parahermitian in the discrete sense\footnote{%
The notation $\tilde{\mbox{}}$ denotes the discrete paraconjugate of a rational matrix obtained by considering replacement of $z_i$ by $z_{i} ^{-1}$, complex conjugate of the coefficients, and finally by matrix transposition. A rational matrix is called discrete parahermitian, if it is equal to its own discrete paraconjugate}
 i.e., $\tilde{\boldsymbol\Psi} = \boldsymbol\Psi$ and (ii) $\boldsymbol\Psi$ 
 is nonnegative definite on the distinguished boundary of the unit bi-disc $|{\bf z}|=1$.
 Then there exists a $(2r \times m)$ rational matrix ${\bf G} \in \mathbb{C}(z_{1})[z_{2}]$,
 holomorphic in the open unit bi-disc $|\bf z| < 1$, such that the spectral factorization
 \be\label{S_fact:**}
 \boldsymbol\Psi = \tilde{\bf G}{\bf G}
 \ee
 holds.  Moreover, it is possible to have the least common denominator of ${\bf G}$ as a separable polynomial, the $z_2$ factors of which are only of the type $(1+ z_2)$.
 \end{theorem}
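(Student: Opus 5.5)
The plan is to reduce the discrete statement to Theorem \ref{S_fact:T5} through the double bilinear (Cayley) transformation
\[
z_i=\frac{1-p_i}{1+p_i},\qquad p_i=\frac{1-z_i}{1+z_i},\qquad i=1,2.
\]
This map sends $\Re p_i>0$ onto $|z_i|<1$ and the imaginary axis onto the unit circle. It also converts the discrete paraconjugate into the continuous one: for $p_i=j\omega_i$ we have $\bar z_i=z_i^{-1}$, and $z_i\mapsto z_i^{-1}$ corresponds to $p_i\mapsto -p_i$. Define $\boldsymbol\Phi({\bf p})$ by substituting $z_i=(1-p_i)/(1+p_i)$ into $\boldsymbol\Psi$. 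This gives a rational matrix. Let $k_i$ be the partial degrees of $\boldsymbol\Psi$ in $z_i$, where $\boldsymbol\Psi$ is a Laurent polynomial containing powers $z_i^{-k_i},\dots,z_i^{k_i}$ by discrete parahermitian symmetry; we may assume this form, since a genuine polynomial must be multiplied by a monomial to be parahermitian. Multiply by $[(1+p_1)(1-p_1)]^{k_1}[(1+p_2)(1-p_2)]^{k_2}$. The scalar factor $(1+p_i)(1-p_i)=(1+p_i)(1+p_i)_*$ is self-paraconjugate and positive on $p_i=j\omega_i$. Hence
\[
\boldsymbol\Phi^{\sharp}({\bf p})=\prod_{i}[(1+p_i)(1-p_i)]^{k_i}\,\boldsymbol\Psi\!\left(\tfrac{1-p_1}{1+p_1},\tfrac{1-p_2}{1+p_2}\right)
\]
is a genuine polynomial matrix. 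It is para-Hermitian, nonnegative definite on $\Re{\bf p}=0$, and has the same normal rank $r$. The exceptional points $z_i=-1$ correspond to $p_i=\infty$, so nonnegativity holds on the whole imaginary torus by continuity.

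Next I will apply Theorem \ref{S_fact:T5} to $\boldsymbol\Phi^{\sharp}$. This yields a $(2r\times m)$ matrix ${\bf H}\in\mathbb{C}(p_1)[p_2]$, holomorphic in $\Re{\bf p}>0$, whose common denominator is a polynomial in $p_1$ only, which by the remark in that theorem can be taken Hurwitz. Then set
\[
{\bf G}({\bf z})=\frac{{\bf H}({\bf p})}{(1+p_1)^{k_1}(1+p_2)^{k_2}}\Big|_{p_i=(1-z_i)/(1+z_i)}.
\]
Using $(1+p_i)^{-1}=(1+z_i)/2$, it follows directly that $\tilde{\bf G}{\bf G}=\boldsymbol\Psi$, and that ${\bf G}$ is holomorphic in $|{\bf z}|<1$ because ${\bf H}$ is holomorphic in $\Re{\bf p}>0$.

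It remains to check membership in $\mathbb{C}(z_1)[z_2]$ and the denominator claim; I expect this to be the main obstacle. The $p_1$-dependence is rational and stays rational in $z_1$, so the issue is only $p_2$. Write ${\bf H}=\sum_{\ell=0}^{N}{\bf h}_\ell(p_1)p_2^\ell$. Then
\[
(1+p_2)^{-k_2}p_2^\ell=2^{-k_2}(1+z_2)^{k_2-\ell}(1-z_2)^\ell .
\]
This is polynomial in $z_2$ when $\ell\le k_2$. Otherwise it introduces a denominator $(1+z_2)^{\ell-k_2}$, which is exactly the allowed type $(1+z_2)$. So I must show that $N=\deg_{p_2}{\bf H}$ is controlled, or else accept these factors; the statement permits them. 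The $z_1$-part of the denominator comes from the $p_1$-denominator of ${\bf H}$ and from $(1+z_1)$-powers, and is a polynomial in $z_1$ alone. Therefore the least common denominator is a product of a polynomial in $z_1$ and a power of $(1+z_2)$, i.e.\ separable of the claimed form. Holomorphy requires this denominator to be nonvanishing in the open bidisc. The $z_1$-factor is the image of a Hurwitz polynomial, so its zeros lie in $|z_1|\ge 1$. The factor $(1+z_2)$ vanishes only at $z_2=-1$, on the boundary. This completes the plan.
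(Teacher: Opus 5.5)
Your route is the paper's: pass to ${\bf p}$ by the double bilinear map, invoke Theorem~\ref{S_fact:T5}, and pull the factor back. You are more careful than the paper's sketch, which applies Theorem~\ref{S_fact:T5} to the substituted matrix even though that matrix is only rational in ${\bf p}$. Your normalization by $\prod_i[(1+p_i)(1+p_i)_*]^{k_i}$ produces a genuine polynomial matrix with the same para-Hermitian symmetry, nonnegativity on $\Re{\bf p}=0$, and normal rank. The compensating division by $(1+p_1)^{k_1}(1+p_2)^{k_2}$ then correctly gives $\tilde{\bf G}{\bf G}=\boldsymbol\Psi$ and holomorphy in $|{\bf z}|<1$.

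What remains open is the step you flagged yourself. The statement asserts ${\bf G}\in\mathbb{C}(z_1)[z_2]$, and an element of that ring has no $z_2$-dependent denominator, so the \emph{Moreover} clause does not license powers of $(1+z_2)$. As written, you prove only the weaker separable-denominator claim. That is also as far as the paper goes; its remark after the theorem even allows such factors. The gap closes with a leading-coefficient count:
\begin{itemize}
\item For column $j$ of ${\bf H}$, let $N_j=\max_i\deg_{p_2}h_{ij}$, and let $c_{ij}(p_1)$ be the coefficient of $p_2^{N_j}$ in $h_{ij}$.
\item Paraconjugation sends $c(p_1)p_2^{\ell}$ to $c_*(p_1)(-p_2)^{\ell}$. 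Hence the coefficient of $p_2^{2N_j}$ in $(\boldsymbol\Phi^\sharp)_{jj}=\sum_i h_{ij*}h_{ij}$ is $(-1)^{N_j}\sum_i c_{ij*}c_{ij}$.
\item At $p_1=j\omega_1$ this equals $(-1)^{N_j}\sum_i|c_{ij}(j\omega_1)|^2\not\equiv 0$, since the top terms of a sum of squared moduli cannot cancel.
\item Every entry of $\boldsymbol\Phi^\sharp$ has $p_2$-degree at most $2k_2$, so $N_j\le k_2$.
\item Therefore only $\ell\le k_2$ occurs in your expansion, and every term $2^{-k_2}(1+z_2)^{k_2-\ell}(1-z_2)^{\ell}$ is a polynomial in $z_2$.
\end{itemize}
Hence ${\bf G}\in\mathbb{C}(z_1)[z_2]$, and its least common denominator is a polynomial in $z_1$ alone, nonzero in $|z_1|<1$ by holomorphy. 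This proves the theorem as stated, with the \emph{Moreover} clause holding vacuously. It also confirms the paper's remark that a univariate denominator should be attainable.
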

 We shall sketch a proof via the use of the double bilinear transformation:
 \be\label{S_fact:a*}
 z_{i} = \frac{1-p_{i}}{1+p_{i}}, \quad
 p_{i}=\frac{1-z_{i}}{1+z_{i}}; \quad i = 1,2
 \ee
 Define $\boldsymbol\Phi = \boldsymbol\Phi(\bf p)$ obtained
 from $\boldsymbol\Psi = \boldsymbol\Psi (\bf z)$ by the
 transformation (\ref{S_fact:a*}).  
It can then be verified that
 $\tilde{\boldsymbol\Psi} = \boldsymbol\Psi$ translates into
 $\boldsymbol\Phi_{*}\boldsymbol\Phi=\one$, and the property
 $\boldsymbol\Psi \geq 0$ on the distinguished boundary of the
 polydisc $|{\bf z}|=1$ translates into $\boldsymbol\Phi \geq 0$ for ${\bf p} =
 j\mbox{\boldmath $\omega$}$.  
 Thus, $\boldsymbol\Phi$ admits the
 spectral factorization (\ref{S_fact:93}) as in Theorem \ref{S_fact:T5}. 
 Now, if ${\bf G} = {\bf G}(\bf z)$ is obtained from ${\bf H} = {\bf H}(\bf p)$ via
 the inverse bilinear transformation  in (\ref{S_fact:a*}) then
 (\ref{S_fact:93}) translates to (\ref{S_fact:**}).
 The property of holomorphy of ${\bf H}$ in $\Re {\bf p}>0$
 is translated to the property of holomorphy of ${\bf G}$ in $|{\bf z}| < 1$.
 This completes the proof of the discrete version of Theorem \ref{discrete_spectral_factor}.

 {\flushleft\bf Remark:} Note the fact that ${\bf G}$ so obtained via ${\bf H}$ need not have a univariate
 denominator even if ${\bf H}$ has the same property.  This is
 because of the fact that the denominator of ${\bf G}$ may contain factors of
 the type $(1+z_{2})$ even if the denominator of ${\bf H}$ is independent of
 $p_{2}$.  It is likely that as in the continuous domain case, a spectral factor ${\bf G}$ with univariate denominator can
 indeed be obtained. However, its consideration would require a more direct and detailed analysis of the
 discrete case possibly exploiting extra  steps of Cassel Reduction, but such details need to be worked out.

{\flushleft\bf Remark:} The spectral factors do not show all the properties known from corresponding
 1-D techniques.  In particular, they have no right inverses in general, which
 prevents applicability to certain problems in their \mbox{2-D} systems
 theory.  On the other hand, it is best that can be achieved in view of the
 fact that there exists two-variable para-even polynomials that {\em cannot} have
 {\em scalar} spectral factors, i.e., every possible spectral factor is of
 order $\ell \times 1$, where $\ell \geq 2$.

{\flushleft\bf Remark:} Clearly, a decomposition such as the one given in Theorem \ref{S_fact:T5} is
 highly {\em non-unique} even in the scalar case.  The non-uniqueness arises from several
 sources.  First, the sum of squares representation
 for a nonegative polynomial is not unique.  Thus, for a given $\phi$ there
 exists more than one set of $s_{i}$'s, satisfying (\ref{1.28}).  Furthermore, for a fixed set of $s_{i}$'s the choice of
 $g_{1}$ and $g_{2}$ in (\ref{1.29}) is not unique either.  For example,
 the choice $g_{1} = s_{1} + js_{3},\; g_{2}=s_{2}+js_{4}$ is also valid for
 the rest of the development.  Clearly, there exists yet other possibilities.
 Further non-uniqueness is introduced in the martix case by the non-uniqueness
 of the unimodular matrices ${\bf U}$ and ${\bf V}$ in the Smith canonical
 form representation (\ref{smith-factor}) and by the non-uniqueness of the
 decomposition (\ref{S_fact:86}).

 {\flushleft\bf Remark:} If ${\bf H}$ is a valid spectral factor having the above properties, then
 $\boldsymbol\Psi{\bf H}$ is also a valid spectral factor where
 $\boldsymbol\Psi$ is a stable {\em all-pass} rational $(2r \times 2r)$
 matrix i.e., $\boldsymbol\Psi = \boldsymbol\Psi_{*}$ and
 $\boldsymbol\Psi$ is holomorphic in Re ${\bf p} > 0$.  In $1$-D, it is
 known that any two (stable) spectral factors are related via multiplication
 by such an all-pass spectral factor \cite{Dewilde_and_Vandewalle}.  The characterization of the entire
 family of spectral factors are thus obtained.  This raises the issue of
 minimality of the decomposition (\ref{S_fact:**}) or (\ref{S_fact:93}) which has an important
 role in 1-D system theory.  A subclass of spectral factors is said to be
 minimal if the matrix $\boldsymbol\Psi$ relating any two members of
 the subclass is trivial in the sense that $\boldsymbol\Psi$ has
 McMillan degree equal to zero i.e., no dynamical elements are needed in its
 realization.  While there is no proof that the ${\bf H}$ obtained by using
 the algorithm prescribed in the proof of Theorem \ref{S_fact:T5} is minimal,
 this consideration remains an open question.

  {\flushleft\bf Remark:} Note that even if $\phi$ is a polynomial with {\em real
 coefficients}, our spectral factors may not have real coefficients.  This is because $g_{1}$ and
 $g_{2}$ in (\ref{1.29}) are not a priori real polynomials.  In fact, for
 $g_{1}$ and $g_{2}$ to be real it is necessary that $s_{1}$ and $s_{2}$ be
 even and $s_{3}$ and $s_{4}$ be odd functions of $p_{1}$ and $p_{2}$ i.e.,
 $s_{i}(\omega_{1},\omega_{2}) = s_{i}(-\omega_{1},-\omega_{2})$ 
 for $i = 1,3$  and $s_{i}(\omega_{1},\omega_{2}) = -s_{i}(-\omega_{1},-\omega_{2})$ for
 $i=2,4$.  Since this is not guaranteed, nor do we know exactly how to ensure
 this, if at all possible, by using the non-uniqueness of the sum the of
 squares representation, the realness of ${\bf H}$  or ${\bf G}$ cannot be ensured.

Due to the above, in contrast to 1-D, at the present state of our knowledge internally passive synthesis of prescribed 2-D real rational positive or bounded transfer matrices, in the form of  impedance  or the scattering matrix of a network is not necessarily feasible with real valued elements, and may  require complex elements. 

 {\flushleft\bf Remark:}
The proof technique also brings into focus the non-uniqueness as
 well as the complex nature of the 2-D spectral factors. The characterization of
 the entire family of these factors, especially those with  minimal  
 values of  $r$ as well as of minimal degree still remain open problems of
 research. While the proof of spectral factorability presented here is constructive, and can be implemented in principle 
 in symbolic processors  with infinite degree of precision, other numerical algorithms for
 obtaining the spectral factors, possibly including approximations to them,  such as those exploiting  semidefinite programs \cite{trentelman}
and cepstral methods \cite{S_fact:wood} are potentially feasible. These issues will not be considered here and remain to be
 investigated.

 \section{Synthesis of 2-D lossless bounded matrices}\label{scatt:S21}
  In this section the issue of internally passive synthesis of
 bounded matrices\footnote{%
An  $(n\times n)$ rational matrix ${\bf S}({\bf p})$ is said to be bounded if it is holomorphic in $\Re {\bf p}>0$, 
and $\one_n - {\bf S}^* (j\boldsymbol\omega){\bf S} (j\boldsymbol\omega)$ is nonnegative definite for 
all real values of $\boldsymbol\omega$ for which ${\bf S}$ is well defined. If, in addition, ${\bf S}_*{\bf S}=\one_n$ we say that ${\bf S}$ is lossless bounded. 
If the coefficients of $\bf S$ are real
then we may call it a bounded real, or a lossless bounded  real matrix as the case may be. 
}
 in arbitrary topological structure is considered.  Our
 discussion begins with $2$-D lossless bounded rational marices and it is
 shown that such matrices can indeed be synthesized.
 Furthermore, if the lossless bounded matrix is real then a minimal network
 with real elements exists.  The question of synthesizability of bounded, but
 not necessarily lossless rational matrices is taken up next.  Given a
 lossless synthesis, our approach is feasible if and only if a certain unitary
 dilation or embedding problem for $2$-D rational matrices is solvable in
 section.  It is shown that the problem can indeed be solved in
 $2$-D via a spectral factorability interpretation of the sum of squares representation, 
which essentially states the positive
 polynomials in two variables can be expressed as sum of squares of
 polynomials in one variable the coefficients of which are rational functions
 of the other variable.  The synthesizability of $2$-D bounded matrices are
 thus established.  This in turn establishes the $2$-D bounded real lemma
 which provides a state space characterization of the property of
 dissipativeness.  We discuss this development in Section \ref{scatt:S23}.
 Synthesizability of three or higher dimensional lossless bounded matrices are
 considered in Section \ref{H-dim}.  It is proved that such a synthesis is not
 feasible in general.  The essential reason for this is rooted in the answer
 to Hilbert's seventeenth problem that positive polynomials cannot in general
 be represented as a sum of squares of polynomials.  However, since such
 representations are feasible for polynomials of total degree at most four, it
 turns out that $3$-D scalar lossless bounded functions are not synthesizable.
 This raises the problem of characterization of synthesizable multidimensional
 rational functions or matrices.  Clearly, such functions must be a subclass
 of a class of $k$-D (lossless) matrices.  While this problem remains largely
 open, in Section \ref{scatt:S24}, we have included an interesting category of
 multidimensional functions that can be synthesized.

 It will be shown that every $(m \times m)$ two-variable
 discrete lossless bounded matrix ${\bf S}$, can be visualized with a minimal
 number of frequency-dependent elements.  This synthesis procedure can be
 divided into ta number of steps.  First we try to find an $(m + mn_{2}) \times (m +
 mn_{2})$ one-variable discrete scattering matrix, ${\bf
 S}^{'}={\bf S}^{'}(z_{1})$, which describes a $z_{1}$-dependent one-dimensional
 $(m + mn_{2})$-port $\NN^{'}$, where $n_{2}$ will be specified later.  The
 matrix ${\bf S}^{'}$ may not be a discrete lossless bounded matrix, we only
 demand that we obtain the two-dimensional $m$-port $\NN$, which is described by
 the given discrete lossless bounded matrix ${\bf S}$, if the last $mn_2$ ports of
 $\NN^{'}$ are terminated with $z_{2}$-type elements (cf. Figure \ref{fig:lossless-synthesis}).  
 Thus, it is possible to apply a $z_{1}$ dependent similarity
 transformation to ${\bf S}^{'}$ in such a way that the internal structure of
 $\NN$ is changed without influencing the external behavior, which is
 described by ${\bf S}$.

 In the second step we use such a similarity transformation in
 order to obtain a new one-variable matrix ${\bf S}^{''}={\bf S}^{''}(z_{1})$,
 describing a coupling network, $\NN^{''}$, which can be realized with methods
 known in classical network theory. In a third step we will use an analytical trick to reduce the number of $z_2$ dependent elements in the realization in such a way that the resulting realization would be minimal. Our presentation is a streamlined version of \cite{S_fact:kumm} based on original inspirations from  the work of Youla \cite{stb:Youl1}.
\begin{figure}[h]
    \centering
    \includegraphics[width=14cm]{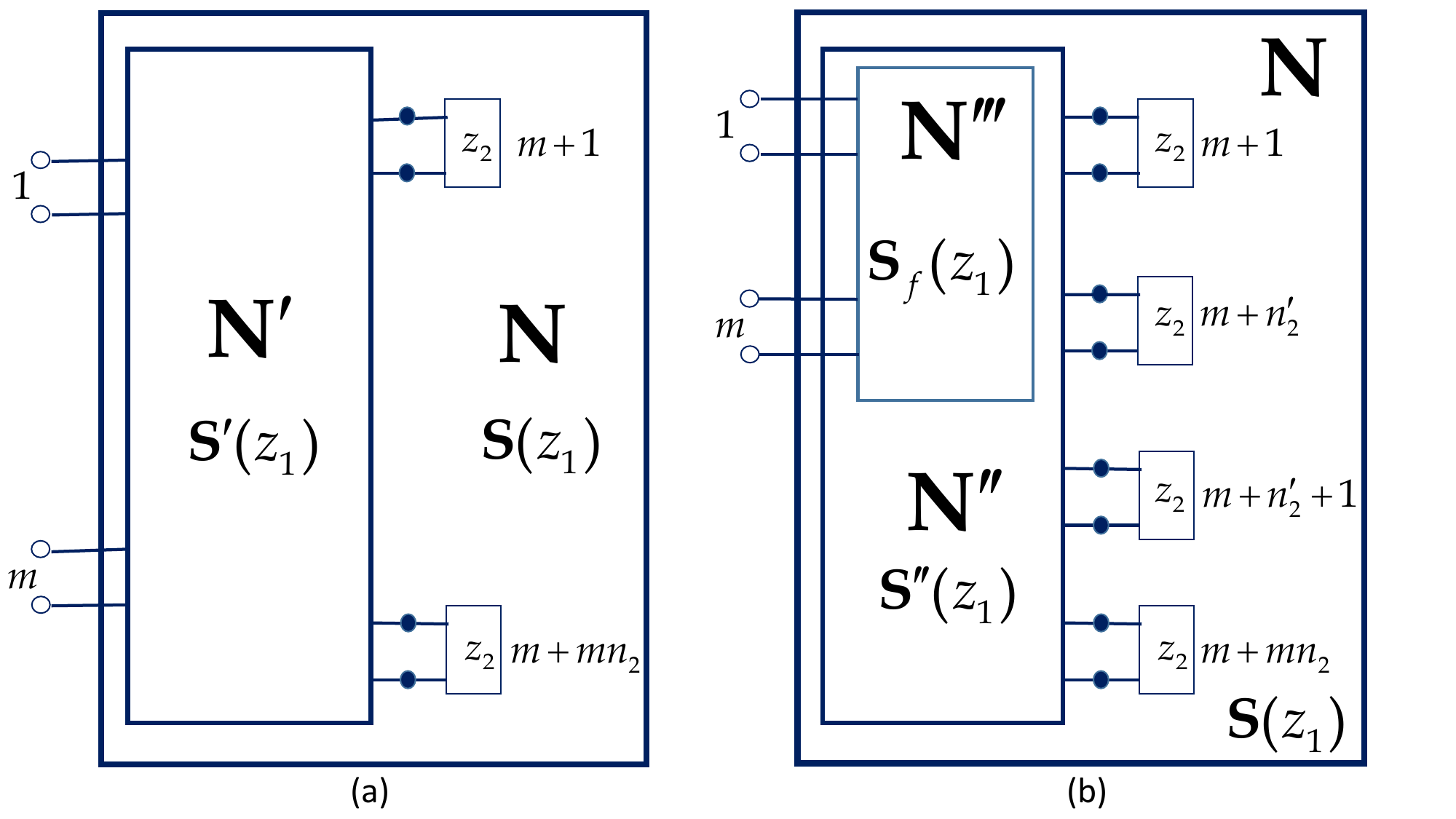}
    \caption{Steps in synthesis of a bounded lossless scattering matrix.
  $\NN^{"}$ is obtained from $\NN^{'}$ by applying the similarity transformation $\diag [\one\;\; {\bf T}(z_1) ]$. }
    \label{fig:lossless-synthesis}
\end{figure}

Before delving into the details, we establish some notation and recall some properties of the prescribed 
 discrete lossless scattering matrix $\bf S$ that will be useful for further discussions.
 First, note that\footnote{%
The `inverse conjugate polynomial' $\hat{a}$ is obtained by replacing in $a$ the variables $z_i$ by $z_{i}^{-1}$, multiplying by $n_i$, where $n_i=\deg _{i} a$, for each $i$, and complex conjugating the coefficients.
} 
 \be\label{scatt:3}
 {\det} \; {\bf S} = \sigma {\bf z}^{\bf m}\frac{\hat{a}}{a},
 \ee
 where $ {\bf m}=(m_1, m_2)$ is a nonnegative integer pair,
 $a$ is a scattering Schur polynomial \cite{rob:Basu1}, $\sigma$ is a unimodular constant, 
 i.e., $|\sigma|=1$, and ${\bf S}$ can be written as
 \be\label{scatt:4}
 {\bf S} = \frac{\PP}{a},
 \ee
 in which $\PP$ is a polynomial matrix in $z_{1}, z_{2}$.

 Furthermore, considering the determinant of (\ref{scatt:4}) we have $\det {\bf
 P}=a^{m} \det {\bf S} = a^{m-1}\hat{a}{\bf z}^{\bf m}$ where (\ref{scatt:3})
 has been used in the last step.  Thus, if ${\deg}_{i}\PP$ denotes the
 largest degree in the polynomial entries in $\PP$ then we have
 \[
m \; {\deg}_{i}\PP \geq {\deg}_{i} \; {\rm det} \; \PP = {\deg}_{i}
 \; \hat{a}+(m-1) \; {\deg}_{i} \; a + m_{i} \geq m \; {\deg}_{i} \;
 a,
\]
 The last step follows from the fact that ${\deg}_{i} \; a = {\rm
 deg}_{i} \; \hat {a}$ for any scattering Schur polynomial $a$, because 
$a$ cannot have a factor $z_1$ or $z_2$.  Thus, we have $n_{i}={\rm
 deg}_{i}\PP \geq {\deg}_{i} \; a$, $i = 1,2$.  Consequently, $\PP$ and $a$ can be written as
 \be\label{scatt:5}
 \PP=\PP_{0}+\PP_{1}z_{2} + \cdots + \PP_{n_{2}}z_{2}^{n_{2}},
 \ee
 \be\label{scatt:6}
 a = a_{0} + a_{1}z_{2} + \cdots + a_{n_{2}}z_{2}^{n_{2}},
 \ee
 where the coefficients $\PP_{i}=\PP_{i}(z_{1})$ and $a_{i}=a_{i}(z_{1}), \;
 i=0$ to $n_{2}$, are  polynomials in $z_{1}$ only with $\PP_{n_{2}}
 \not \equiv 0$.  Besides, since $a$ is scattering Schur, the polynomial 
$a_{0}=a_{0}(z_{1}) = a(z_{1},0)$ is also scattering
 Schur.

 \subsection{First step}\label{first-step}
 Given a two-variable $(m \times m)$ discrete lossless bounded matrix,
 ${\bf S}={\bf S}(z_{1},z_{2})$, our immediate task is to find a rational matrix ${\bf S}^{'}(z_{1})$ of
 order $(m \times mn_{2})$ satisfying 
 \be\label{scatt:1}
 {\bf S} = \HH_{11}+\HH_{12}(z_{2}^{-1}\one_{mn_{2}}-\HH_{22})^{-1}\HH_{21},
 \ee
 where
 \be\label{scatt:2}
 {\bf H}=
 \left[\begin{array}{cc}
 \HH_{11} & \HH_{12} \\
 \HH_{21} & \HH_{22}
 \end{array} \right].
 \ee
 In order to determine the elements of ${\bf H}$ such  that equation ({\ref{scatt:1}}) holds,  we first define the quantities
 \be\label{scatt:7}
 u_{i}= \frac{a_{i}}{a_{0}}, \;\; 
 \NN_{i} = \frac{\PP_{i}}{a_{0}}, \;\;\;\;\; i=0 \; {\rm to} \; n_{2},
 \ee
and
 \be\label{scatt:9}
 \y_{i}=\NN_{i}-\NN_{0}u_{i}, \;\;\;\;\;\;\;\; i = 1 \;{\rm to} \; n_{2}.
 \ee
Then standard algebraic manipulations yield the following result.
 \begin{fact}\label{scatt:T1}
 The rational matrices
 $\HH_{11}$, $\HH_{12}$, $\HH_{21}$, and $\HH_{22}$ defined in (\ref{scatt:10})-(\ref{scatt:13})  satisfy ({\ref{scatt:1}}).
 \be\label{scatt:10}
 \HH_{11} = \NN_{0},
 \ee
 \be\label{scatt:11}
 \HH_{21}=[\y_{n_{2}}^{T} \;\; \y_{n_{2}-1}^{T} \;\; \cdots \;\; \y_{2}^{T}
 \;\; \y_{1}^{T}]^{T},
 \ee
 \be\label{scatt:12}
 \HH_{12}=[{\bf 0} \;\; {\bf 0} \;\; \cdots \;\; {\bf 0} \;\; {\bf 1}_{m}],
 \ee
 \be\label{scatt:13}
 \HH_{22}= \left[ \begin{array}{ccccc}
 \zero & \zero & \cdots & \zero & -u_{n_{2}}\one_{m} \\
 \one_{m} & \zero & \cdots & \zero & -u_{n_{2-1}}\one_{m} \\
 \zero & \one_{m} & \cdots & \zero & -u_{n_{2-2}}\one_{m} \\
 \vdots & \vdots &      & \vdots & \vdots \\
 \zero & \zero & \cdots & \one_{m} & -u_{1}\one_{m} \end{array} \right].
 \ee
 \end{fact}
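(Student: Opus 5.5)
The identity (\ref{scatt:1}) is a statement about rational functions of $z_2$ with coefficients in $\mathbb{C}(z_1)$. I would verify it directly, treating $\HH_{22}$ as a block companion matrix and $\HH_{12}$ as the row selecting its last block. This is the observable-canonical-form realization of a matrix-valued rational function of $z_2$.

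First I rewrite the target in terms of the quantities (\ref{scatt:7})--(\ref{scatt:9}). Dividing (\ref{scatt:5}) and (\ref{scatt:6}) by $a_0$ (which is not identically zero, being scattering Schur) gives
\[
{\bf S}=\frac{\sum_{i=0}^{n_2}\NN_i z_2^i}{\sum_{i=0}^{n_2}u_i z_2^i},\qquad u_0=1 .
\]
Subtracting $\HH_{11}=\NN_0$ yields
\[
{\bf S}-\NN_0=\frac{\sum_{i=1}^{n_2}\y_i z_2^i}{u(z_2)},\qquad u(z_2)=1+\sum_{i=1}^{n_2}u_i z_2^i .
\]
So it suffices to show that $\HH_{12}(w\one_{mn_2}-\HH_{22})^{-1}\HH_{21}$ equals this right-hand side, where $w=z_2^{-1}$.

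Second, instead of inverting $w\one-\HH_{22}$ explicitly, I solve the linear system $(w\one-\HH_{22})X=\HH_{21}$ for the block vector $X=[X_1^T\;\cdots\;X_{n_2}^T]^T$. Because of (\ref{scatt:12}), only the last block $X_{n_2}$ is needed. Reading off the block rows from (\ref{scatt:13}) and (\ref{scatt:11}):
\begin{itemize}
\item the first block row is $wX_1+u_{n_2}X_{n_2}=\y_{n_2}$;
\item for $k=2,\dots,n_2$, the $k$-th block row is $wX_k-X_{k-1}+u_{n_2-k+1}X_{n_2}=\y_{n_2-k+1}$.
\end{itemize}

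Third, I multiply the $k$-th block row by $w^{k-1}$ and add all of them. The terms $w^{k}X_k-w^{k-1}X_{k-1}$ telescope, leaving
\[
w^{n_2}X_{n_2}+\sum_{i=1}^{n_2}u_i w^{n_2-i}X_{n_2}=\sum_{i=1}^{n_2}\y_i w^{n_2-i}.
\]
Multiplying by $z_2^{n_2}=w^{-n_2}$ gives $u(z_2)X_{n_2}=\sum_i \y_i z_2^i$, which is exactly the required expression. Adding $\HH_{11}=\NN_0$ then gives (\ref{scatt:1}).

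The only point needing care is that $w\one-\HH_{22}$ is invertible over the field of rational functions. The same telescoping, applied to the block-diagonal structure, shows that $\det(w\one-\HH_{22})=\big(w^{n_2}u(w^{-1})\big)^m$. This is not identically zero because its leading coefficient in $w$ is $1$. The rest is bookkeeping of block indices, which I expect to be the main, though routine, source of possible error.
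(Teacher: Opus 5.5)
Your proposal is correct and is essentially the paper's route. The paper only asserts that (\ref{scatt:1}) follows from ``standard algebraic manipulations,'' and your verification carries out exactly that computation: the reduction ${\bf S}-\NN_0=\sum_{i\ge1}\y_i z_2^i/u(z_2)$, together with solving the block companion system $(z_2^{-1}\one_{mn_2}-\HH_{22})X=\HH_{21}$ for its last block by telescoping, with invertibility justified by the determinant formula that agrees with (\ref{scatt:33}).
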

Since $a$ is scattering Schur  $a_{0} = a_{0}(z_{1}) = a(z_{1},0)$ is also a scattering Schur polynomial \cite{rob:Basu1}. 
 Thus, in view of  equations (\ref{scatt:7})-(\ref{scatt:13}), and the fact that the factor as
 well as products of scattering Schur polynomials are necessarily scattering Schur, 
 we note that the matrix ${\bf H}$ defined via (\ref{scatt:2}) and
 ({\ref{scatt:10}})-({\ref{scatt:13}}) is holomorphic in $|z_{1}|<1$,
 because the denominator polynomial of ${\bf H}$ is composed only of factors
 of a scattering Schur polynomial. 

 However, ${\bf H}$ so obtained  is not a discrete
 lossless bounded matrix.  We show next that a  $z_1$ dependent similarity transformation can
 be found, which transforms ${\bf H}$ to a discrete lossless bounded matrix.

 \subsection{Second step}
 Next, for any  square nonsingular rational matrix $\TT=\TT(z_{1})$ of size $mn_2$
 if we define the quantities $\BB_{11}$, $\BB_{12}$, $\BB_{21}$ and $\BB_{22}$ 
 by appropriately partitioning $\BB$ via
\be\label{define_B}
\BB=
\left[\begin{array}{cc}
 \BB_{11} & \BB_{12} \\
 \BB_{21} & \BB_{22}
 \end{array} \right]
=
\TT_{0} \HH\TT_{0}^{-1},\quad
\TT_{0}=
 \left[ \begin{array}{cc}
 \one_{m} & \zero \\
 \zero & \TT \end{array} \right]
\ee
then from (\ref{scatt:1}) we obviously have
 \be\label{scatt:26a}
 {\bf S}=\BB_{11}+\BB_{12}(z_{2}^{-1}\one_{mn_{2}}-\BB_{22})\BB_{21}.
 \ee
One can then interpret the matrix $\BB$ defined in  (\ref{define_B})
as the discrete scattering matrix of a one-dimensional
 system $\NN^{''}$, which is obtained by transformation of the internal variables 
 of the network $\NN^{'}$ described by $\BB$, and results in a two-dimensional network
 $\NN$ described by the given matrix ${\bf S}$, if the last $mn_{2}$ ports of
 $\NN^{''}$ are terminated with $z_{2}$-type elements. We refer to  Figure \ref{fig:lossless-synthesis} for  diagrammatic details. 

 So far $\TT$ has been left arbitrary except for that it is invertible.  In order to take advantage of this fact, 
we now wish to determine the transformation matrix $\TT$ in such a way that
 \be\label{scatt:27}
 \BB \left[\begin{array}{cc}
 \one_{m+r} & \zero \\
 \zero & \zero
 \end{array} \right] \tilde{\BB} =
 \left[\begin{array}{cc}
 \one_{m+r} & \zero \\
 \zero & \zero
 \end{array} \right],
 \ee
 where $r \leq mn_{2}$.  The advantage of such a transformation will be
 illuminated later. For this, we replace $\BB$ in ({\ref{scatt:27}}) by
 (\ref{define_B}) and compare corresponding block-elements on both sides 
of the resulting equation, and thus obtain
 \be\label{scatt:29}
 \HH_{11}\tilde{\HH}_{11} + \HH_{12}\KK\tilde{\HH}_{12} = \one_{m},
 \;\;
 \HH_{21}\tilde{\HH}_{11} + \HH_{22}\KK\tilde{\HH}_{12} = \zero,
 \ee
 \be\label{scatt:31}
 \HH_{21}\tilde{\HH}_{21} + \HH_{22}\KK\tilde{\HH}_{22} = \KK,
\;\;
 \KK = \TT^{-1} \left[ \begin{array}{cc}
 \one_{r} & \zero \\
 \zero & \zero \end{array} \right]
 \tilde{\TT}^{-1}.
 \ee
Our task is to find a rational matrix $\KK$ which is a solution
 to the system of equations ({\ref{scatt:29}})-({\ref{scatt:31}}), and in particular, 
 can be decomposed as indicated in the second equation ({\ref{scatt:31}}).  To this end, we claim that the
 matrix
\begin{multline}\label{scatt:32}
\KK = \left[ \begin{array}{cccc}
 u_{0}\one_{m} & u_{1}\one_{m} & \cdots & u_{n_{2}-1}\one_{m} \\
                          & u_{0}\one_{m} & \cdots & u_{n_{2}-2}\one_{m} \\
                          &                           &            & u_{n_{2}-3}\one_{m} \\
                         &           \bigcirc      & \ddots & \vdots \\
                         &                           &            & u_{0}\one_{m} \end{array} \right]
 \left[ \begin{array}{cccc}
 \tilde{u}_{0}\one_{m} &                       &       &    \\
 \tilde{u}_{1}\one_{m} & \tilde{u}_{0}\one_{m} & \bigcirc &    \\
 \tilde{u}_{2}\one_{m} & \tilde{u}_{1}\one_{m} &       &    \\
 \vdots                & \vdots                & \ddots &    \\
 \tilde{u}_{n_{2}-1}\one_{m} & \tilde{u}_{n_{2}-2}\one_{m} & \cdots &
 \tilde{u}_{0}\one_{m} \end{array} \right] 
\\
 - \left[ \begin{array}{cccc}
 \NN_{0} & \NN_{1} & \cdots & \NN_{n_{2}-1} \\
    & \NN_{0} & \cdots & \NN_{n_{2}-2} \\
    &    &    & \NN_{n_{2}-3} \\
    & \bigcirc & \ddots & \vdots \\
    &    &    & \NN_{0} \end{array} \right]
 \left[ \begin{array}{cccc}
 \tilde{\NN}_{0} &    &    &    \\
 \tilde{\NN}_{1} & \tilde{\NN}_{0} & \bigcirc   &  \\
 \tilde{\NN}_{2} & \tilde{\NN}_{1} &    &    \\
 \vdots & \vdots & \ddots &    \\
 \tilde{\NN}_{n_{2}-1} & \tilde{\NN}_{n_{2}-2} & \cdots & \tilde{\NN}_{0}
 \end{array} \right]
\end{multline} 
 is a solution of the system of equations ({\ref{scatt:29}}) and the first equation in ({\ref{scatt:31}}), 
where the quantities $u_{i}$ and $\NN_{i}$ are given by ({\ref{scatt:7}}).  The proof of this
 claim is routine algebraic verification and is left ot the reader.

 We next show that the matrix $\KK$, given by  ({\ref{scatt:32}}), can be decomposed as required by the second equation ({\ref{scatt:31}}).  For this, we first consider the eigenvalues of the matrix $\HH_{22}$, and note that routine algebraic manipulation by using the structure of the matrix $\HH_{22}$ yields the identity 
\be\label{scatt:33} 
{\det}(z_2 ^{-1} \one_{mn_{2}}-\HH_{22}) = 
\left(  z_2^{-n_{2}}  \frac{a}{a_{0}}  \right)^m, 
\ee
Since $a$ is a scattering Schur polynomial, $a(z_{1},z_{2}) \neq  0$ for $|z_{2}|\leq 1$ for all $z_{1}$ on $|z_{1}|=1$ with at most  a finite number of exceptions.  Thus, we arrive at the following result.
 \begin{fact}\label{scatt:P1A}
For almost all values  of $z_{1}$ on $|z_{1}|=1$ the
 eigenvalues of ${\bf H}_{22}={\bf H}_{22}(z_{1})$ are located in the open
 unit disc.
 \end{fact}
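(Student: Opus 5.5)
Fix $z_1$ on $|z_1|=1$. The plan is to read the eigenvalues of $\HH_{22}(z_1)$ off the determinant identity (\ref{scatt:33}) and then use the scattering Schur property of $a$ to place them. We first discard a finite exceptional set $E$ of such $z_1$: the zeros of $a_0(z_1)$ on the unit circle, and the finitely many points where $a(z_1,\cdot)$ vanishes somewhere in the closed disc $|z_2|\le 1$. The set $E$ is finite because $a$ is scattering Schur, as noted just before the statement. The zeros of $a_0$ on $|z_1|=1$ are finite in number because $a_0 \not\equiv 0$: it is itself scattering Schur. Off $E$, the $u_i(z_1)=a_i/a_0$ are finite, so $\HH_{22}(z_1)$ is a well-defined complex $(mn_2\times mn_2)$ matrix.

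Next, substitute $w=z_2^{-1}$ in (\ref{scatt:33}) to obtain the characteristic polynomial
\[
\det(w\one_{mn_2}-\HH_{22}) = \left( w^{n_2}\,\frac{a(z_1,w^{-1})}{a_0(z_1)}\right)^m
= \left(\frac{w^{n_2}a_0+w^{n_2-1}a_1+\cdots+a_{n_2}}{a_0}\right)^m .
\]
This identity can be checked independently from the block companion structure (\ref{scatt:13}) by expanding along the last block column, or by noting $\HH_{22}=C\otimes\one_m$ with $C$ the scalar companion matrix of the monic polynomial $w^{n_2}+u_1w^{n_2-1}+\cdots+u_{n_2}$. Hence $\det(w\one-\HH_{22}) = \left(\det(w\one_{n_2}-C)\right)^m$. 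So $w_0$ is an eigenvalue if and only if $w_0^{n_2}+u_1w_0^{n_2-1}+\cdots+u_{n_2}=0$.

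Now locate these roots. Note $w_0\neq 0$ unless $u_{n_2}=0$. If $w_0\ne0$ is an eigenvalue, then $a(z_1,w_0^{-1})=0$. Since $z_1\notin E$, $a(z_1,\cdot)$ has no zeros in $|z_2|\le1$, so $|w_0^{-1}|>1$, i.e.\ $|w_0|<1$. The eigenvalue $w_0=0$, which occurs when $a_{n_2}(z_1)=0$, trivially lies in the open disc. Thus all eigenvalues lie in $|w|<1$ for every $z_1$ on the unit circle outside the finite set $E$, which is the claim.

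The main obstacle is justifying that $E$ is finite, i.e.\ that for all but finitely many unimodular $z_1$ the univariate polynomial $a(z_1,\cdot)$ is zero-free on the closed disc. This does not follow from nonvanishing on the open bidisc alone. It uses the scattering Schur hypothesis: coprimality of $a$ and $\hat a$ limits the common zeros on the distinguished boundary $|z_1|=|z_2|=1$ to finitely many (by Bezout), and continuity, or Hurwitz's theorem applied to $a(\zeta,\cdot)$ as $\zeta\to z_1$ from inside the disc, excludes zeros in $|z_2|<1$. I would cite this from \cite{rob:Basu1} as the text does. One further subtlety is a possible drop in $\deg_{z_2}a(z_1,\cdot)$ at special $z_1$. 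This only produces eigenvalues equal to $0$, which was handled above.
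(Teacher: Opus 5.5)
Your proof is correct and follows the paper's own argument. Both read the eigenvalues of $\HH_{22}$ off the determinant identity (\ref{scatt:33}), and both use the scattering Schur property of $a$ to rule out zeros of $a(z_1,\cdot)$ in $|z_2|\le 1$ for all but finitely many unimodular $z_1$. You go further than the paper, which merely asserts these points, by verifying the identity via $\HH_{22}=C\otimes\one_m$, by justifying finiteness of the exceptional set (Bezout on $|z_1|=|z_2|=1$ plus Hurwitz's theorem), and by handling the zero eigenvalues that arise when $\deg_2 a<n_2$.
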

 By recognizing that (\ref{scatt:31}) is, in fact, a Lyapunov equation, it
 then follows that the ${\bf K}$ as given by (\ref{scatt:32}) is the unique
 nonnegative definite Hermitian solution to (\ref{scatt:31}) for almost all
 values of $z_{1}$ on $|z_{1}|=1$.  This also implies, in particular, that
 ${\bf K}=\tilde{{\bf K}}$, i.e., ${\bf K}$ is parahermitian in the discrete sense.

 Thus, due to the classically known 1-D spectral factorability\footnote{%
Note this is a crucial step. An attempt to extend the present synthesis method to higher dimensions (e.g., in 3-D), runs into difficulties because of lack of this factorability (in 2-D).
}
 result \cite{S_fact:youl,Dewilde_and_Vandewalle}, the univariate rational matrix $\KK$ can be factored as in the first equation (\ref{scatt:34}), where $\DD$ is a unimodular polynomial matrix, $\GG$ and $\GG^{-1}$ are
 invertible $(r \times r)$ rational matrices\footnote{%
Note also that $\bf G$ and $\bf D$ have real coefficients if the specified ${\bf S}$ has real coefficients. 
}
both holomorphic in $|z_{1}|<1$, where
\be\label{rank-K}
 r = {\rank} \; \KK.
\ee
Then  $\TT$ and $\TT^{-1}$ defined in (\ref{scatt:34}) are invertible rational matrices, which are both
 holomorphic in $|z_{1}|<1$, because $\GG$ and $\GG^{-1}$ have the same
 property and $\DD$ is unimodular.
 \be\label{scatt:34}
 \KK=\DD^{-1} \left[\begin{array}{cc}
 \GG\tilde{\GG} & \zero \\
 \zero & \zero \end{array} \right]
 \tilde{\DD}^{-1},
\quad
 \TT = \left[\begin{array}{cc}
 \GG^{-1} & \zero \\
 \zero & \one \end{array} \right] \DD,
 \ee
 It then immediately follows that $\KK$ admits the factorization required by ({\ref{scatt:31}}),
thus demonstrating that a $\TT$ can indeed be obtained that satisfies ({\ref{scatt:29}}) and ({\ref{scatt:31}}).
\subsection{Third step}\label{third-step}
Thus far, we  terminated the network ${\NN}^{''}$ with $mn_{2}$ of $z_{2}$-type elements.
We now demonstrate how  to reduce the number of required frequency dependent elements of the type $z_2$.  
For this, we next examine some properties of the connection
 network  ${\bf N}^{''}$ by considering a further partition of the block-elements of $\BB$ as in 
(\ref{partition_step_3}), and define the square  matrix ${\bf S}_f$ of size $ (m+r)$ as:
\be\label{partition_step_3}
\BB
=
\left[\begin{array}{c|c}
 \BB_{11}             & \BB_{12} \\ \hline
 \BB_{21}             & \BB_{22} 
\end{array}\right]
=
\left[\begin{array}{c|cc}
 \BB_{11}             & \BB_{12}^{(1)} & \BB_{12}^{(2)} \\ \hline
 \BB_{21}^{(1)}  & \BB_{22}^{(1)} & \BB_{22}^{(2)} \\
 \BB_{21}^{(2)}  & \BB_{22}^{(3)} & \BB_{22}^{(4)}
\end{array}\right];
\quad
{\bf S}_f= \left[\begin{array}{cc}
 \BB_{11} & \BB_{12}^{(1)} \\
 \BB_{21}^{(1)} & \BB_{22}^{(1)} \end{array} \right],
\ee
where the newly defined block square matrices $\BB_{22}^{(1)}$ and  $\BB_{22}^{(4)}$ are of sizes $r$ and $(mn_2-r)$ respectively.  By using the first equation (\ref{define_B}) and (\ref{scatt:27}) and (\ref{partition_step_3}),  it then follows that
 \be\label{scatt:41}
 {\bf S}_f\tilde{{\bf S}}_f= \one_{m+r},
\text{ and }
\left[\begin{array}{ccc}
 \BB_{21}^{(2)} &  & \BB_{22}^{(3)} \end{array} \right]
 \left[\begin{array}{c}
 \tilde{\BB}_{21}^{(2)} \\
 \tilde{\BB}_{22}^{(3)} \end{array} \right] = \zero,
 \ee
 the last equation, in fact, implying that
 \be\label{scatt:43}
 \left[\begin{array}{cc}
 \BB_{21}^{(2)} & \BB_{22}^{(3)} \end{array} \right] \equiv \zero.
 \ee
If we use the block partition of $\bf B$ from (\ref{partition_step_3}), (\ref{scatt:43}), we obtain from (\ref{scatt:26a})
 \be\label{scatt:44}
 {\bf S}=\BB_{11} + \BB_{12}^{(1)}(z_{2}^{-1}\one_{r} - \BB_{22}^{(1)})^{-
 1}\BB_{21}^{(1)},
 \ee
 which implies that the matrix ${\bf S}_f$ can
 be interpreted as the discrete scattering matrix of an $(m + r)$-port
 $\NN^{'''}$, which results in a two-dimensional $m$-port $\NN$, described by
 the discrete scattering matrix ${\bf S}$, if the last $r$ ports of $\NN^{'''}$
 are terminated with $z_{2}$-type elements. Thus, we only need $r$ of $z_{2}$-type elements instead of
 $mn_{2}$ if we use $\NN^{'''}$ as a coupling network for the synthesis of the
 desired network $\NN$. 

We now collect several pieces of information already established above in a form important for the present purpose, and state the following.
\begin{proposition}
The univariate matrix $S_f$ constructed as in (\ref{scatt:41}) is a discrete lossless bounded matrix, and thus  can be veiwed as a lossless scattering matrix in the discrete domain.
\end{proposition}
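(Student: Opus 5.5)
We need two things: that ${\bf S}_f$ is discrete lossless in the sense ${\bf S}_f\tilde{\bf S}_f=\one_{m+r}$, and that ${\bf S}_f$ is holomorphic in $|z_1|<1$. Then contractivity on $|z_1|=1$ is automatic, so ${\bf S}_f$ is a discrete lossless bounded matrix.

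The paraunitarity is essentially the first relation in (\ref{scatt:41}). To justify it, I would expand (\ref{scatt:27}) using the partition (\ref{partition_step_3}). The $(m+r)\times(m+r)$ leading block of the left side is
\[
\left[\begin{array}{cc}\BB_{11} & \BB_{12}^{(1)}\\ \BB_{21}^{(1)} & \BB_{22}^{(1)}\end{array}\right]
\left[\begin{array}{cc}\tilde\BB_{11} & \tilde\BB_{21}^{(1)}\\ \tilde\BB_{12}^{(1)} & \tilde\BB_{22}^{(1)}\end{array}\right]={\bf S}_f\tilde{\bf S}_f ,
\]
because the middle factor $\diag[\one_{m+r},\zero]$ discards the columns $\BB_{12}^{(2)},\BB_{22}^{(2)}$. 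Hence ${\bf S}_f\tilde{\bf S}_f=\one_{m+r}$. Since ${\bf S}_f$ is square, it follows that $\tilde{\bf S}_f{\bf S}_f=\one_{m+r}$ as well. On $|z_1|=1$ we have $\tilde{\bf S}_f={\bf S}_f^*$, so ${\bf S}_f(z_1)$ is unitary at every boundary point where it is defined.

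Holomorphy would come from the structure $\BB=\TT_0\HH\TT_0^{-1}$. By the first step, $\HH$ is holomorphic in $|z_1|<1$, since its denominators are factors of the scattering Schur polynomial $a_0$. By the second step, $\TT$ and $\TT^{-1}$ from (\ref{scatt:34}) are holomorphic in $|z_1|<1$, because $\GG^{\pm1}$ are and $\DD$ is a unimodular polynomial matrix. So $\TT_0^{\pm1}$, and therefore $\BB$ and its submatrix ${\bf S}_f$, are holomorphic in the open unit disc.

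Combining the two, ${\bf S}_f$ is rational, holomorphic in $|z_1|<1$, and unitary almost everywhere on $|z_1|=1$. To pass to the usual boundedness condition, I would argue that a rational matrix holomorphic in the disc with unitary boundary values has no poles on $|z_1|=1$, since its entries are bounded by $1$ there. The maximum modulus principle, applied to $x^*{\bf S}_f y$ for unit vectors $x,y$, then gives $\|{\bf S}_f\|\le1$ throughout $|z_1|\le1$. So $\one-{\bf S}_f^*{\bf S}_f\ge0$ inside and ${\bf S}_f\tilde{\bf S}_f=\one$, which is the lossless bounded property.

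The main obstacle is the holomorphy claim rather than the algebra. One must make sure that the spectral factor $\GG$ in (\ref{scatt:34}) really was chosen with both $\GG$ and $\GG^{-1}$ holomorphic in $|z_1|<1$ (the outer or minimum-phase choice), and that $\KK$ has constant normal rank $r$. I would handle this by invoking the 1-D matrix spectral factorization of \cite{S_fact:youl} for the parahermitian nonnegative $\KK$, which Fact \ref{scatt:P1A} and the Lyapunov uniqueness argument guarantee.
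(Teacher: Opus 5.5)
Your proposal is correct and follows the paper's own argument. You read ${\bf S}_f\tilde{\bf S}_f=\one_{m+r}$ off the leading block of (\ref{scatt:27}), and you get holomorphy of ${\bf S}_f$ in $|z_1|<1$ from $\HH$ (whose denominators come from the scattering Schur $a_0$) and from $\TT^{\pm1}$ (via $\GG^{\pm1}$ and the unimodular $\DD$) through $\BB=\TT_0\HH\TT_0^{-1}$. Your extra maximum-modulus step, which passes from unitarity on $|z_1|=1$ to $\|{\bf S}_f\|\le 1$ inside the disc, is sound and simply makes explicit a detail the paper leaves implicit.
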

{\em Proof:}
Since we already have (\ref{scatt:41}), we only need to show that $S_f$ is has no poles in $z_1$. First, note that since $a_0=a(z_1 , 0)$ is scattering Schur \cite[Theorem 11]{rob:Basu1},  it follows from  (\ref{scatt:7})-(\ref{scatt:13})  that $\HH _{ij}$'s are analytic in $|z_1|<1$. Next, note that in (\ref{scatt:34}) both ${\bf G}$ and ${\bf G}^{-1}$ are analytic in $|z_1 |<1$ by the construction of rational spectral factors of ${\bf K}$, and ${\bf D}$ is unimodular. Thus, from (\ref{scatt:34}) it follows that $\TT$, and  $\TT^{-1}$ are analytic in $|z_1 |<1$, and consequently, due to its definition as in (\ref{define_B})  the matrix  $\BB$ and its block-elements, including ${\bf S}_f$ as in (\ref{partition_step_3}), are rational in $z_{1}$ and analytic for $|z_{1}|<1$.  \hfill{Q.E.D}

 Thus, the synthesis problem is solved, since ${\bf S}_f$ can be realized with known
 methods as the discrete scattering matrix of a 1-D lossless network
 $\NN^{'''}$. The required number of $z_{1}\text{-type}$ and $z_{2}\text{-type}$ elements, however, depends 
on $r=\rank \KK$ and the degree of $\det {\bf S}_f$.  This issue is discussed next.

 \subsection{Minimality of synthesis}\label{minimal-synthesis}
 We now  show that a synthesis of ${\bf S}$ can be
 achieved according to the above scheme which uses only a minimum number of
 dynamic elements.  Specifically, if $\nu_{i} = m_{i} + {\deg}_{i}a$ for
 $i=1,2,$ then at least $\nu_{1}$ of $z_{1}$-type elements and $\nu_{2}$ of $z_{2}$-
 type elements are needed in the realization (cf. (\ref{scatt:3}) for definition of $m_1$ and $m_2$).  
 To demonstrate this we shall prove
 two key results.  The first result  establishes that $r=\rank \KK =\nu_{2}$.
 The second result
 we will need to prove is that $\det {\bf S}_f$ is an all-pass function whose numerator, in
 its irreducible rational form, is a polynomial in $z_{1}$ of degree $\nu_{1}$,
 and thus, it will immediately follow by invoking a standard result \cite{Dewilde_and_Belevitch} on synthesis 
of $1$-D  lossless scattering matrices that only $\nu_{1}$ of $z_{1}$ type elements 
are needed in the lossless realization of ${\bf S}_f$.

 \subsubsection{Number of $z_{1}\text{-type}$ elements}
 The technical arguments of this section rely largely on standard facts from
 the realization theory of $1$-D linear systems \cite{dar:Kail1}.  For this, we define the
 $(mn_{2} \times mn_{2})$ matrices:
\be\label{scatt:*}
\begin{split}
{\bf C} &=
\left[ {\bf H}_{21}, {\bf H}_{22}{\bf H}_{21}, \cdots, {\bf  H}^{mn_{2}-1}_{22}{\bf H}_{21} \right],
\\ 
{\bf O}^T&= 
\left[ {\bf H}_{12}^T, {\bf H}_{22}^T {\bf H}_{12}^T  \cdots, ({\bf H}_{12}^{mn_{2}-1})^T {\bf H}_{22}^T \right]
 \end{split}
\ee
as the controllability and observability matrices associated with the pairs
 $({\bf H}_{22},{\bf H}_{21})$ and $({\bf H}_{22},{\bf H}_{12})$
 respectively.  Note that  in the present context ${\bf C}={\bf C} (z_2)$ and ${\bf O}={\bf O} (z_2)$
are rational functions of $z_{1}$, and are independent of $z_2$.  We will need the following result.
 \begin{proposition}\label{scatt:*T1}
The rank $r$ of the matrix  ${\bf K}$  satisfying (\ref{scatt:29}) - (\ref{scatt:31}), 
 further satisfies the inequality
\be\label{propT1*}
 r=\rank {\bf K} \leq  m_{2} + \deg_{2} a. 
\ee
 \end{proposition}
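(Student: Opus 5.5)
The plan is to read $\KK$ as a Gramian and bound its rank by the McMillan degree in $z_2$ of the transfer function $\mathbf{S}$, viewed as a one-variable function with $z_1$ frozen on the unit circle. Fix a generic $z_1$ with $|z_1|=1$ such that Fact~\ref{scatt:P1A} applies, so $\HH_{22}(z_1)$ is stable. Then the first equation in (\ref{scatt:31}) is a discrete Lyapunov equation whose unique solution is
\[
\KK=\sum_{k\ge 0}\HH_{22}^k\HH_{21}\tilde{\HH}_{21}(\tilde{\HH}_{22})^k .
\]
This is the controllability Gramian of the pair $(\HH_{22},\HH_{21})$. Its range therefore equals the range of the controllability matrix $\mathbf{C}$ in (\ref{scatt:*}), so $r=\rank\KK=\rank\mathbf{C}$ at generic $z_1$. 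The normal rank over $\mathbb{C}(z_1)$ is attained at generic points, so it is enough to bound $\rank\mathbf{C}$ there.

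Next I will use the lossless conditions (\ref{scatt:29}) to show that the controllable subspace is contained in the observable subspace of $(\HH_{22},\HH_{12})$. Take $x$ in the unobservable subspace, i.e.\ $\HH_{12}\HH_{22}^k x=0$ for all $k$, and suppose $x=\KK y$. Iterating the relations $\HH_{21}\tilde{\HH}_{11}+\HH_{22}\KK\tilde{\HH}_{12}=\zero$ and $\KK=\HH_{21}\tilde{\HH}_{21}+\HH_{22}\KK\tilde{\HH}_{22}$ should express $\KK$ in terms of the observability data, giving $\langle x,\KK y\rangle=0$ and hence $\KK x=0$. Concretely, these two relations together say that $\BB$ maps the constrained space isometrically, so the restriction of $\KK$ to $\ker\mathbf{O}$ must vanish. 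The conclusion is $\rank\KK\le\dim(\text{controllable}\cap\text{observable part})$, which is the McMillan degree $\delta_{z_2}$ of $\mathbf{S}(z_1,\cdot)$.

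The remaining step is to bound $\delta_{z_2}\le m_2+\deg_2 a$. For fixed generic $z_1$ on the circle, $\mathbf{S}(z_1,\cdot)$ is a one-variable lossless (inner) matrix. For such matrices the McMillan degree equals the degree of the all-pass determinant, which here is $\sigma z_1^{m_1}z_2^{m_2}\hat a/a$ by (\ref{scatt:3}). Its degree in $z_2$ is at most $m_2+\deg_2 a$; cancellations can only lower it, so the inequality holds and equality is not needed. I will invoke this standard 1-D fact, the Potapov/Belevitch degree theorem, from \cite{Dewilde_and_Belevitch}. Combining the three steps gives (\ref{propT1*}).

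I expect the main obstacle to be the second step, showing that $\KK$ vanishes on the unobservable part using only (\ref{scatt:29})–(\ref{scatt:31}). If the direct manipulation stalls, I will instead apply the Kalman decomposition with $z_1$ frozen. In the controllable and observable coordinates the realization has dimension $\delta_{z_2}$, and since $\KK$ is the Gramian, its rank equals the dimension of the reachable space. I would then argue via losslessness that the reachable states are exactly the observable ones, because the energy balance from (\ref{scatt:27}) forces unobservable reachable states to carry zero energy.
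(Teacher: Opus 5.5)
Your outline follows the paper's proof. $\KK$ is the controllability Gramian of $(\HH_{22},\HH_{21})$ at generic $z_1$ on $|z_1|=1$, so $\rank\KK\le\rank\mathbf{C}$ (you even get equality). $\rank\mathbf{C}$ is then bounded by the McMillan degree of $\mathbf{W}(z_2)=\mathbf{S}(z_{10},z_2)$. That degree is read off from $\det\mathbf{W}$ by the 1-D degree theorem for lossless matrices, and the inequality you use suffices there.

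The gap is Step~2. You leave it as a sketch (``should express'', ``must vanish''), and the claim you aim at is the wrong one. That $\KK$ vanishes on $\ker\mathbf{O}$ does not follow from (\ref{scatt:29})--(\ref{scatt:31}) in general. Take a one-state lossless realization, adjoin a stable mode that is neither reachable nor observable, and apply a non-orthogonal state similarity. The relations survive, since $\KK$ transforms congruently and is still the Gramian. But the unobservable subspace is no longer orthogonal to $\operatorname{range}\KK$, so $\KK$ does not vanish on it. What you actually need is only $\operatorname{range}\KK\cap\ker\mathbf{O}=\{0\}$: from $x=\KK y\in\ker\mathbf{O}$, conclude $x=0$, not $\KK x=0$.

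The idea you are missing is that the realization of Fact~\ref{scatt:T1} is observable by construction, and the paper uses exactly this. Since $\HH_{12}=[\zero\ \cdots\ \zero\ \one_m]$ and $\HH_{22}$ is block companion, $\HH_{12}\HH_{22}^k$ has $\one_m$ in block column $n_2-k$ and zeros to its left. So the block rows $k=0,\dots,n_2-1$ of $\mathbf{O}$ form a block anti-triangular matrix with identity blocks on the anti-diagonal. Hence $\rank\mathbf{O}=mn_2$ and $\ker\mathbf{O}=\{0\}$, and Sylvester's inequality gives $\rank\mathbf{C}\le\rank\mathbf{O}\mathbf{C}=\deg\mathbf{W}$ without using losslessness at all.

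If you prefer to keep an energy argument, it has to be done on $R=\operatorname{range}\KK$, where $\KK$ is positive definite. In coordinates scaled by its square root, the relations say that the square system matrix of the realization restricted to $R$ is a co-isometry, hence unitary. Its lower-right block then says that the observability Gramian of the restricted realization is the identity, which gives $R\cap\ker\mathbf{O}=\{0\}$.
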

 {\em Proof:}
 Let $q$ be the normal rank of ${\bf C}$.  Since for almost all $z_{1}$ on $|z_{1}|=1$ 
 the eigenvalues of ${\bf K}={\bf K}(z_{1})$ are in the open unit
 disc, it follows from the Lyapunov equation ({\ref{scatt:31}}) that ${\bf K}$
 can be expressed as
 \be\label{scatt:**}
 {\bf K}= \sum_{i=0}^{\infty}{\bf H}_{22}^{i}{\bf H}_{21}{\bf H}_{21}^{*}{\bf
 H}_{22}^{*i}\;,
 \ee
 where we have used the fact that on $|z|=1$, $\tilde{z} = z^{*}$, and thus,
 for each $i$, $j$ we also have $\tilde{\bf H}_{ij}={\bf H}^{*}_{ij}$.

 It can be shown that via the Hermite-type elementary row reduction algorithm
 \cite{dar:Kail1} that one can find a unimodular rational matrix ${\bf V} = {\bf
 V}(z_{1})$ such that
 \be\label{scatt:*3}
 {\bf V}{\bf C}  = \begin{array}{cc} \left[ \begin{array}{c}
 {\bf X} \\ {\bf 0}
 \end{array} \right] \begin{array}{c} q \\ mn_{2}-q \end{array} \end{array}.
 \ee
 Since by the Cayley Hamilton theorem, ${\bf H}_{22}^{i}$ for any $i \geq
 mn_{2}$ is a linear combination of ${\bf H}_{22}^{i}$'s with $i =
 0,1,\cdots,(mn_{2}-1)$, it follows from (\ref{scatt:*3}) and the structure of
 ${\bf C}$ in (\ref{scatt:*}) that for some ${\bf X}_{i}$'s of appropriate size
\be\label{scatt:*4}
 {\bf V}{\bf H}_{22}^{i}{\bf H}_{21}= \begin{array}{cc}\left[ \begin{array}{c}
 {\bf X}_{i} \\ {\bf 0}
 \end{array} \right] \begin{array}{c} q \\ mn_{2}-q \end{array} \end{array}.
 \ee
 Thus, in view of (\ref{scatt:**}) and (\ref{scatt:*4})  for almost all $z_{1}$ on $|z_{1}|=1$ we have
 \[ 
{\bf V}{\bf K}{\bf V}^{*}= \sum_{i=0}^{\infty} \begin{array}{cc} \left[
 \begin{array}{cc}
 {\bf X}_{i}{\bf X}_{i}^{*} & {\bf 0} \\
 {\bf 0} & {\bf 0} \end{array} \right] \begin{array}{c} q \\ mn_{2}-q
 \end{array} \end{array} .
\]
 Since ${\bf V}$ is invertible, this shows that $\rank {\bf K} \leq q$ for almost all $z_{1}$ on
 $|z_{1}|=1$.

 We next set out to show  that $q \leq m_2 +{\deg}_{2}a$. For this purpose, we note that
 for a generic (i.e., almost any), but  fixed value of $z_{1}=z_{10}$ on $|z_{1}|=1$ the synthesis of ${\bf S}$
 developed can be viewed as a realization of the $(m \times m)$
 univariate discrete lossless bounded matrix ${\bf W} = {\bf W}(z_{2})={\bf S}(z_{10},z_{2})$.
Also, ${\bf C}$ and ${\bf O}$ with $z_{1}=z_{10}$ are  controllability and
 observability matrices of this realization, and  by a standard $1$-D system theoretic result 
\cite{dar:Kail1} we have 
 \be\label{scatt:*6}
 \rank {\bf O}{\bf C} = \deg {\bf W},
 \ee
 where ${\deg}{\bf W}$ is the {\em McMillan} degree  of $\bf W$, i.e., the minimum number of $z_2\text{-type}$ elements needed in its realization. From (\ref{scatt:*6})  it follows by the use of Sylvester's inequality for ranks of
 matrix products that
 \be\label{scatt:*7}
\rank {\bf O} + \rank {\bf C} - mn_{2} \leq \rank {\bf O}{\bf  C}.
 \ee
 However, from (\ref{scatt:10})-(\ref{scatt:13}) it is routine to verify that $\rank {\bf O} = mn_{2}$
which together with (\ref{scatt:*6}) and (\ref{scatt:*7}) imply that 
\be\label{rank_C_less_deg_W}
q=\rank {\bf C} \leq \deg {\bf W}.  
\ee
It remains to show that for a generic value of $z_{1}=z_{10}$ on $|z_{1}|=1$  we have 
$ \deg {\bf W}= m_2+\deg_2 a$. For such a generic $z_{10}$  in view of (\ref{scatt:3}) we can express $\det {\bf W}$ in irreducible rational form as
\begin{gather}
\det {\bf W}=\det {\bf S}(z_{10}, z_{2})=\sigma_{1} z_{2} ^{m_2} \frac{\hat{\alpha}_1}{\alpha_1},
\\
\alpha_1(z_2)=a(z_{10},z_2),\; \sigma_1 = \sigma z_{10}^{m_1}=1,\;|\sigma_1|=1, \nonumber
\end{gather}
in which due to \cite[Theorem 11]{rob:Basu1}, $\alpha_1$ is scattering Schur and $\deg \alpha_1 =\deg_2 a(z_{10} , z_2)$. Furthermore, by invoking standard 1-D result \cite{Dewilde_and_Belevitch} it follows that the number of $z_2\text{-type}$ elements required for a lossless realization of ${\bf W}$ is equal to $m_2 + \deg\alpha_1$. Since $\deg_2 a$ is  equal to the value of $\deg \alpha_1$ for almost all $z_{10}$, we have shown that $\deg {\bf W} = m_2 + \deg _2 a$, which together with (\ref{rank_C_less_deg_W}) yields (\ref{propT1*}). Proposition \ref{scatt:*T1} is thus proved.\hfill{Q.E.D}
 \subsubsection{Number of $z_{2}\text{-type}$ elements}
 We will first need  to establish the following result.
 \begin{proposition}\label{scatt:*T2}
 The determinant of the matrix ${\bf S}_f$ describing the connection network ${\bf N}^{'''}$ and defined in (\ref{partition_step_3}) can be expressed as 
\be\label{z_1-degree} 
\det{\bf S}_f = (-1)^{r} \sigma\frac{\hat{a}_0}{a_0} z_{1}^{m_1},
 \ee
in irreducible rational form, in which $a_0 =a_0 (z_{1})= a(z_{1}, 0)$ is as in (\ref{scatt:6}).
 \end{proposition}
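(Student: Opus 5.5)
The plan is to compute $\det{\bf S}_f$ from the linear fractional representation (\ref{scatt:44}), using the Schur complement determinant identity. For a block matrix with ${\bf S}_f$ as in (\ref{partition_step_3}), we have
\[
{\bf S}_f=\left[\begin{array}{cc}\BB_{11}&\BB_{12}^{(1)}\\ \BB_{21}^{(1)}&\BB_{22}^{(1)}\end{array}\right].
\]
The standard identity gives
\[
\det\left(z_2^{-1}\one_r-\BB_{22}^{(1)}\right)\det{\bf S}
=\det\left[\begin{array}{cc}\BB_{11}&-\BB_{12}^{(1)}\\ \BB_{21}^{(1)}&z_2^{-1}\one_r-\BB_{22}^{(1)}\end{array}\right],
\]
since ${\bf S}$ is the Schur complement of the lower right block by (\ref{scatt:44}). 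The right-hand side is a polynomial in $z_2^{-1}$. I will expand it in powers of $z_2^{-1}$ and compare behaviour as $z_2\to 0$, i.e.\ $z_2^{-1}\to\infty$, or equivalently extract the constant term after multiplying by $z_2^{r}$. Multiplying through by $z_2^r$ gives
\[
\det\left(\one_r-z_2\BB_{22}^{(1)}\right)\det{\bf S}=\det\left[\begin{array}{cc}\BB_{11}&-z_2\BB_{12}^{(1)}\\ \BB_{21}^{(1)}&\one_r-z_2\BB_{22}^{(1)}\end{array}\right]
\]
after scaling the last $r$ columns by $z_2$. This is awkward, so I will instead use the other orientation of the identity: $\det{\bf S}_f=\det(-\BB_{22}^{(1)})\cdot\det(\text{Schur complement at }z_2^{-1}=0)$. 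Concretely, the right-hand side is a polynomial in $z_2^{-1}$ whose constant term equals $\det\bigl[\begin{smallmatrix}\BB_{11}&-\BB_{12}^{(1)}\\ \BB_{21}^{(1)}&-\BB_{22}^{(1)}\end{smallmatrix}\bigr]=(-1)^r\det{\bf S}_f$.

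The key steps, in order, are as follows.

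First, I will evaluate the left-hand side in closed form. Since the transformation $\TT_0$ in (\ref{define_B}) is a similarity transformation and (\ref{scatt:43}) makes $\BB_{22}$ block lower triangular, $\det(z_2^{-1}\one_r-\BB_{22}^{(1)})$ is the quotient of $\det(z_2^{-1}\one-\HH_{22})=(z_2^{-n_2}a/a_0)^m$ (cf.\ (\ref{scatt:33})) by $\det(z_2^{-1}\one-\BB_{22}^{(4)})$. From (\ref{scatt:3}) we have $\det{\bf S}=\sigma{\bf z}^{\bf m}\hat a/a$.

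Second, I will identify the constant term in $z_2^{-1}$. Writing $\hat a$ via (\ref{scatt:6}), $\hat a=z_1^{n_1}z_2^{n_2}\bar a(1/z_1,1/z_2)$, and expanding the product as a Laurent polynomial in $z_2^{-1}$, I will read off the coefficient of $z_2^{0}$. Heuristically, letting $z_2^{-1}\to 0$, i.e.\ $z_2\to\infty$, the left-hand side behaves like $\det{\bf S}(z_1,\infty)$ times leading factors. The correct comparison is instead at $z_2=0$ in $\det{\bf S}$ after clearing $z_2^{-r}$: the product equals $z_2^{-r}\cdot(\text{polynomial in }z_2)$, and its $z_2^{r}$-coefficient gives the constant term. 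Using the fact from Proposition~\ref{scatt:*T1} and the subsequent minimality arguments that $r=m_2+\deg_2 a$, the powers of $z_2$ from $z_2^{m_2}\hat a/a$ match exactly $z_2^{r}$ at top order. The surviving coefficient is then $\sigma z_1^{m_1}$ times the ratio of the leading $z_2$-coefficient of $\hat a$, which is $\hat a_0$, to $a_0$, arising from the normalization by $a_0$ in (\ref{scatt:7}). This yields $(-1)^r\det{\bf S}_f=\sigma z_1^{m_1}\hat a_0/a_0$, which is (\ref{z_1-degree}).

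Third, irreducibility follows because $a_0$ is scattering Schur \cite[Theorem 11]{rob:Basu1}, so $a_0$ and $\hat a_0$ are coprime and $a_0$ has no factor $z_1$.

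The main obstacle is the second step: doing the degree bookkeeping correctly, in particular showing that the $\BB_{22}^{(4)}$ contribution cancels and that the exponent match requires exactly $r=m_2+\deg_2 a$. To handle this cleanly, I would try to avoid the $\BB_{22}^{(4)}$ quotient by instead using losslessness, ${\bf S}_f\tilde{\bf S}_f=\one$, which makes $\det{\bf S}_f$ all-pass in $z_1$. It then suffices to identify its poles, namely the zeros of $a_0$, which can come only from denominators of $\HH$ through $\TT$ and $\TT^{-1}$, together with a unimodular constant and a monomial factor fixed by evaluating at a convenient point.
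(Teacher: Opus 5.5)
Your opening identity is correct. So is your observation that its right-hand side is a polynomial in $z_2^{-1}$ of degree at most $r$, with constant term $(-1)^r\det{\bf S}_f$. This mirrors the paper's identity (\ref{tricky_part}), which works with $\tilde{\bf S}_f={\bf S}_f^{-1}$ and sets $z_2=0$; yours does not even need $\tilde{\bf S}_f={\bf S}_f^{-1}$.

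The gap is that you never determine $\det(z_2^{-1}\one_r-\BB_{22}^{(1)})$, and without it the constant term cannot be extracted. Your first step only gives $\det(z_2^{-1}\one_r-\BB_{22}^{(1)})\det(z_2^{-1}\one-\BB_{22}^{(4)})=(z_2^{-n_2}a/a_0)^m$. That does not tell you which factors of $(a/a_0)^m$, or which powers of $z_2$, belong to the first determinant.

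Your second step simply presupposes the answer $\det(\one_r-z_2\BB_{22}^{(1)})=a/a_0$. It is also muddled about where things are evaluated. The relevant limit is $z_2\to\infty$, not $z_2=0$. The $a_0$ comes from the normalization $\det(\one_r-z_2\BB_{22}^{(1)})=1$ at $z_2=0$, not from (\ref{scatt:7}).

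Moreover, Proposition \ref{scatt:*T1} supplies only $r\le m_2+\deg_2 a$. The equality $r=m_2+\deg_2 a$ is a by-product of the missing argument, not something you can import.

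The all-pass fallback does not rescue this. $\det{\bf S}_f$ is a principal minor of $\TT_0\HH\TT_0^{-1}$, not a similarity invariant. Its poles may therefore a priori include poles of $\TT^{\pm1}$, i.e. of $\GG^{\pm1}$ from the spectral factorization of $\KK$. Nothing in your sketch controls multiplicities, the factor $z_1^{m_1}$, or the sign $(-1)^r$.

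What is missing is the coprimeness argument the paper uses, and it runs in your orientation too. Multiplying your identity by $z_2^r$ shows that
\[
\det(\one_r-z_2\BB_{22}^{(1)})\,\sigma z_1^{m_1}z_2^{m_2}\frac{\hat a}{a}
\]
is a polynomial in $z_2$ over $\mathbb{C}(z_1)$ of degree at most $r$, whose $z_2^r$-coefficient is $(-1)^r\det{\bf S}_f$.

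Since $a$ is scattering Schur, it is coprime to $\hat a$ and not divisible by $z_2$. Coprimeness persists in $\mathbb{C}(z_1)[z_2]$ by Gauss's lemma. Hence $a$ divides $\det(\one_r-z_2\BB_{22}^{(1)})$, and you can write the latter as $(a/a_0)q$ with $q(0)=1$.

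The displayed polynomial then becomes $\sigma z_1^{m_1}z_2^{m_2}\hat a\,q/a_0$. Its degree is $m_2+\deg_2 a+\deg q\le r\le m_2+\deg_2 a$, which forces $q\equiv 1$ and $r=m_2+\deg_2 a$.

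Its top coefficient is $\sigma z_1^{m_1}\alpha/a_0$, where $\alpha=z_1^{n_1}\tilde a_0=\hat a_0$ is the leading $z_2$-coefficient of $\hat a$. This gives (\ref{z_1-degree}), and your irreducibility step then completes the proof.
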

 {\em Proof}:
 Algebraic manipulation with second equation (\ref{partition_step_3}), (\ref{scatt:44}), (\ref{scatt:3}) and the use of a well-known  identity%
\footnote{Assuming  ${\bf M}_{22}^{-1}$ to exist,
$\det
\left[ \begin{array}{cc} 
{\bf M}_{11} & {\bf M}_{12} \\
{\bf M}_{21} & {\bf M}_{22} 
\end{array} \right] 
=\det{\bf M}_{22}
\det({\bf M}_{11}-{\bf M}_{12}{\bf M}_{22}^{-1}{\bf M}_{21})
$.
} 
pertaining to determinant of block matrices yield
\be\label{tricky_part}
\sigma {\bf z}^{\bf m}\hat{a}
\det \left[\tilde{\bf S}_f - z_2 \begin{pmatrix} \zero & \zero \\ \zero & \one_{r}\end{pmatrix}\right]
=
a\det [ \tilde{\bf B}_{22}^{(1)} - z_2 \one_{r}].
\ee
We  view (\ref{tricky_part}) as a polynomial in $z_2$, and appeal to the fact that  due to the scattering Schur property of $a$, the bivariate  polynomials $a$ and $\hat{a}$ are relatively prime. We denote the leading coefficient of $\hat{a}$ viewed as a polynomial in $z_2$, by $\alpha$ as in (\ref{define_alpha}). We then obtain from (\ref{tricky_part}) by invoking the relative primeness property of $a$ and $\hat{a}$ just mentioned
\be\label{define_alpha}
\det [ \tilde{\bf B}_{22}^{(1)} - z_2 \one_{r}]
=
(-1)^r z_2 ^{m_2}\frac{\hat{a}}{\alpha},\quad \alpha =z_1 ^{n_1}\tilde{a_0}
\ee
and thus also from (\ref{tricky_part})
\be
\det \left[\tilde{\bf S}_f - z_2 \begin{pmatrix} \zero & \zero \\ \zero & \one_{r}\end{pmatrix}\right]
=
(-1)^{r} \sigma^* \frac{a}{\tilde{a}_0} z_{1}^{-(n_1 + m_1)}.
\ee
 Equation (\ref{z_1-degree}) is then obtained by substituting $z_{2}=0$ in this latter expression thus establishing the proposition.
The observation that due to \cite[Theorem 11]{rob:Basu1} the polynomial $a_0$ is scattering Schur, and thus (\ref{z_1-degree})  is in irreducible rational form, completes the desired proof of the Proposition. \hfill{Q.E.D}

 Due to Theorems \ref{scatt:*T1}, the 1-D coupling network $\NN^{'''}$,
 described by ${\bf S}_f$, can be realized with $v_{1}^{'}$ elements of
 $z_{1}$-type.  After terminating the last $v_{2}^{'}$ ports of $\NN^{'''}$
 with $z_{2}$-type elements we obtain the desired 2-D network $\NN$,
 described by ${\bf S}$.  Thus we have obtained a minimal realization.  To further illustrate
 this, suppose there exists a realization of ${\bf S}$ with less than $m_{2}^{'}$
 of $z_{2}$-dependent elements.  If we replace the $z_{1}$-type elements
 in this network by {\em constant} one-ports with element $z_{10}$, where
 $|z_{10}| = 1$ and ${\deg}_{2} \; g(z_{10},z_{2}) = n_{2}^{'}$, we would
 have a realization of $\WW(z_{2}) = {\bf S}(z_{10},z_{2})$.  But in the proof of
 Theorem 6 we have seen that every realization of $\WW(z_{2})$ requires at
 least $m_{2}^{'}$ of $z_{2}$-dependent elements, which contradicts the
 assumption.  With similar arguments it can be shown that $m_{1}^{'}$ is the
 minimal number of $z_{1}$-dependent elements in any realization of ${\bf S}$.

\section{Three and Higher Dimensional Synthesis}\label{scatt:S24}
 Having demonstrated that a rational discrete para-unitary
 matrix in two variables can be viewed
 as the scattering matrix of a network consisting exclusively of more
 elementary passive building blocks (e.g., two types of delays along with
 memoryless lossless building blocks) we now consider the issue for three or
 larger number of  variables. It will be shown that such a synthesis is
 provably infeasible for scalar as well as for matrix valued rational
 para-unitary functions, holomorphic in the right half poly-plane (or poly-disc
 in the discrete case) in the generic instance.  However, in the special case,
 when a three variable scalar all-pass rational function, of which the denominator polynomial is  {\em linear} in each of the
 three variables is considered, the desired synthesis can be carried out.
 This latter aspect is considered in the next section by following developments in \cite{kummert-linear-3D}.  
Our exposition  will be for continuous systems, whereas obvious discrete analogs of the main results
 indeed hold and can be derived in a similar manner.

 \subsection{Higher dimensions $(k \geq 3)$ }\label{H-dim}
  For the purpose of this section we shall consider the scalar rational function $s =  s({\bf p})$ in $k$-variables which is a lossless reflectance i.e., it is holomorphic in
 $\Re{\bf p} >0$, and furthermore $ss_{*} = 1$ (we will also refer to such functions as $k\text{-D}$ all-pass functions).  We will restrict ourselves to specific values of $k$, e.g., $k=3$ later as the context dictates. It is known \cite{stb:Fett2} that any such $s({\bf p})$ can be expressed in irreducible rational form as
 \be\label{scatt:120}
 s = \sigma \frac{g_{*}}{g},
 \ee
 where $g=g({\bf p})$ is a scattering Hurwitz polynomial, and $\sigma$ is a
 unimodular constant, i.e., $|\sigma|=1$. We presently assume that $g$ is linear in  $p_{1}$ i.e., we
 can write
 \be\label{scatt:121}
 g = g_{0} + g_{1}p_{1},
 \ee
 where $g_{0}=g_{0}({\bf p}')$ and $g_{1}=g_{1}({\bf p}')$ are
 polynomials in $(k-1)$ variables ${\bf p}' = (p_{2}, p_{3}, \cdots, p_{k})$.  Next, we assume 
 that there exists an $(m+1) \times (m+1)$ lossless scattering matrix ${\bf S}
 = {\bf S}({\bf p}')$ in $k-1$ variables ${\bf p}'$, synthesizable in
 terms of inductive and capacitive elements of $(k-1)$ different types such
 that when $m$ of its $m+1$ ports are terminated in inductive and/or
 capacitive elements of type $p_{i}; \; i =2,3, \cdots, k$ then we obtain the
 reflectance $s$.  It can be shown by essentially following the technique pursued in Section \ref{first-step} that the problem is equivalent to requiring the existence of rational matrices ${\bf S}_{ij}$ of appropriate sizes such that\footnote{%
Here the notation $z_1$ does not correspond to discrete domain, but is used as a convenient notation.
} 
 \be\label{scatt:122}
 s = S_{11} + {\bf S}_{12}(z_{1}^{-1}\one_{m} - {\bf S}_{22})^{-1}{\bf  S}_{21},
\text{ where }
z_{1} = \frac{1-p_{1}}{1+p_{1}},
 \ee
 and
 \be\label{scatt:123}
{\bf  S}=
  \left[\begin{array}{cc}
 S_{11} & {\bf S}_{12} \\
 {\bf S}_{21} & {\bf S}_{22}
 \end{array} \right] 
\ee
 is a paraunitary scattering matrix of an $(m+1)$ port involving the variables ${\bf p}^{'} = (p_{2},p_{3}, \cdots, p_{k})$.
Substituting $p_{1} = 1$, i.e., $z_{1} =0$ in (\ref{scatt:120}) and ({\ref{scatt:122}}) it
 would then follow via the use of (\ref{scatt:121}) that
 \be\label{scatt:125}
 S_{11} = \sigma\frac{a_{0*}}{a_{1}},
 \text{ where } 
 a_{0} = g_{0} - g_{1},\text{ and } a_{1} = g_{0} + g_{1}.
 \ee
 Straightforward algebraic manipulation with (\ref{scatt:120}) to  (\ref{scatt:125}) yields 
 \be\label{scatt:127}
 {\bf S}_{12}({\bf I}_{m}-z_{1}{\bf S}_{22})^{-1}{\bf S}_{21} =
 \frac{a_{1}a_{1*} -a_{0}a_{0*}}{a_{1}(a_{1} + z_{1}a_{0})},
 \ee
 in which the further substitution $p_{1}=1$, i.e., $z_{1} =0$  produces
 \be\label{scatt:128}
a_{1}^{2} {\bf S}_{12}{\bf S}_{21} = a_{1}a_{1*} - a_{0}a_{0*}.
 \ee
 Since ${\bf S}$ is assumed to be a lossless scattering matrix,
 we have ${\bf S}{\bf S}_{*}={\bf S}_{*}{\bf S} = {\bf I}_{m+1}$  it thus follows from (\ref{scatt:123}) that
 \be\label{scatt:129}
 S_{11}S_{11*} + {\bf S}_{12}{\bf S}_{12*} = 1; 
\quad
S_{11*}S_{11} + {\bf  S}_{21*}{\bf S}_{21} = 1.
 \ee
 Since $S_{11}$ is a scalar, thus $S_{11}S_{11*} = S_{11*}S_{11}$, we have from
 (\ref{scatt:129}) and (\ref{scatt:125})
 \be\label{scatt:130}
 a_{1}a_{1*}{\bf S}_{12}{\bf S}_{12*} = a_{1}a_{1*}{\bf S}_{21*}{\bf S}_{21} =
 a_{1}a_{1*} - a_{0}a_{0*}.
 \ee
 Next, we consider the $(m \times 1)$ rational vector in $k-1$ variables ${\bf
 p}'$ defined by ${\bf N} = a_{1}{\bf S}_{21} - a_{1*}{\bf S}_{12*}$.  It
 is then straightforward to verify by using (\ref{scatt:128}) and
 (\ref{scatt:130}) that ${\bf N}_{*}{\bf N} = (a_{1}{\bf S}_{21} - a_{1*}{\bf
 S}_{12*})_{*}(a_{1}{\bf S}_{21} - a_{1*}{\bf S}_{12*}) \equiv 0$, implying 
$| {\bf N}( j \mbox{\boldmath $\omega$}^{'}|^2=0$ for all  real $(k-1)$ tuples  
$\mbox{\boldmath $\omega$}^{'}$, and thus ${\bf N} = {\bf N}({\bf p}^{'}) \equiv 0$.

 Consequently, invoking the definition of ${\bf N}$ we may assert that
 $a_{1}{\bf S}_{21} = a_{1*}{\bf S}_{12*}$.  If we define a $(m \times 1)$
 column vector
 \be\label{scatt:131}
 {\bf P} = a_{1}{\bf S}_{21} = a_{1*}{\bf S}_{12*}
 \ee
 then we can further assert that ${\bf P}$ is, in fact, a polynomial column
 vector.  To see this, note first that ${\bf S}_{21}$ and ${\bf S}_{12}$ being
 sub-matrices of the lossless scattering matrix (cf. (\ref{scatt:123})), are
 holomorphic in $\Re{\bf p}'>0$.  Thus, due to a result in \cite{stb:Fett2} ${\bf P}=a_{1}{\bf S}_{21}$ is either a
 polynomial or has singularities in $\Re{\bf p}'<0$.  On the other hand, $a_{1}{\bf S}_{12}$ must be holomorphic in $\Re{\bf p}'>0$, and thus,  $a_{1*}{\bf S}_{12*} = (a_{1}{\bf S}_{12})_{*}$ must be holomorphic
 in $\Re{\bf p}'<0$.  It then follows from (\ref{scatt:131}) that ${\bf P}$ is a
 polynomial column vector of size $(m \times 1)$.

 We can now state the following result which is crucial for the developments of the present section.

 \begin{theorem}\label{scatt:T7}
 A necessary condition for the lossless scattering function $s = s({\bf p})$
 in $k$-variables as described  in (\ref{scatt:120}) and (\ref{scatt:121}) to admit a 
 lossless synthesis is that there exists an $(m \times 1)$ polynomial vector ${\bf P}$ satisfying
 \be\label{scatt:132}
 {\bf P}_{*}{\bf P} = a_{1}a_{1*} - a_{0}a_{0*}
 \ee
 where,  $a_0$ and $a_1$ are as described in (\ref{scatt:125}). 
 \end{theorem}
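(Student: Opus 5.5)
The plan is to collect the pieces already derived before the statement. I will argue by assuming that $s$ admits a lossless synthesis and then producing the vector ${\bf P}$.

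First, the synthesis assumption yields the extraction representation (\ref{scatt:122})--(\ref{scatt:123}). Here ${\bf S}({\bf p}')$ is a paraunitary $(m+1)\times(m+1)$ scattering matrix, holomorphic in $\Re{\bf p}'>0$. This is justified by the argument of Section \ref{first-step}: pull out the $p_1$-type reactances (after the bilinear change to $z_1$) and treat the rest as an $(m+1)$-port in the remaining variables. Evaluating at $p_1=1$ (i.e., $z_1=0$) and using the linearity (\ref{scatt:121}) gives (\ref{scatt:125}). Subtracting $S_{11}$ from $s$ and simplifying gives (\ref{scatt:127}), and setting $z_1=0$ there gives (\ref{scatt:128}). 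In parallel, the losslessness ${\bf S}{\bf S}_*={\bf S}_*{\bf S}={\bf I}$ gives (\ref{scatt:129}). Because $S_{11}$ is scalar, combining with (\ref{scatt:125}) yields (\ref{scatt:130}).

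Second, set ${\bf N}=a_1{\bf S}_{21}-a_{1*}{\bf S}_{12*}$ and expand ${\bf N}_*{\bf N}$. The four terms are $a_{1*}a_1{\bf S}_{21*}{\bf S}_{21}$ and $a_1a_{1*}{\bf S}_{12}{\bf S}_{12*}$, each equal to $a_1a_{1*}-a_0a_{0*}$ by (\ref{scatt:130}), together with the cross terms $a_{1*}^2{\bf S}_{21*}{\bf S}_{12*}$ and $a_1^2{\bf S}_{12}{\bf S}_{21}$. By (\ref{scatt:128}) and its paraconjugate, each cross term is also $a_1a_{1*}-a_0a_{0*}$, so ${\bf N}_*{\bf N}\equiv0$. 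On the imaginary axis this reads $|{\bf N}(j\boldsymbol\omega')|^2=0$ wherever ${\bf N}$ is defined. Hence ${\bf N}$ vanishes on a real-dimension-$(k-1)$ set, and being rational it vanishes identically. This gives ${\bf P}:=a_1{\bf S}_{21}=a_{1*}{\bf S}_{12*}$ as in (\ref{scatt:131}).

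Third, show that ${\bf P}$ is polynomial, which is the main obstacle. One expression, $a_1{\bf S}_{21}$, is holomorphic in $\Re{\bf p}'>0$. The other, $(a_1{\bf S}_{12})_*$, is holomorphic in $\Re{\bf p}'<0$. In one variable this would immediately force the poles to lie on the imaginary axis and then be excluded. In several variables, singularities of rational functions are hypersurfaces that can cross both regions, so I would invoke the result of \cite{stb:Fett2}: the denominator of a rational function holomorphic in $\Re{\bf p}'>0$ has no zeros in the open right poly-half-plane. I would apply it to both expressions, so that the reduced denominator must be a polynomial with no zeros in either the open right or the open left poly-half-plane. The step to check most carefully is that such a denominator is a constant. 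My first attempt would be to restrict it to complex lines ${\bf p}'={\bf c}+t{\bf d}$ with ${\bf d}$ having positive real parts, and use a Hurwitz-type argument for a polynomial nonvanishing on both sides. If that fails, I would instead use that the denominator divides both a scattering Hurwitz-type polynomial and its paraconjugate, which are coprime. With ${\bf P}$ polynomial, substituting (\ref{scatt:131}) into (\ref{scatt:130}) gives ${\bf P}_*{\bf P}=a_{1*}a_1{\bf S}_{21*}{\bf S}_{21}=a_1a_{1*}-a_0a_{0*}$, which is (\ref{scatt:132}).
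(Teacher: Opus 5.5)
Your Steps 1 and 2 reproduce the paper's derivation: evaluation at $z_1=0$, the vector ${\bf N}$, and ${\bf N}\equiv 0$. Your closing identity ${\bf P}_*{\bf P}=a_{1*}a_1{\bf S}_{21*}{\bf S}_{21}=a_1a_{1*}-a_0a_{0*}$ via (\ref{scatt:130}) is also fine. The gap is in the route you make primary in Step 3. You claim that a polynomial with no zeros in either the open right or the open left poly-half-plane must be constant, but this is false in several variables. For example, $p_2+p_3$ has real part of fixed sign in each of those regions, so $1/(p_2+p_3)$ is holomorphic in both $\Re{\bf p}'>0$ and $\Re{\bf p}'<0$ without being a polynomial. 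Restricting to complex lines ${\bf c}+t{\bf d}$ cannot rescue this, because the zero of $c_2+c_3+t(d_2+d_3)$ generally sits at a point lying in neither region. The fact you attribute to \cite{stb:Fett2} uses only holomorphy: a rational function holomorphic in $\Re{\bf p}'>0$ has a reduced denominator free of zeros there. Holomorphy of $a_1{\bf S}_{21}$ and $(a_1{\bf S}_{12})_*$ in the two opposite poly-half-planes is simply not enough information to force polynomiality.

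The missing ingredient is boundedness of ${\bf S}$ on the imaginary axes, and this is what \cite{stb:Fett2} actually supplies. The least common denominator $g_S$ of a bounded (here lossless) matrix is scattering Hurwitz, so in particular $g_S$ and $g_{S*}$ are coprime. Reactance-type factors such as $p_2+p_3$ are excluded exactly because they would make ${\bf S}(j\boldsymbol\omega')$ unbounded. With this, your fallback is the correct argument and should be the proof. The reduced denominator of each entry of ${\bf P}=a_1{\bf S}_{21}$ divides $g_S$, and that of the same entry written as $a_{1*}{\bf S}_{12*}$ divides $g_{S*}$, so it is constant. This is in substance the paper's step, which phrases the same fact as ``$a_1{\bf S}_{21}$ is either a polynomial or has singularities in $\Re{\bf p}'<0$'' and combines it with holomorphy of $(a_1{\bf S}_{12})_*$ in $\Re{\bf p}'<0$.

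One bookkeeping point is inherited from the paper. If you actually compute $s-S_{11}$, you get $a_1^2{\bf S}_{12}{\bf S}_{21}=\sigma(a_1a_{1*}-a_0a_{0*})$, so with your ${\bf N}$ one finds ${\bf N}_*{\bf N}=|1-\sigma|^2(a_1a_{1*}-a_0a_{0*})$. Either normalize $\sigma=1$ by replacing $g$ with a suitable unimodular multiple, or take ${\bf N}=a_1{\bf S}_{21}-\sigma a_{1*}{\bf S}_{12*}$. Your final use of (\ref{scatt:130}) is unaffected.
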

 {\em Proof:}
 We only need to observe that (\ref{scatt:128}) and  (\ref{scatt:131})  together yield
 (\ref{scatt:132}).

 \subsubsection{3-D all-pass functions of degree one}\label{3-D_all-pass}
 We will now show that a  synthesis for $s$ is feasible if the number of variables  $k=3$ 
 and  $g$ is linear in each of the variables, i.e., ${\deg}_{i}g = 1$ for $i =  1,2,3$.   
The obvious first step is to show that given $s$, or equivalently $g$, or $a_0$ and $a_1$, a polynomial vector ${\bf P}$ 
 satisfying (\ref{scatt:132}) can always be found. 
For this, let the polynomials $P_{i}$; $i=1$ to $m$ be the elements of ${\bf P}$.  Then
considering ${\bf p}^{'}=j \mbox{\boldmath $\omega$}^{'}$, where $\mbox{\boldmath
 $\omega$}^{'}=(\omega_{2},\omega_{3}, \cdots, \omega_{k})$ is a $(k-1)$-tuple
 of real numbers,  (\ref{scatt:132}) can be equivalently written as
 \be\label{scatt:133}
 |a_{1}(j \mbox{\boldmath $\omega$}^{'})|^{2}-|a_{0}(j \mbox{\boldmath
 $\omega$}^{'})|^{2} = \sum_{i=1}^{m}|P_{i}(j \mbox{\boldmath
 $\omega$}^{'})|^{2}.
 \ee
 Next, since $g = g_{0} +  g_{1}p_{1}$ is scattering Hurwitz polynomial, the rational function $g_{0}/g_{1}$ is a
 positive function (cf. \cite{stb:Fett2,rob:Basu1} for more detail) and, thus, $a_{0}/a_{1} =
 (g_{0}-g_{1})/(g_{0}+g_{1})$ is a bounded function, i.e., $|a_{0}/a_{1}| \leq
 1$ in $\Re{\bf p}'> 0$, which in turn implies that the left hand side of
 (\ref{scatt:133}) must be a nonnegative definite polynomial in $k-1$ real variables
 $\boldsymbol\omega '$.  

Recall the classical result of Hilbert on ternary quartic forms, which says 
that any  positive definite real polynomial in two-variables, say $\omega_{2}$, $\omega_{3}$, with {\em total degree not exceeding four}
 can be expressed as a sum of squares of  three real polynomials
 \cite{S_fact:hilb} (cf. Table \ref{table_SOS} entry $d=4$, $n=2$). This last mentioned result does indeed apply in the present
 situation since $|a_{0}(j \mbox{\boldmath $\omega$}^{'})|^{2} - |a_{1}(j
 \mbox{\boldmath $\omega$}^{'})|^{2}$ is a polynomial in $\omega_{2}$,
 $\omega_{3}$, with total degree not exceeding four because ${\rm
 deg}_{i}a_{0} \leq 1$, ${\deg}_{i}a_{1} \leq 1$, thus confirming the existence of the polynomials $P_i$, $i=1\text{ to } m$ with $m=3$ in (\ref{scatt:133}). Note that this argument is critical and cannot  be extended to a  larger number of variables (i.e., $k>3$), or if $g$ is super-linear in any of the variables. 

 Having obtained the polynomial\footnote{Note that ${\bf P}$ may not have real coefficients.}
 ${\bf P}$, we will demonstrate in the rest of this section that the choice of ${\bf S}({\bf p}')$ in (\ref{scatt:138}) provides a bounded scattering matrix that indeed describes the connection network, which when terminated by $p_1$-type capacitances, yields the desired synthesis for $s$.
 \be\label{scatt:138}
 S_{11} = \sigma \frac{a_{0*}}{a_{1}},\;\;
 {\bf S}_{21} = \sigma \frac{{\bf P}}{a_{1}},\;\;
 {\bf S}_{12} = \frac{{\bf P}_{*}}{a_{1}},\;\;
 {\bf S}_{22} = \frac{{\bf P}{\bf P}_{*}  - a_{1}(a_{1*} + a_{0})\one_m }{a_{1}(a_{1} + a_{0*})}  .
 \ee
We need to show that each of the above expressions are holomorphic in $\Re {\bf p}' >0$.  
For this, first note that $a_1 = g(1,{\bf p}')$ is a scattering Hurwitz polynomials, 
and thus we have $a_1 \neq 0$ for $\Re {\bf p}' >0$.

 Next, it  follows from  (\ref{scatt:120}),  (\ref{scatt:121}) and (\ref{scatt:125}) 
that  $s(1,{\bf p}^{'}) = a_{0*}/a_{1}$ is a bounded function in irreducible
 rational form (we have used the relative primeness of the scattering Hurwitz
 polynomial $g(1,{\bf p}^{'})$ with its own para-conjugate here). 
 Next, $(a_{1} + a_{0*})$ is the denominator of the
 irreducible positive function obtained by considering the bilinear transform
 of $s(1,{\bf p}')$.  Thus, $a_{1} + a_{0*} \neq 0$ in $\Re{\bf p}' >0$.

 Finally, it follows by straightforward algebraic manipulations
 from (\ref{scatt:123}) and (\ref{scatt:138}) that ${\bf S}{\bf S}_{*} = \one_{m+1}$ i.e., ${\bf S}$ is paraunitary.

 It only remains to show that ${\bf S}$
 indeed satisfies (\ref{scatt:122}). While this step involves purely algebraic manipulation, due to its intricate nature we provide some detail. First note that $(a_{1}a_{1*} - a_{0}a_{0*})$ is the 
$p_{1}$-{\em resultant}  between the pair of polynomials $g$ and $g_*$, which cannot be identically zero due to the relative primeness of  $g$ and $g_*$ inherited from the scattering Hurwitz property of $g$.
 Thus, from (\ref{scatt:132}) it follows that ${\bf P} \not\equiv 0$, and consequently, from
 (\ref{scatt:131}) ${\bf S}_{12} \not\equiv 0$.  Let ${\bf Q}$ be a square matrix\footnote{%
Note that ${\bf Q}$ has no role in the final result and is an tool for convenience of calculation.
} 
of full normal rank whose first row is ${\bf S}_{12}$, i.e., $ {\bf S}_{12} = {\bf e}{\bf Q}$, where ${\bf e}=[1,0\cdots0]$.
It then follows that 
\be\label{scatt:143}
\begin{split} 
{\bf S}_{12}(z_{1}^{-1}\one_{m} - {\bf S}_{22})^{-1}{\bf S}_{21}  &  =  {\bf S}_{12}{\bf Q}^{-1}(z_{1}^{-1}\one_{m} - {\bf Q}{\bf S}_{22}{\bf Q}^{-1})^{-1}{\bf Q} {\bf S}_{21} \\
& = {\bf e} \left[ \left( z_{1}^{-1} +
 \frac{a_{1*} + a_{0}}{a_{1} + a_{0*}} \right) \one_{m} + \frac{{\bf Q}{\bf  P}{\bf e}}{a_{1} + a_{0*}} \right]^{-1}
{\bf Q}{\bf S}_{21} \\
& = \left[ z_{1}^{-1} + \frac{a_{1*} + a_{0}}{a_{1} + a_{0*}} - \frac{{\bf e}{\bf  Q}{\bf P}}{a_{1} + a_{0*}} \right]^{-1} 
{\bf e} {\bf Q} {\bf S}_{21} \\
 & = \sigma \left( z_{1}^{-1} + \frac{a_{0}}{a_{1}} \right)^{-1}
 \frac{a_{1}a_{1*} -a_{0}a_{0*}}{a_{1}^{2}},
 \end{split}
 \ee
 where the 3rd and 4th equations in (\ref{scatt:138}),  have been used
 in deriving the second equality, whereas the fact that inverse of the lower
 triangular Toeplitz matrix so obtained inside the square brackets is also lower triangular with inverted
 diagonal elements has been used in the third equality. We then note in view of ${\bf S}_{12}={\bf e}{\bf Q}$ and (\ref{scatt:131}) that we may write 
${\bf P}{\bf P}_{*}=a_{1} {\bf e}{\bf Q}{\bf P}=a_{1}^{2} {\bf e}{\bf Q}{\bf S}_{21}$, 
and substitute for ${\bf P}{\bf P}_*$ from equation (\ref{scatt:132}), which  yields the  last equality.

By adding the first equation (\ref{scatt:138}) and  (\ref{scatt:143}), by invoking  (\ref{scatt:122}), it is easily verified that  (\ref{scatt:120}) holds, thus confirming that the connection network described by ${\bf S}$, when terminated in $p_1$-type capacitances indeed result in the desired all-pass function $s$. Since  ${\bf S}={\bf S}(p_1,p_2)$ has been shown to be a bivariate bounded scattering matrix, and thus  can be synthesized via the technique expounded in Section \ref{scatt:S21}, a synthesis of the all-pass function $s$ of degree one each of its three variables has been now conclusively demonstrated. 

{\flushleft\bf Remark:}
While the above discussion treats only a scalar all-pass transfer function $s$, it may be interesting to consider the a multiport analog of the problem. More specifically, one may consider a rational lossless bounded matrix ${\bf S}$ (i.e.,   ${\bf S}$ is holomorphic in $\Re {\bf p} >0$, $\one - {\bf S}_* (j\boldsymbol\omega) {\bf S} (j\boldsymbol\omega) \geq 0$ for all real $\boldsymbol\omega$, and ${\bf S}_*{\bf S}=\one$) of `low degree' (here the precise definition of degree may need to be clarified) and study its synthesizability.

{\flushleft\bf Remark:}
 We note finally that the present discussion also provides a proof of the fact that if  for $s=s({\bf p})$ as given in (\ref{scatt:120}) and (\ref{scatt:121}), a ${\bf P}$ satisfying the condition  (\ref{scatt:132}) of Theorem \ref{scatt:T7} can be found then one can find a $k-1$ variable lossless scattering matrix, say ${\bf S}_f ({\bf p}')$, whose $m$ terminals terminated in capacitances provide a synthesis for the all-pass function $s$. This statement can, in a sense be viewed as a sufficiency part of Theorem \ref{scatt:T7}, but  condition 
 (\ref{scatt:132}) cannot, in general, be satisfied for $k>3$, nor can it be satisfied when $k=3$ and $\deg_{i} g >1$ for any $i\neq1$. A concrete demonstration of this latter fact is the content of the section that immediately follows.

\subsubsection{Infeasibility of synthesis for higher order all-pass functions}
 In the rest of the present section we shall show via a counterexample that
 (\ref{scatt:132}) is not satisfied for a higher order all-pass
 function $s = s({\bf p})$ in three variables.    

In Section \ref{3-D_all-pass} we have argued that the left hand side of (\ref{scatt:133}) is a nonnegative definite 
polynomial function of the $k-1$ real variables  $\boldsymbol\omega '$.  
The fact that such polynomials cannot be
 expressed as a sum of squares, as required by Theorem \ref{scatt:T7},  is well known in view of Hilbert's celebrated
 result \cite{S_est:hilbert,S_fact:hilb}. However, since in our case $a_{0} = g_{0}-g_{1}$ and $a_{1}
 = g_{0}+g_{1}$, where $g_{0}$ and $g_{1}$ are the coefficients of a
 scattering Hurwitz polynomial, the left hand side of (\ref{scatt:133}) is not
 exactly an arbitrary nonnegative definite polynomial, but it arises in a
 somewhat special way, we need to examine the issue in greater detail.  As
 shown in \cite{kummert-conterexample}  there exist examples of $a_{0}$ and $a_{1}$ satisfying the above
 requirements which are such that the left hand side of (\ref{scatt:133})
 still cannot be expressed as a sum of squares of polynomials in $\boldsymbol\omega '$.

  For the stated purpose we consider the set of polynomials $b$,
 $c$, and $d$ as follows:
 \be\label{scatt:134}
 b = 1 + (p_{2} + p_{3})^{2} + \frac{1}{2}p_{2}^{2}p_{3}^{3}
 \ee
 \be\label{scatt:135}
 c = \frac{1}{\sqrt{7}}(p_{2} + p_{3})(4 + 3p_{2}p_{3})
 \ee
 \be\label{scatt:136}
 d = \sqrt{\frac{2}{7}}p_{2} + p_{2}^{2} + \sqrt{\frac{2}{7}}(1-
 p_{2}^{2})p_{3} + (1-\sqrt{\frac{2}{7}}p_{2} + \frac{1}{2}p_{2}^{2})p_{3}^{2}
 \ee
 \be\label{scatt:137}
 e = b + c
 \ee
Now, as before in (\ref{scatt:121}) we consider the polynomials 
\[
g=g_0+p_1 g_1, \text{ where } g_0 = e+d \text{ and } g_1=e-d,
\]
where the polynomials $d$ and $e$ are as specified in (\ref{scatt:136}) and (\ref{scatt:137}). One can indeed show that the polynomial $g$ so constructed  is scattering Hurwitz by using standard test procedures for testing stable multivariable  polynomials. Furthermore, in view of (\ref{scatt:125}) we then have $a_0=2d$ and $a_1 =2e$, and it can  be routinely verified that 
$|a_{0}(j\boldsymbol\omega ')|^{2} - |a_{1}(j\boldsymbol\omega ')|^{2} = 
4(|e(j\boldsymbol\omega ')|^{2} - |d(j\boldsymbol\omega ')|^{2})$
 cannot be expressed as a sum of squares of polynomials, if $e$ and $d$ are as specified.

 \section{Dissipative 2-D scattering Synthesis}\label{S22}
 We now consider the issue of synthesizability of $2$-D lossy or
 dissipative scattering bounded matrices.  More specifically, given a bounded
 rational matrix ${\bf S}_{11}={\bf S}_{11}({\bf p})$ in two variables ${\bf p}=(p_{1},p_{2})$
 we wish to view ${\bf S}_{11}$ as
 the scattering matrix associated with an electrical network consisting of
 resistive elements as well as of inductive and capacitive elements of $p_{1}$
 and $p_{2}$ types.  Alternatively, in the discrete domain given a discrete
 bounded rational matrix $\mbox{\boldmath $\Sigma$}_{11}$ in two variables 
 ${\bf Z}=(z_{1}, z_{2})$ i.e., if $\mbox{\boldmath $\Sigma$}_{11}$ is
 holomorphic and $\one - \boldsymbol\Sigma_{11}^{*}\boldsymbol\Sigma_{11} \geq 0$ 
 in $|{\bf z}| < 1$ then we wish to view $\mbox{\boldmath
 $\Sigma$}_{11}$ as the transfer function of a system consisting of fully
 absorbing ports as well as of $z_{1}$ and $z_{2}$ type shift elements (i.e.,
 delays in digital filter terminology) interconnected by a memoryless (i.e.,
 constant) network having an unitary transfer function matrix.  In the discrete
 case, the fully absorbing ports are those which are terminated in elements
 from which no signal is reflected, and correspond to resistive ports under
 proper matching conditions in the continuous case.

 In Section \ref{scatt:S21} the problem has already been solved
 in the special case when ${\bf S}_{11}$ or $\mbox{\boldmath $\Sigma$}_{11}$,
 is in addition, lossless.  Given that a synthesis scheme for lossless bounded
 matrices are available, the more general problem that we now address can,
 from a mathematical standpoint, be viewed as a problem of {\em unitary
 dilation} of ${\bf S}_{11}$ or $\mbox{\boldmath $\Sigma$}_{11}$.  What we
 wish to find is a bivariate {\em lossless} bounded rational matrix, of which
 the prescribed $\boldsymbol\Sigma_{11}$, or ${\bf S}_{11}$ is a sub-matrix.  This is
 also referred to as the {\em unitary bordering} or the {\em embedding}
 problem in network theoretic literature.  It is not hard to see, and is well
 known in the 1-D case, that the problem is solvable via the use of spectral
 factorability of rational para-hermitian matrices nonnegative on the imaginary
 axis (or unit circle in the discrete case).  The lack of such results in
 multidimensional context has been a bottleneck in demonstrating
 synthesizability of bounded matrices.  In 2-D by
 using results from real algebraic geometry  \cite{S_est:hilbert, S_fact:land} and their consequences on spectral
 factorability, as stated in Theorems \ref{S_fact:T5} or \ref{discrete_spectral_factor} to follow, the problem can at least partially be solved.  
However, more specific questions of a more detailed  nature such as the minimality of the
 synthesis in terms of number of dynamic elements, or the minimal number of
 resistors (or fully absorbing ports in the discrete case) needed in the
 realization are presently not fully known.

 Our treatment will be mainly in the continuous domain.  We will follow techniques reported in \cite{S_fact:basu} for proofs of more general results reported in \cite{S_fact:kumm2}. A discrete domain counterpart is then obatined by a convenient application of double bilinear transform (\ref{S_fact:a*}).

\subsection{Continuous Domain considerations}
 \begin{lemma}\label{scatt:L1}
 Let ${\bf S}_{11}={\bf S}_{11}(\bf p)$ be a $(m \times m)$ bounded  rational
 matrix of two variables
 ${\bf p}=(p_{1},p_{2})$. Then it is always possible to decompose the
 matrix ${\bf 1}_{m} - {\bf S}_{11*}{\bf S}_{11}$ as
 \be\label{scatt:94}
 {\bf 1}_{m} - {\bf S}_{11*}{\bf S}_{11} =
 \frac{1}{g_{*}ge_{*}e}{\bf R}_{*}{\bf R},
 \ee
where $g$ is a scattering Hurwitz polynomial, $e=e(p_{1})$ is a
 univariate polynomial nonzero in Re $p_{1}>0$, and
 ${\bf R}$ is a $(2r \times m)$ polynomial matrix, where
 $r = \rank (\one_{m} - {\bf S}_{11*}{\bf S}_{11})$.
 \end{lemma}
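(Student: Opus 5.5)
Write ${\bf S}_{11}={\bf P}/g$, where $g$ is the least common denominator of the entries of ${\bf S}_{11}$ and ${\bf P}$ is a polynomial matrix. Since ${\bf S}_{11}$ is bounded, this denominator can be taken scattering Hurwitz, by the scalar discussion of Section \ref{bivariate-SP-factor} and \cite{stb:Fett2} applied entrywise. Then
\[
{\bf 1}_m-{\bf S}_{11*}{\bf S}_{11}=\frac{1}{g_*g}\,\boldsymbol\Phi,\qquad \boldsymbol\Phi=g_*g\,{\bf 1}_m-{\bf P}_*{\bf P},
\]
and $\boldsymbol\Phi$ is a polynomial matrix in ${\bf p}$. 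Two properties follow directly. First, $\boldsymbol\Phi_*=\boldsymbol\Phi$. Second, on ${\bf p}=j\boldsymbol\omega$ we have $\boldsymbol\Phi(j\boldsymbol\omega)=|g(j\boldsymbol\omega)|^2({\bf 1}_m-{\bf S}_{11}^*{\bf S}_{11})\geq 0$ wherever ${\bf S}_{11}$ is defined. The exceptional set is contained in the zero set of $g$ on the imaginary torus, and continuity of the polynomial $\boldsymbol\Phi$ extends nonnegativity to all real $\boldsymbol\omega$. Also, $\rank\boldsymbol\Phi=\rank({\bf 1}_m-{\bf S}_{11*}{\bf S}_{11})=r$.

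Next I invoke Theorem \ref{S_fact:T5}. It gives $\boldsymbol\Phi={\bf H}_*{\bf H}$ with ${\bf H}\in\mathbb{C}(p_1)[p_2]$ of size $2r\times m$, holomorphic in $\Re{\bf p}>0$, and with least common denominator a polynomial $e(p_1)$ in $p_1$ only. Set ${\bf R}=e{\bf H}$, which is a polynomial matrix. Then ${\bf H}_*{\bf H}={\bf R}_*{\bf R}/(e_*e)$, and substituting gives exactly (\ref{scatt:94}).

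It remains to check that $e(p_1)\neq 0$ in $\Re p_1>0$. Holomorphy of ${\bf H}$ in $\Re{\bf p}>0$ alone does not quite suffice, because a common factor of $e$ might cancel against all entries of ${\bf R}$ at particular $p_2$. I would therefore argue from the construction, since each denominator introduced there is chosen Hurwitz. These are $d(p_1)$ in (\ref{1.33}) via (\ref{1.32}), the spectral factor ${\bf L}$ of the univariate ${\bf D}(p_1)$ in (\ref{S_fact:86}), and the factor $f$ in (\ref{S_fact:92}) with zeros collected appropriately. The remaining ingredients, namely the $p_2$-unimodular matrices ${\bf U},{\bf V},{\bf W},{\bf F}$, the scalars $b_i$, and the matrices $\boldsymbol\Lambda_i$, contribute rational functions of $p_1$. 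Those coming from ${\bf U},{\bf V},{\bf W},b_i$ appear in products like ${\bf L}{\bf W}\boldsymbol\Gamma({\bf V}_*\otimes\one_2)$ and may introduce non-Hurwitz $p_1$-denominators. To handle this, for any such remaining denominator factor $\epsilon(p_1)$, I would absorb $\epsilon_*\epsilon$ into the outer scalar. Replace ${\bf H}$ by $(\epsilon/\delta){\bf H}$, where $\epsilon_*\epsilon=\delta_*\delta$ with $\delta$ Hurwitz. This is possible because $\epsilon_*\epsilon$ is para-even and nonnegative on $j\mathbb{R}$, so its zeros have quadrantal symmetry, as in (\ref{1.32}). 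The product ${\bf H}_*{\bf H}$ is unchanged, and the new $p_1$-denominator is Hurwitz.

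I expect this bookkeeping of the $p_1$-denominator to be the main obstacle. The algebraic identity itself is immediate once Theorem \ref{S_fact:T5} and the final clause about a univariate denominator are available. What needs care is verifying the sign-definiteness of $\epsilon_*\epsilon$ on the imaginary axis and confirming that the multiplier $\epsilon/\delta$ is an all-pass function of $p_1$, so that the rank count $2r$ is unaffected.
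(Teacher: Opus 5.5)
Your proposal is correct and follows the paper's proof essentially step for step: write $\mathbf S_{11}=\mathbf P/g$ with $g$ scattering Hurwitz, form the para-Hermitian polynomial matrix $\boldsymbol\Phi=g_*g\,\one_m-\mathbf P_*\mathbf P$, extend its nonnegativity to all of $j\mathbb R^2$ by continuity, apply Theorem~\ref{S_fact:T5}, and clear the univariate $p_1$-denominator. The extra bookkeeping you expect for $e$ is not needed, and your claim that holomorphy does not suffice is mistaken. If $e$ is the least common denominator in lowest terms and $e(p_{10})=0$ with $\Re p_{10}>0$, then some numerator is not divisible by $p_1-p_{10}$, so it is nonzero at $(p_{10},p_2)$ for some $\Re p_2>0$, which contradicts holomorphy. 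Alternatively, your all-pass device $\epsilon_*\epsilon=e_*e$ settles it directly if you apply it once to the whole denominator, with no need to trace the construction. Its nonnegativity on the axis is automatic, since $\epsilon_*\epsilon(j\omega_1)=|\epsilon(j\omega_1)|^2$.
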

 {\em Proof:}
 Then (cf. \cite[Theorem xxx]{stb:Fett2}) ${\bf S}_{11}$ can be expressed as ${\bf S}_{11}={\bf
 P}_{11}/g$, where ${\bf P}_{11}$ is a $(m \times m)$  polynomial matrix and
 $g$ is a bivariate scattering Hurwitz polynomial in ${\bf p}=(p_{1},p_{2})$.  Next, we write
 \be\label{scatt:95}
 {\bf 1}_{m} - {\bf S}_{11*}{\bf S}_{11} = \frac{\boldsymbol\Phi}{g_{*}g}, 
 \text{ where }
 \boldsymbol\Phi = gg_{*}{\bf I}_{m} - {\bf P}_{11*}{\bf P}_{11}.
 \ee
Then we have $\boldsymbol\Phi(j\boldsymbol\omega) \geq 0$ for all
 $\boldsymbol\omega$ except possibly for at most a
 finite number of points $\boldsymbol\omega$ for which $g = 0$ (cf. \cite[Theorem xxx]{stb:Fett2}).
 Hence, by continuity argument $\boldsymbol\Phi \geq 0$ for all ${\bf p}=j{\boldsymbol\omega}$.
 Furthermore, $\boldsymbol\Phi$ is obviously a para-Hermitian polynomial matrix, i.e., $\boldsymbol\Phi_{*} = \boldsymbol\Phi$.  Thus, due to Theorem \ref{S_fact:T5} $\boldsymbol\Phi$ can be  spectrally factored as
 \be\label{scatt:96}
 \boldsymbol\Phi = \frac{{\bf R}_{*}{\bf R}}{e_{*}e}
 \ee
where $e$ is a univariate polynomial in $p_{1}$, nonzero in Re $p_{1} > 0$,
 and ${\bf R}$ is a $(2r \times m)$ polynomial matrix with 
$r =$ rank $\boldsymbol\Phi= \rank ({\bf 1}_{n}-{\bf S}_{11*}{\bf S}_{11})$.
The desired proof is completed by substituting $\boldsymbol\Phi$ from  (\ref{scatt:96}) in the first equation (\ref{scatt:95}).\hfill{Q.E.D}
 \begin{theorem}\label{scatt:T6}
 Let  ${\bf S}_{11}={\bf S}_{11}(\bf p)$ be an $(m \times n)$ bounded rational matrix, which is  function of  two variables
 ${\bf p}=(p_{1},p_{2})$. Then there exists a {\em lossless} bounded rational matrix 
${\bf S}={\bf  S}(\bf p)$ of size $(p+2r) \times (p+2r)$, where $p = {\rm max}(m,n)$ such that ${\bf S}$ can be partitioned as
 \be\label{scatt:matrix}
 {\bf S} = \left[ \begin{array}{cc}
 {\bf S}_{11} & {\bf S}_{12} \\
 {\bf S}_{21} & {\bf S}_{22} \end{array} \right]
 \ee
 where $r = {\rank}\;(\one_n - {\bf S}_{11*}{\bf S}_{11})$.
 \end{theorem}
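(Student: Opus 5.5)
The plan is to follow the classical Darlington–Belevitch embedding, with the 1-D spectral factorization replaced everywhere by Theorem \ref{S_fact:T5} (in the form of Lemma \ref{scatt:L1}).

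\emph{Step 1: squaring.} If $m\neq n$, I would pad ${\bf S}_{11}$ with zero rows or columns to a $(p\times p)$ matrix $\hat{\bf S}_{11}$, $p=\max(m,n)$. This preserves boundedness, and the rank of the defect $\one-\hat{\bf S}_{11*}\hat{\bf S}_{11}$ changes only by the trivially factorable identity block. So without loss of generality ${\bf S}_{11}$ is square.

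\emph{Step 2: the first block column.} Apply Lemma \ref{scatt:L1} to write $\one_p-{\bf S}_{11*}{\bf S}_{11}={\bf S}_{21*}{\bf S}_{21}$, where
\[
{\bf S}_{21}=\frac{{\bf R}}{g e}
\]
is a $(2r\times p)$ matrix holomorphic in $\Re{\bf p}>0$, because $g$ is scattering Hurwitz and $e(p_1)\neq 0$ in $\Re p_1>0$. Then ${\bf C}=[{\bf S}_{11}^T\;{\bf S}_{21}^T]^T$ is a $(p+2r)\times p$ para-isometry, ${\bf C}_*{\bf C}=\one_p$. In the same way, factoring $\one_p-{\bf S}_{11}{\bf S}_{11*}$ gives a $(p\times 2r)$ holomorphic ${\bf S}_{12}$ with ${\bf S}_{11}{\bf S}_{11*}+{\bf S}_{12}{\bf S}_{12*}=\one_p$. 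Here I use that the normal ranks of the two defects agree, which holds because ${\bf S}_{11}$ is square.

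\emph{Step 3: completing to a paraunitary matrix.} It remains to find a $(2r\times 2r)$ block ${\bf S}_{22}$, holomorphic in $\Re{\bf p}>0$, such that ${\bf S}$ in (\ref{scatt:matrix}) satisfies ${\bf S}_*{\bf S}=\one$. The off-diagonal condition ${\bf S}_{11*}{\bf S}_{12}+{\bf S}_{21*}{\bf S}_{22}=\zero$ suggests the formula
\[
{\bf S}_{22}=-({\bf S}_{21*})^{+}{\bf S}_{11*}{\bf S}_{12},
\]
with $(\cdot)^{+}$ a rational left inverse, which exists because ${\bf S}_{21}$ has normal rank $r$ only on a suitable subspace. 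I would then check the remaining identity ${\bf S}_{12*}{\bf S}_{12}+{\bf S}_{22*}{\bf S}_{22}=\one_{2r}$ algebraically from the two factorizations, in the same spirit as the manipulations leading to (\ref{scatt:143}).

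If this direct route fails, the alternative is geometric. The matrix $\boldsymbol\Pi=\one_{p+2r}-{\bf C}{\bf C}_*$ is a para-hermitian projection that is nonnegative on $\Re{\bf p}=0$. I would factor it by Theorem \ref{S_fact:T5} and take the resulting holomorphic columns as the second block column $[{\bf S}_{12}^T\;{\bf S}_{22}^T]^T$.

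\emph{Main obstacle.} The hard step is Step 3, for two reasons. First, the factor sizes do not match: Theorem \ref{S_fact:T5} produces $2\,\mathrm{rank}$ rows, so a naive factorization of $\boldsymbol\Pi$ (rank $2r$) would give $4r$ columns rather than $2r$. Second, ${\bf S}_{22}$ must be holomorphic in $\Re{\bf p}>0$, not merely rational. To handle both, I would first try to exploit the freedom ${\bf S}_{21}\mapsto{\bf U}{\bf S}_{21}$, with ${\bf U}$ paraunitary and holomorphic, together with the $2\times 2$ block structure (\ref{S_fact:78}) from Corollary \ref{C1}. That structure makes the ``doubled'' factors automatically paraunitary in their $2\times 2$ blocks, so the complement of ${\bf C}$ can be written explicitly with exactly $2r$ columns. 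Holomorphy would then follow as in the proof of Lemma \ref{scatt:L1}: every denominator is built from $g$, $e$, and the Hurwitz factors $d(p_1)$ chosen in Corollary \ref{C1}.
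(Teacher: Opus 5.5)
\textbf{Verdict: there is a genuine gap, in Step 3.}

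Step 1 and the construction of ${\bf S}_{21}={\bf R}/(ge)$ in Step 2 match the paper. The proof, however, lives in Step 3, and your plan does not get through it.

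First, the formula ${\bf S}_{22}=-({\bf S}_{21*})^{+}{\bf S}_{11*}{\bf S}_{12}$ is not available. ${\bf S}_{21*}$ is $p\times 2r$ of normal rank $r<2r$, so it has no left inverse, and the off-diagonal condition leaves ${\bf S}_{22}$ undetermined on an $r$-dimensional kernel.

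More fundamentally, taking ${\bf S}_{12}$ from a second, independent spectral factorization of $\one_p-{\bf S}_{11}{\bf S}_{11*}$ decouples it from ${\bf S}_{21}$. Once that happens, a holomorphic ${\bf S}_{22}$ need not exist. This already fails in 1-D. Take $S_{11}=h/g$, $S_{21}=f/g$ and $S_{12}=f'/g$ with $f'f'_*=ff_*$. Paraunitarity then forces $S_{22}=-h_*f'/(gf_*)$, which in general has poles at the zeros of $f_*$ in $\Re p>0$ unless $f'$ is matched to $f$ (e.g.\ $f'=\sigma f_*$).

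Your geometric fallback fails on the size mismatch you point out yourself: $4r$ columns cannot be a para-isometry onto the $2r$-dimensional complement of ${\bf C}$. Invoking the $2\times 2$ structure (\ref{S_fact:78}) does not supply a mechanism that produces exactly $2r$ holomorphic columns.

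\textbf{The missing idea.} No second factorization is needed. The second block column can be written explicitly from ${\bf S}_{11}$ and ${\bf S}_{21}$ via the Cayley transform, and holomorphy is repaired by a scalar all-pass factor. The paper sets
\[
{\bf S}_{12}=-\frac{\alpha_*}{\alpha}(\one_p+{\bf S}_{11})(\one_p+{\bf S}_{11*})^{-1}{\bf S}_{21*},\qquad
{\bf S}_{22}=\frac{\alpha_*}{\alpha}\bigl[\one_{2r}-{\bf S}_{21}(\one_p+{\bf S}_{11*})^{-1}{\bf S}_{21*}\bigr].
\]
A direct computation using only ${\bf S}_{11*}{\bf S}_{11}+{\bf S}_{21*}{\bf S}_{21}=\one_p$ gives ${\bf S}_*{\bf S}=\one$ for every $\alpha$, with exactly $2r$ added ports.

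For holomorphy, note that $(\one+{\bf S}_{11})^{-1}=\frac12[\one+(\one-{\bf S}_{11})(\one+{\bf S}_{11})^{-1}]$ is a positive matrix. Hence its least common denominator $d$ is nonzero in $\Re{\bf p}>0$. Choosing $\alpha=dge$ makes $\alpha_*$ cancel every denominator of $(\one+{\bf S}_{11*})^{-1}{\bf S}_{21*}$, while $\alpha\neq 0$ in $\Re{\bf p}>0$. So ${\bf S}_{12}$ and ${\bf S}_{22}$ are holomorphic there.

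This requires $\det(\one+{\bf S}_{11})\not\equiv 0$. A constant rotation ${\bf S}_{11}\mapsto\theta{\bf S}_{11}$ with $|\theta|=1$, undone at the end by $\diag[\bar\theta\one,\one]$, always secures this.

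\textbf{On Step 1.} Your remark is more accurate than the paper's. For $m>n$ the padded defect is $\diag[\one_n-{\bf S}_{11*}{\bf S}_{11},\one_{m-n}]$, not a matrix with a zero lower-right block. You still need to add one step. If you use the trivial factor $\one_{m-n}$ in ${\bf S}_{21}$, the resulting $m-n$ unit columns of ${\bf S}$ and their matching unit rows can be deleted, which brings the size down to $p+2r$.
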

 {\em Proof}:
 First we assume that $m=n$, i.e., ${\bf S}_{11}$ is square.  Since ${\bf
 S}_{11}$ is a bounded matrix, it can be expressed as ${\bf S}_{11} = {\bf
 P}_{11}/g$, where $g$ is a scattering Hurwitz
 polynomial and ${\bf P}_{11}$ is a polynomial matrix of size $(m \times
 m)$.  Next, we consider the decomposition indicated in Lemma {\ref{scatt:L1}}
 to obtain $g$, $e$, and the $(2r\times m)$ matrix ${\bf R}$ associated with ${\bf S}_{11}$, and set
 ${\bf S}_{21} = {\bf R}/ge$, so that we have
\be\label{firstentry-S}
{\bf S}_{11*}{\bf S}_{11}+{\bf S}_{21*}{\bf S}_{21}=\one_m ,
\ee
and define
\be\label{scatt:*1}
 {\bf S}_{12} = -\frac{\alpha_*}{\alpha}
  [(\one_m + {\bf S}_{11})(\one_m + {\bf S}_{11*})^{-1}{\bf S}_{21*} ],
 \ee
\be\label{scatt:*2}
 {\bf S}_{22} = \frac{\alpha_*}{\alpha}
 [\one_{2r} - {\bf S}_{21}(\one_m+ {\bf S}_{11*})^{-1}{\bf S}_{21*} ],
 \ee
 where $\alpha$ is a polynomial to be specified later in course of
 the proof.  It can then be routinely verified by algebraic manipulations using (\ref{firstentry-S}), (\ref{scatt:*1}), (\ref{scatt:*2}) that 
\[
{\bf S}_{11}{\bf S}_{11*}+{\bf S}_{12}{\bf S}_{12*}=\one_m, \quad
{\bf S}_{11}{\bf S}_{11*}+{\bf S}_{22}{\bf S}_{12*}={\bf 0}, \quad
{\bf S}_{21}{\bf S}_{21*}+{\bf S}_{22}{\bf S}_{22*}=\one_{2r},
\]
which along with  (\ref{firstentry-S}) shows that the square matrix ${\bf S}$ as in (\ref{scatt:matrix}),  in turn, satisfies ${\bf S}_{*}{\bf S}={\bf S}{\bf S}_{*} = \one_{2r+m}$, 
independent of the choice of $\alpha$.

 We need to show that by proper choice of $\alpha$ it is  possible to make ${\bf S}$ holomorphic in $\Re {\bf p} > 0$. Since ${\bf S}_{11}$ is a bounded matrix,  
$(\one_m-{\bf S}_{11})(\one_m+{\bf S}_{11})^{-1}$ is a positive matrix\footnote{%
A rational matrix ${\bf Z}={\bf Z}({\bf p})$ is called positive if  ${\bf Z}$ is holomorphic in $\Re {\bf p}>0$ and 
${\bf Z}^*(j\boldsymbol\omega)+{\bf Z}(j\boldsymbol\omega)$ is nonnegative definite for real $\boldsymbol\omega$, wherever ${\bf Z}(j\boldsymbol\omega)$ is well defined. If, in addition,  the coefficients of ${\bf Z}$ are real then we may be call it a positive real matrix.
}.  
Thus, it follows from the identity
\[
 (\one_m+{\bf S}_{11})^{-1} = \frac{1}{2}[\one_m + (\one_m -{\bf S}_{11})(\one_m +{\bf S}_{11})^{- 1}]
\]
 that $(\one_m +{\bf S}_{11})^{-1}$ is also a positive matrix.
Consequently, by invoking the continuous version of a result in \cite{rob:Basu1} it follows\footnote{%
The least common denominator of entries of a positive matrix in irreducible form is an immittance Hurwitz polynomial, which is product of a scattering Hurwitz factor and simple irreducible reactance Hurwitz factors, each of which are nonzero in $\Re {\bf p} >0$. We refer to \cite{stb:Fett2} for more details.}
 that the least common denominator of the entries of $ (\one_m+{\bf S}_{11})^{-1}$ in irreducible form denoted by the polynomial $d$ is nonzero  in $\Re {\bf p} >0$. On the other hand,  the denominator of ${\bf S}_{21*}$ is equal to $g_* e_*$, where $g$ is scattering Hurwitz thus nonzero in $\Re {\bf p} >0$, and $e=e(p_1)\neq 0$ for $\Re p_1 >0$.
We now choose $\alpha=dge$, which is nonzero in $\Re {\bf p} >0$ due to the reasons just mentioned. Furthermore, clearly  $\alpha_*$ cancels the denominator of $(\one_m + {\bf S}_{11*})^{-1}{\bf S}_{21*}$. Since ${\bf S}_{11}$ and ${\bf S}_{21}$ are {\em a fortiori}  holomorphic in $\Re {\bf p} > 0$, the same property for ${\bf S}_{12}$ and ${\bf S}_{22}$ follows from (\ref{scatt:*1}) and (\ref{scatt:*2}) and the fact that $\alpha\neq 0$ in $\Re {\bf p} >0$. The proof of the present theorem is thus complete for the case $m=n$.

 Next, if $m < n$ then we consider an $(n \times n)$ matrix $\check{\bf S}_{11}$ by adding
 zero rows to ${\bf S}_{11}$.  Then 
$\one_{n} - \check{\bf S}_{11*} \check{\bf  S}_{11} = \one_{n}- {\bf S}_{11*}{\bf S}_{11} \geq 0$ in Re ${\bf p}>0$,
 i.e., $\check{\bf S}_{11}$ is a bounded function.  On the other hand, if $m>n$
 we define $\check{\bf S}_{11}$ as the $(m \times m)$ matrix by adding columns
 of zeros to ${\bf S}_{11}$.  Then 
\[
\one_{m} - \check{\bf S}_{11*}\check{\bf S}_{11} = \left[ \begin{array}{cc} \one_{n}-{\bf S}_{11*}{\bf S}_{11} & {\bf 0} \\{\bf 0} & {\bf 0} \end{array} \right] \geq 0 \text{ in } \Re{\bf p} > 0,
\]
i.e., $\check{\bf S}_{11}$ is again a bounded function.  In either case, we have $r=\rank (\one - \check{\bf S}_{11*}\check{\bf S}_{11})=\rank (\one - {\bf S}_{11*}{\bf S}_{11})$, 
and it has already been shown that the square bounded matrix $\check{\bf S}_{11}$ can be bordered up to a
 lossless bounded matrix $\check{\bf S}$ with its submatrices $\check{\bf S}_{21}$, $\check{\bf S}_{12}$, and $\check{\bf S}_{22}$ of respective sizes $(2r\times p)$, $(p\times 2r)$, and $(2r\times 2r)$.  
The form ({\ref{scatt:*2}}) can then be obtained by identifying $\check{\bf S}$ with $\bf S$, and then by repartitioning it  accordingly. \hfill{Q.E.D}

 {\flushleft{\bf Remark:}}
 Note that the above result does not imply that ${\bf S}$ is a polynomial
 matrix, nor does it imply that the ${\bf S}$ obtained via the procedure
 outlined above has the smallest possible  size for a given ${\bf S}_{11}$.  Thus,
 if the lossless bounded ${\bf S}$ is synthesized as in Section \ref{scatt:S21} to
 yield a synthesis of ${\bf S}_{11}$, the minimality of neither the total number of dynamic
 elements, i.e., the number of $p_{1}$ and $p_{2}$ type elements\footnote{%
The greatest common denominator of ${\bf S}$ in (\ref{scatt:matrix}) is a factor of $ge$, and thus {\em at most} the $\deg_1 (ge)$ of $p_1$ type elements and $\deg_2 (ge)$  of $p_2$ type elements are required in the realization of the lossless bounded matrix $\bf S$ by following a synthesis strategy outlined in Section \ref{scatt:S21}.
}
nor the number of resistors (i.e., 'fully absorbing' ports in the discrete case) created in the synthesis procedure is ensured.

 {\flushleft{\bf Remark:}}
Careful examination of  (\ref{scatt:*1}) and (\ref{scatt:*2}) shows that it would suffice in the above
 proof to choose $\alpha =  g\cdot e\cdot \det (g\one_{m} + {\bf P}_{11}) $.  But the degree of
 least common denominator of ${\bf S}$ would not necessarily be minimal in  such a case.

We now examine the situation when  $\det (g\one_m+{\bf P}_{11})=0$ for some ${\bf p} = {\bf p}_{0}$ in $\Re{\bf p}_0 >0$ in some detail. This  implies the existence of a constant vector ${\bf x}$ with $(\one_m + {\bf S}_{11}){\bf
 x} = 0$, i.e., $||{\bf S}_{11}{\bf x}|| = ||{\bf x}||$. Normalizing ${\bf x}$ we may write $||{\bf S}_{11}{\bf x}||= ||{\bf x}||=1$ for ${\bf p} = {\bf p}_{0}$.

We will now  show that under the situation indicated above the vector ${\bf y}={\bf y}({\bf p})$ defined by ${\bf y}  = {\bf S}_{11}{\bf x}$ is, in fact, a vector of constant length. For this, note that ${\bf y}$ inherits the property of holomorphy in $\Re{\bf  p}>0$ from ${\bf S}_{11}$. On the other hand, since ${\bf S}_{11}$ is a bounded function of ${\bf p}$ we have 
$1-||{\bf y(j\boldsymbol\omega)}||^{2} = {\bf x}^{*}[{\one - {\bf S}_{11}^*(j\boldsymbol\omega)}{\bf S}_{11}(j\boldsymbol\omega)]{\bf x} \geq 0$, 
i.e., $||{\bf y}(j\boldsymbol\omega)|| \leq 1$ for all real $j\boldsymbol\omega$, wherever ${\bf S}_{11}(j\boldsymbol\omega)$ is well defined. Thus, ${\bf y}({\bf p})$ is a bounded function of $\bf p$, the modulus of which cannot (\cite[Theorem xxx]{stb:Fett2}) assume the value equal to unity at any interior point of the domain of holomorphy unless it is a constant. However, since  $||{\bf y}({\bf p}_0)||=1$, and ${\bf p}_0$ is an interior point of $\Re {\bf p} >0$, it follows that ${\bf y}$ is a constant vector of unit length. To this end, we have the following result.

 \begin{proposition}\label{scatt:P5}
 Let ${\bf S}_{11}={\bf S}_{11}({\bf p})$ be a bivariate $(m \times m)$ bounded
 rational matrix such that ${\bf S}_{11}{\bf x}$ is a
 constant vector of unit length for some constant vector ${\bf x}$ of
 unit length.  Then there exists constant  unitary matrices
 ${\bf U}_{x}$ and ${\bf U}_{y}$, satisfying
 \[ 
{\bf S}_{11}={\bf U}_{y}^{*} \left[\begin{array}{cc}
 \one & {\bf 0} \\
 {\bf 0} & {\bf H} \end{array} \right] {\bf U}_{x} ,
\]
 where ${\bf H}$ is a bounded rational matrix with the further property that
 there is no constant vector  ${\bf x}$ of unit length such that ${\bf y}={\bf
 Hx}$ is a constant vector of unit length.
 \end{proposition}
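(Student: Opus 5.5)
Let ${\bf x}$ be the given constant unit vector and ${\bf y}={\bf S}_{11}{\bf x}$ the resulting constant unit vector. Complete ${\bf x}$ to an orthonormal basis of $\mathbb{C}^m$ and let ${\bf U}_x$ be the constant unitary matrix whose first row is ${\bf x}^*$; then ${\bf U}_x{\bf x}={\bf e}_1$, the first standard basis vector. Do the same for ${\bf y}$ to get ${\bf U}_y$ with ${\bf U}_y{\bf y}={\bf e}_1$. Set $\tilde{\bf S}={\bf U}_y{\bf S}_{11}{\bf U}_x^{*}$. Since ${\bf U}_x$ and ${\bf U}_y$ are constant unitaries, $\tilde{\bf S}$ is again a bounded rational matrix: it is holomorphic in $\Re{\bf p}>0$, and $\one-\tilde{\bf S}^*\tilde{\bf S}={\bf U}_x(\one-{\bf S}_{11}^*{\bf S}_{11}){\bf U}_x^*\geq 0$ on the imaginary axis. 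Moreover $\tilde{\bf S}{\bf e}_1={\bf U}_y{\bf S}_{11}{\bf x}={\bf e}_1$ identically in ${\bf p}$, so the first column of $\tilde{\bf S}$ is ${\bf e}_1$.

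Next I would show that the first row of $\tilde{\bf S}$ is also ${\bf e}_1^T$. On $\Re{\bf p}=0$, wherever $\tilde{\bf S}$ is defined, it is a contraction, so $\tilde{\bf S}^*$ is a contraction too. Hence $\|\tilde{\bf S}^*{\bf e}_1\|\le 1$. But $\langle\tilde{\bf S}^*{\bf e}_1,{\bf e}_1\rangle=\overline{\tilde s_{11}}=1$, and by Cauchy--Schwarz this forces $\tilde{\bf S}^*{\bf e}_1={\bf e}_1$. Thus the off-diagonal entries of the first row vanish on the imaginary axis. These entries are rational functions, so vanishing on the whole of $j\mathbb{R}^2$ (minus a finite set) makes them identically zero. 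We therefore get $\tilde{\bf S}=\diag[1,{\bf H}]$, which is exactly ${\bf S}_{11}={\bf U}_y^*\diag[1,{\bf H}]{\bf U}_x$. The matrix ${\bf H}$ is bounded because it is a principal submatrix of the bounded matrix $\tilde{\bf S}$.

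The remaining property of ${\bf H}$ is the main point. A single deflation step need not give it, so I would iterate. If ${\bf H}$ still has a constant unit vector ${\bf x}'$ with ${\bf H}{\bf x}'$ a constant unit vector, apply the same construction to ${\bf H}$. Its size strictly decreases at each step, so the process stops after finitely many steps. Composing the constant unitaries at each stage, embedded as $\diag[\one_k,\cdot]$, gives constant unitaries ${\bf U}_x$ and ${\bf U}_y$ with ${\bf S}_{11}={\bf U}_y^*\diag[\one_k,{\bf H}]{\bf U}_x$. Here $\one$ is an identity block of size $k\ge 1$, which is how I read the $\one$ in the statement, and the final ${\bf H}$ (possibly empty) has no such vector.

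The only delicate step is the Cauchy--Schwarz argument for the first row. It has to be applied where $\tilde{\bf S}$ is defined on the imaginary axis and then extended by the identity theorem for rational functions; the boundary points where ${\bf S}_{11}$ is undefined form a finite set, as noted in the proof of Lemma~\ref{scatt:L1}.
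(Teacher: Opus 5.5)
Your proof is correct and follows the paper's argument: the same constant unitaries built from ${\bf x}$ and ${\bf y}$, the same reduction to showing that the transformed matrix has first row and first column equal to ${\bf e}_1$, and the same finite iteration (the paper's ``repeating the above decomposition as many times as necessary''). The differences are cosmetic. You read the first column directly off ${\bf S}_{11}{\bf x}={\bf y}$, whereas the paper combines $g_{11}=1$ with the column-norm bound. You eliminate the first row via $\|\tilde{\bf S}^*\|\le 1$ and Cauchy--Schwarz, whereas the paper uses nonnegativity of the $2\times 2$ principal minor $-|g_{1i}|^2$ of $\one-{\bf G}^*{\bf G}$.
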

 {\em Proof:}
 Choose ${\bf U}_{x}$ and ${\bf U}_{y}$ to be  unitary matrices with
 respective first columns ${\bf x}$ and ${\bf y}$.  Consider then
 ${\bf G}=[g_{ij}] = {\bf U}_{y}^{*}{\bf S}_{11}{\bf U}_{x}$, which is clearly
 bounded with the further property that $g_{11}={\bf y}^{*}{\bf S}_{11}{\bf x}
 = 1$.  It also follows from boundedness of ${\bf G}$ that for almost all
 $\mbox{\boldmath $\omega$}$ we have: $\sum_{i=1}^{n}|g_{i1}(j\mbox{\boldmath
 $\omega$})|^{2} \leq 1$, which along with $g_{11} = 1$ yields that $g_{i1}
 \equiv 0$ for $i=2$ to $m$.

 Taking these into account, straightforward computation shows that
 the determinant of the $(2 \times 2)$ sub-matrix of $\one -
 {\bf G}^{*}(j\mbox{\boldmath $\omega$}){\bf G}(j\omega)$ consisting of the
 $1$-st and $i$-th  rows and columns is $-|g_{1i}(j\mbox{\boldmath $\omega$})|^{2}$, which being a principal minor, must
 be nonnegative in view of the fact that
 $\one -{\bf G}^{*}(j\mbox{\boldmath $\omega$}){\bf G}(j\mbox{\boldmath $\omega$})
 \geq 0$ for any real $\boldsymbol\omega$, whenever ${\bf G}(j\boldsymbol\omega)$ is well defined. 
Thus, $g_{1i} \equiv 0$ for $i = 2$ to $m$.  Consequently, ${\bf G}$ can be written as
\be\label{scatt:97}
 {\bf G} = \left[\begin{array}{cc}
 1 & {\bf 0} \\
 {\bf 0} & {\bf G}_{22} \end{array} \right],
 \ee
 where ${\bf G}_{22}$ is a  rational matrix of size $(m-1) \times
 (m-1)$.  Furthermore, ${\bf G}_{22}$ is a bounded matrix since ${\bf G}$ in
 (\ref{scatt:97}) is a bounded matrix.  The result of the present proposition
 is then obtained by repeating the above decomposition as many times as necessary.\hfill{Q.E.D}

Under the situation indicated above, a synthesis of ${\bf S}_{11}$  can be achieved essentially by synthesizing ${\bf H}$, which is of smaller size.

 \subsection{Discrete Domain considerations}
  In this section we first show by using Theorem \ref{scatt:T6} that a
 discrete version of the unitary dilation or embedding problem can be solved.
 This coupled with the synthesis of discrete lossless bounded matrices
 developed in Section \ref{scatt:S21} then provides a complete synthesis of an
 arbitrary discrete bounded matrix.  A few other remarks on the simplification
 in realization when conditions of Proposition \ref{scatt:P5} hold are also
 made.
 \begin{theorem}\label{scatt:TDE}
 Let $\boldsymbol\Sigma =\boldsymbol\Sigma ({\bf z})$ be a $(m\times m)$ bivariate discrete bounded rational matrix
  in the variables ${\bf z}=(z_{1},z_{2})$.  Then there exists a lossless discrete
  bounded rational matrix $\mbox{\boldmath $\Sigma$}$ of size $(p+2r)\times (p+2r)$, $p = {\rm max}(m,n)$,  such that
 $\mbox{\boldmath $\Sigma$}$ can be partitioned as
 \be\label{scatt:111a}
 \mbox{\boldmath $\Sigma$} = \left[ \begin{array}{cc}
 \mbox{\boldmath $\Sigma$}_{11} & \mbox{\boldmath $\Sigma$}_{12} \\
 \mbox{\boldmath $\Sigma$}_{21} & \mbox{\boldmath $\Sigma$}_{22}
 \end{array} \right],
 \ee
where $r = {\rank}\;(\one_n - \boldsymbol\Sigma_{11*}\boldsymbol\Sigma_{11})$.
 \end{theorem}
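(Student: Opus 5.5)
The plan is to reduce Theorem \ref{scatt:TDE} to its continuous counterpart, Theorem \ref{scatt:T6}, using the double bilinear transformation (\ref{S_fact:a*}). This is the same device that turned Theorem \ref{S_fact:T5} into Theorem \ref{discrete_spectral_factor}.

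First, given the discrete bounded matrix $\boldsymbol\Sigma_{11}(\z)$, I define ${\bf S}_{11}({\bf p})=\boldsymbol\Sigma_{11}(z_1(p_1),z_2(p_2))$ with $z_i=(1-p_i)/(1+p_i)$. The map $p_i\mapsto z_i$ sends $\Re p_i>0$ onto $|z_i|<1$ and the imaginary axis onto the unit circle, except for the single point $z_i=-1$, which corresponds to $p_i=\infty$. Therefore holomorphy of $\boldsymbol\Sigma_{11}$ in the open unit bidisc transfers to holomorphy of ${\bf S}_{11}$ in $\Re{\bf p}>0$. Likewise, $\one-\boldsymbol\Sigma_{11}^*\boldsymbol\Sigma_{11}\geq 0$ on the distinguished boundary becomes $\one-{\bf S}_{11}^*(j\boldsymbol\omega){\bf S}_{11}(j\boldsymbol\omega)\geq 0$ wherever the latter is defined. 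Hence ${\bf S}_{11}$ is a bivariate bounded rational matrix of the same size.

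I also need to check the para-conjugation identity. Since $z_i^{-1}$ corresponds to $-p_i$ and conjugation of coefficients is preserved, $\tilde{\boldsymbol\Sigma}_{11}$ corresponds to ${\bf S}_{11*}$. It follows that the normal rank $r$ of $\one-\tilde{\boldsymbol\Sigma}_{11}\boldsymbol\Sigma_{11}$ equals that of $\one-{\bf S}_{11*}{\bf S}_{11}$. This holds because the substitution is a field isomorphism of $\mathbb{C}(z_1,z_2)$ onto $\mathbb{C}(p_1,p_2)$, so ranks over the rational function field are preserved.

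Next, I apply Theorem \ref{scatt:T6} to ${\bf S}_{11}$. This gives a lossless bounded ${\bf S}({\bf p})$ of size $(p+2r)$ with ${\bf S}_{11}$ as its leading block, satisfying ${\bf S}_*{\bf S}={\bf S}{\bf S}_*=\one$ and holomorphic in $\Re{\bf p}>0$.

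I then pull back by the inverse substitution $p_i=(1-z_i)/(1+z_i)$ to obtain $\boldsymbol\Sigma(\z)$. Its leading block is the prescribed $\boldsymbol\Sigma_{11}$, it satisfies $\tilde{\boldsymbol\Sigma}\boldsymbol\Sigma=\boldsymbol\Sigma\tilde{\boldsymbol\Sigma}=\one$, and it is holomorphic in $|\z|<1$. Boundedness on the distinguished boundary follows from unitarity there. The non-square case $m\neq n$ is handled exactly as in Theorem \ref{scatt:T6}, by padding with zero rows or columns before transforming.

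The only delicate point is holomorphy. Poles of ${\bf S}$ can lie on the boundary, or at infinity in $p_i$, which is the image of $z_i=-1$, for instance through the factor $e(p_1)$ or $\alpha$. I must make sure none of these become poles inside the open bidisc. Since $|z_i|<1$ corresponds bijectively to $\Re p_i>0$ with finite $p_i$, any singularity of $\boldsymbol\Sigma$ in $|\z|<1$ would produce a singularity of ${\bf S}$ in $\Re{\bf p}>0$, which is excluded. So this step is a direct check. As the Remark after Theorem \ref{discrete_spectral_factor} notes, the resulting denominators may contain factors of the form $(1+z_i)$; these are harmless for holomorphy in the open bidisc.
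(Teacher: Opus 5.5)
Your proposal is correct and follows the paper's own proof. You transform $\boldsymbol\Sigma_{11}$ by the double bilinear map (\ref{S_fact:a*}) into a continuous bounded ${\bf S}_{11}$, apply Theorem \ref{scatt:T6}, and pull the lossless dilation ${\bf S}$ back; your checks (discrete versus continuous paraconjugation, rank invariance under the field isomorphism, and holomorphy via the biholomorphism of $\Re{\bf p}>0$ onto $|{\bf z}|<1$) fill in what the paper calls trivial, and your closing worry about $(1+z_i)$ denominator factors is unnecessary but harmless, since a bounded $\boldsymbol\Sigma$ cannot keep such a factor in irreducible form without blowing up on the distinguished boundary.
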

 {\em Proof:}
 We only provide a sketch of the proof.  Consider the action of the double bilinear transformation (\ref{S_fact:a*})
  on $\mbox{\boldmath $\Sigma$}_{11}$, which produces the bounded 
 rational matrix ${\bf S}_{11}$ as functions of the variables $p_{1}$, $p_{2}$, i.e., we have
 \be\label{scatt:113a}
 {\bf S}_{11} = \left[ \mbox{\boldmath $\Sigma$}_{11} \right]_{z_{i} =
 \frac{1-p_{i}}{1+p_{i}}}; \mbox{\boldmath $\Sigma$} = \left[ {\bf S}_{11}
 \right]_{p_{i} = \frac{1-z_{i}}{1+z_{i}}}; \;\; i=1,2
 \ee
 Invoking Theorem \ref{scatt:T6} we obtain the lossless bounded (real)
 rational matrix ${\bf S}$ as in \ref{scatt:matrix}.  Now consider the action of the
 inverse transform to yield
 \be\label{scatt:114a}
 \mbox{\boldmath $\Sigma$} = [{\bf S}]_{p_{i}=\frac{1-z_{i}}{1+z_{i}}}; \;\; i
 = 1,2
 \ee
 We then correspondingly also have
 \[ 
\mbox{\boldmath $\Sigma$}_{11}=[{\bf S}_{11}]_{p_{i}=\frac{1-
 z_{i}}{1+z_{i}}}; \;\; i= 1,2
\]
 Since it can be trivially shown from the corresponding property of
 ${\bf S}$ that $\mbox{\boldmath $\Sigma$}$ is a discrete lossless bounded
  rational matrix.\hfill{Q.E.D}

 To indicate exactly how the unitary dilation or embedding solves the
 problem of dissipative synthesis from a solution to lossless synthesis,
 consider the bounded $\mbox{\boldmath $\Sigma$}_{11}$ of size $(m \times m)$
 being bordered up into lossless bounded $\mbox{\boldmath $\Sigma$}$ of size
 $(m + r) \times (m + r)$ as in (\ref{scatt:111a}).  Then a synthesis of
 $\mbox{\boldmath $\Sigma$}$ is available according to the algorithm of
 Section \ref{scatt:S21}.  Let ${\bf a}_{1},{\bf a}_{2}$ be the incident
 signals (or waves in the continuous case) on the first $m$ and last $r$ ports
 of this realization and ${\bf b}_{1},{\bf b}_{2}$ be the outgoing signals (or
 reflected waves in the continuous case) from the first $m$ and last $r$ ports
 of the realization respectively.  Then it follows from (\ref{scatt:111a})
 that
 \[ 
\left[\begin{array}{c} {\bf b}_{1} \\ {\bf b}_{2} \end{array} \right] =
 \left[\begin{array}{cc}
 \mbox{\boldmath $\Sigma$}_{11} & \mbox{\boldmath $\Sigma$}_{12} \\
 \mbox{\boldmath $\Sigma$}_{21} & \mbox{\boldmath $\Sigma$}_{21} 
\end{array} \right]
 \left[\begin{array}{c} {\bf a}_{1} \\ {\bf a}_{2} \end{array} \right]. 
\]
 Now, if the last $r$ ports of the realization of $\boldsymbol\Sigma$ are terminated in fully
 absorbing elements, i.e.,  if ${\bf a}_{2} = {\bf 0}$ then we would have ${\bf b}_{1} = \boldsymbol\Sigma_{11}{\bf a}_{1}$.  
This shows that when the last $r$ ports of $\boldsymbol\Sigma$ is 
  terminated by fully absorbing elements, the transfer function at the first $m$ pairs of terminals, i.e., $m$ ports of  $\boldsymbol\Sigma$, 
is exactly $\boldsymbol\Sigma_{11}$, thus providing a realization for $\boldsymbol\Sigma_{11}$, assuming that a realization of $\boldsymbol\Sigma$ is available. 
 
 Let $\mbox{\boldmath $\Sigma$}_{11}$ be such that there exists a constant
 vector ${\bf x}$ of unit length, for which $\mbox{\boldmath
 $\Sigma$}_{11}{\bf x}$ is a constant vector of unit length.  Then by a
 discrete version of Proposition \ref{scatt:P5} the synthesis problem can be
 simplified as follows.  First note that a synthesis of $\diag [\one\;{\bf
 H}]$ as in Proposition \ref{scatt:P5} is easily achieved if a synthesis of
 ${\bf H}$ is available.  To see this, assume that $\one$ is of size $m$,
 whereas ${\bf H}$ is of size $p$.  Then if ${\bf a}_{1},{\bf a}_{2}$ and
 ${\bf b}_{1},{\bf b}_{2}$ are as defined in the last paragraph, but now for
 $\diag [\one \;{\bf H}]$ instead of $\boldsymbol\Sigma$,  we have ${\bf a}_{1} = {\bf
 b}_{1}$, ${\bf a}_{2}={\bf H}{\bf b}_{2}$.  This shows that a synthesis of
 $\diag [\one \;{\bf H}]$ is, in fact, one in which the last $r$ ports realize
 the transfer function ${\bf H}$  of smaller size, and the first $m$ ports
 consist of direct connections.  Finally, pre- and post-multiplication by {\em
 constant} unitary matrices ${\bf U}_{x}$ and ${\bf U}_{y}$ are accomplished
 by cascading the realization of $\diag [\one \; {\bf H}]$ with the realizations of constant
 unitary ${\bf U}_{x}$ and ${\bf U}_{y}$ in an appropriate order.  Since
 realization of constant unitary ${\bf U}_{x}$ and ${\bf U}_{y}$ are readily
 available (cf. literature on orthogonal filters \cite{dar:Dewi1}), this shows that the essential problem in realizing
 $\mbox{\boldmath $\Sigma$}_{11}$ is that of ${\bf H}$,  which is of smaller size, and is 
 lossless bounded with the further property that there is no constant vector
 ${\bf x}$ such that ${\bf Hx}$ is constant with $||{\bf Hx}|| = ||{\bf x}|| =
 1$.

 \section{2-D bounded real lemma}\label{scatt:S23}
 We next consider the state space characterization of passivity
 and losslessness of 2-D systems. While our developments are largely motivated
 by passive synthesis, in view of its importance  in various areas of
 system theory, 2-D version of such a result is believed to be of independent
 interest.  For this, we consider the Roesser 
 model\footnote{%
Alternate models, e.g., the behavioral models advanced by Jan Willems \cite{pillai} could be considered in this context.
} 
 for our 2-D state space considerations \cite{fb:bosebook}. 
 In relating several apparently disparate notions, the 1-D Kalman-Popov-Yakubovitch (KPY) 
 lemma, otherwise known as the positive real lemma, which provides a characterization of the property of
 dissipativeness of arbitrary (minimal) realization of impedance-like transfer functions has proven to be
 pivotal.  Equivalently, there also exits the so called bounded real lemma which provides a direct characterization of state space realization of corresponding passive scattering function. While in 1-D, the bounded real  lemma can be derived via techniques akin to those
 known in linear quadratic optimal control theory, it can be alternatively
 viewed as a combined consequence of synthesizability of  positive (real)
 transfer functions and spectral factorability of para-hermitian positive
 definite transfer matrices.  Note that
 in the case of scalar transfer functions this last result is a reformulation
 of the fact that positive polynomials can be expressed as the sum of squares
 of two real polynomials.  Since this latter fact does not extend to 2-D in
 general, our starting point has been the sum of squares representation of positive 
 polynomials originating in the work of Hilbert \cite{S_est:hilbert} and Landau \cite{S_fact:land}.  
 As seen in the Section \ref{S22}, when interpreted as a spectral
 factorability type result, it allows us to embed an arbitrary 
 bounded  transfer function matrix into a lossless  bounded  matrix.  This key observation along with synthesizability of
 lossless bounded 2-D transfer function matrices  then
 demonstrate the synthesizability of arbitrary positive of bounded 
 functions.  Thus, since a passive synthesis automatically provides a dissipative
 realization, it in turn leads the way to a {\em weak} form of 2-D KPY lemma.

 It may be noted that the 2-D state space theory differs from 1-D in a number of ways, e.g., it is well known that the local states such as those in 2-D Roesser's state  space model  do not contain the full
 information regarding the complete history of the system.   More importantly,  in 1-D, a minimal passive synthesis along with the {\em state space isomorphism}  result (i.e., any two minimal realizations of an 1-D system are related by a similarity transformation)
 directly yields the  the bounded real lemma.  Although, in our 2-D context we do not necessarily
 have a minimal passive synthesis via the procedure outlined here, and an appropriate state space isomorphism result is not available \cite{rob:Basu3,kung}, a weak 2-D version of the bounded real lemma can indeed be proven.

Once again we will treat the discrete domain version of the problem, while unless otherwise stated the continuous domain  versions will follow from an obvious analog.
 \begin{lemma}\label{scatt:L2}
 {\bf [(weak) 2-D Bounded Real (BR) Lemma]}\\
 Let ${\bf H}={\bf H}(z_{1},z_{2})$ be a rational matrix of size $(m \times
 n)$,  which, in addition, is discrete bounded. Then there exists integers
 $p$, $q$ and matrices ${\bf A}$, ${\bf B}$, ${\bf C}$ and ${\bf D}$ of
 appropriate sizes such that
 \be\label{scatt:100}
 {\bf H}(z_{1},z_{2}) = {\bf A} + {\bf B}(z_{1}^{-1}\one_{p} \oplus  z_{2}^{-1}\one_{q} - {\bf D})^{-1}{\bf C},
 \ee
 along with
 \be\label{scatt:101}
 \one  - {\bf T}^{*}{\bf T} \geq 0,
 \ee
 where
 \be\label{scatt:102}
 {\bf T}=
 \left[ \begin{array}{cc}
 {\bf A} & {\bf B} \\
 {\bf C} & {\bf D}
 \end{array} \right].
 \ee
 \end{lemma}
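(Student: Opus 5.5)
The plan is to obtain the weak bounded real lemma directly from passive synthesis: first embed ${\bf H}$ in a lossless discrete bounded matrix, then realize that matrix with a constant unitary coupling network terminated by delays and absorbing ports. The realization equations then give a state space description whose system matrix is a constant contraction.

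First reduce to the square case. If $m\neq n$, pad ${\bf H}$ with zero rows or zero columns, exactly as in the proof of Theorem \ref{scatt:T6}; padding preserves discrete boundedness, and a realization of the padded matrix restricts to one for ${\bf H}$ by deleting rows of ${\bf A},{\bf B}$ or columns of ${\bf A},{\bf C}$. Deleting rows or columns of a contraction leaves a contraction, so (\ref{scatt:101}) survives. Next, invoke Theorem \ref{scatt:TDE} to border ${\bf H}$ into a discrete lossless bounded matrix
\[
\boldsymbol\Sigma=\left[\begin{array}{cc}{\bf H} & \boldsymbol\Sigma_{12}\\ \boldsymbol\Sigma_{21} & \boldsymbol\Sigma_{22}\end{array}\right].
\]

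Now apply the synthesis of Section \ref{scatt:S21} to $\boldsymbol\Sigma$. Steps one to three give
\[
\boldsymbol\Sigma={\bf B}_{11}+{\bf B}_{12}^{(1)}(z_2^{-1}\one_r-{\bf B}_{22}^{(1)})^{-1}{\bf B}_{21}^{(1)},
\]
where ${\bf S}_f$ is a univariate discrete lossless bounded matrix in $z_1$. By the classical 1-D lossless synthesis \cite{Dewilde_and_Belevitch}, realize ${\bf S}_f(z_1)$ as a constant unitary matrix ${\bf U}_1$ terminated by $p$ delays of $z_1$-type: ${\bf S}_f={\bf U}_{11}+{\bf U}_{12}(z_1^{-1}\one_p-{\bf U}_{22})^{-1}{\bf U}_{21}$. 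Substituting this into the $z_2$-termination, and viewing the whole network as one constant interconnection ${\bf U}$ of the external ports, the $p$ ports of type $z_1$ and the $q=r$ ports of type $z_2$, yields the Roesser form
\[
\boldsymbol\Sigma={\bf A}'+{\bf B}'(z_1^{-1}\one_p\oplus z_2^{-1}\one_q-{\bf D})^{-1}{\bf C}'.
\]
Here ${\bf T}'=\left[\begin{array}{cc}{\bf A}'&{\bf B}'\\ {\bf C}'&{\bf D}\end{array}\right]$ is a constant unitary matrix up to port permutation. This follows because a Redheffer star-product, i.e. a lossless interconnection of constant unitary blocks, is unitary. I would verify this by the standard energy balance $\|{\bf b}\|^2=\|{\bf a}\|^2$ at a generic frequency, or by an algebraic elimination of the internal $z_2$-port variables between ${\bf U}_1$ and the $z_2$ loop.

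Finally, terminating the extra $2r$ ports of $\boldsymbol\Sigma$ in absorbing elements (${\bf a}_2=\zero$) extracts the ${\bf H}$ block. Take ${\bf A}$ as the $(1,1)$ block of ${\bf A}'$, ${\bf B}$ as the first block rows of ${\bf B}'$, ${\bf C}$ as the first block columns of ${\bf C}'$, and keep ${\bf D}$. The resulting ${\bf T}$ is a submatrix of the unitary ${\bf T}'$, obtained by deleting rows and columns, and hence $\one-{\bf T}^*{\bf T}\geq 0$.

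The main obstacle is the middle step: ensuring that composing the $z_1$-realization of ${\bf S}_f$ with the $z_2$-termination gives a \emph{constant} unitary ${\bf T}'$ in exact Roesser form. This requires the relevant linear fractional transformations to be well-posed, i.e. the $(\cdot)^{-1}$ in (\ref{scatt:100}) must exist as a rational matrix. I would handle this by showing that $\det(\one - {\bf U}_{22}z_1)\neq 0$ generically, which holds because ${\bf U}_{22}$ is a contraction. The lemma is called weak because neither minimality nor state space isomorphism is asserted; no such claim is attempted.
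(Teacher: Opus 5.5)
Your proposal is correct and follows essentially the paper's route: embed ${\bf H}$ into a discrete lossless bounded matrix via Theorem \ref{scatt:TDE}, then realize that matrix by the lossless synthesis of Section \ref{scatt:S21} as a constant unitary multiport terminated in $z_1$- and $z_2$-type delays, and finally take ${\bf T}$ as the submatrix left after deleting the rows and columns of the extra ports, so that $\one-{\bf T}^*{\bf T}\geq 0$. Your added details (padding to the square case, and composing a unitary realization of ${\bf S}_f(z_1)$ with the $z_2$ terminations) spell out what the paper compresses into one sentence; no star-product argument is needed there, because the combined coupling network is just the single constant unitary realizing ${\bf S}_f$ with its ports permuted.
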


 {\flushleft{\bf Remark:}}
 Note that Lemma {\ref{scatt:L2}} essentially states that an {\em internally passive} 2-D
 Roesser's state space realization for the transfer function ${\bf
 H}(z_{1},z_{2})$ can be obtained. If ${\bf B}$, ${\bf C}$, ${\bf D}$ are
 partitioned as
 \be\label{scatt:103}
 {\bf D}=
 \left[ \begin{array}{cc}
 {\bf D}_{11} & {\bf D}_{12} \\
 {\bf D}_{21} & {\bf D}_{22}
 \end{array} \right],
\quad
 {\bf B}=
 \left[ \begin{array}{cc}
 {\bf B}_{1} & {\bf B}_{2}
 \end{array} \right],
 \quad
{\bf C}=
 \left[ \begin{array}{c}
 {\bf C}_{1} \\
 {\bf C}_{2}
 \end{array} \right],
 \ee
where ${\bf D}_{11}$, ${\bf D}_{22}$ are of respective sizes $(p\times p)$
 and $(q \times q)$; ${\bf B}_{1}$, ${\bf B}_{2}$ has respectively $p$ and $q$
 columns; ${\bf C}_{1}$, ${\bf C}_{2}$ has respectively $p$ and $q$ rows, then
 ${\bf H}(z_{1},z_{2})$ can be viewed as the transfer function between the
 input vector ${\bf u}(m,n)$ and the output vector ${\bf y}(m,n)$ in the
 Roesser's state space model

 \be\label{scatt:104}
 \left[ \begin{array}{c}
 {\bf x}_{h}(i+1,j) \\ {\bf x}_{v}(i,j+1)
 \end{array} \right]
 =
 \left[ \begin{array}{cc}
 {\bf D}_{11} & {\bf D}_{12} \\ {\bf D}_{21} & {\bf D}_{22}
 \end{array} \right]
 \left[ \begin{array}{c}
 {\bf x}_{h}(i,j) \\ {\bf x}_{v}(i,j)
 \end{array} \right] +
 \left[ \begin{array}{c}
 {\bf C}_{1} \\ {\bf C}_{2}
 \end{array} \right] {\bf u}(i,j),
 \ee
 \be\label{scatt:105}
 {\bf y}(i,j)=
 \left[ \begin{array}{cc}
 {\bf B}_{1} & {\bf B}_{2}
 \end{array} \right]
 \left[ \begin{array}{c}
 {\bf x}_{h}(i,j) \\ {\bf x}_{v}(i,j)
 \end{array} \right] + {\bf Au}(i,j).
 \ee

 {\em Proof of Lemma {\ref{scatt:L2}}}:
 It follows from 2-D discrete embedding Theorem (cf. Theorem \ref{scatt:TDE})
 that the $(m \times n)$ discrete bounded rational matrix  ${\bf H}$ can be embedded in a 2-D discrete {\em lossless} bounded rational matrix ${\bf G}$ of size, say $(\ell \times \ell)$ as in (\ref{scatt:115}), where $\ell=\max(m,n) +2\rank (\one-{\bf H}_*{\bf H})$. We thus have the situation depicted in Figure \ref{fig:BR-lemma}, in  which
\begin{figure}[h]
    \centering
    \includegraphics[width=15cm]{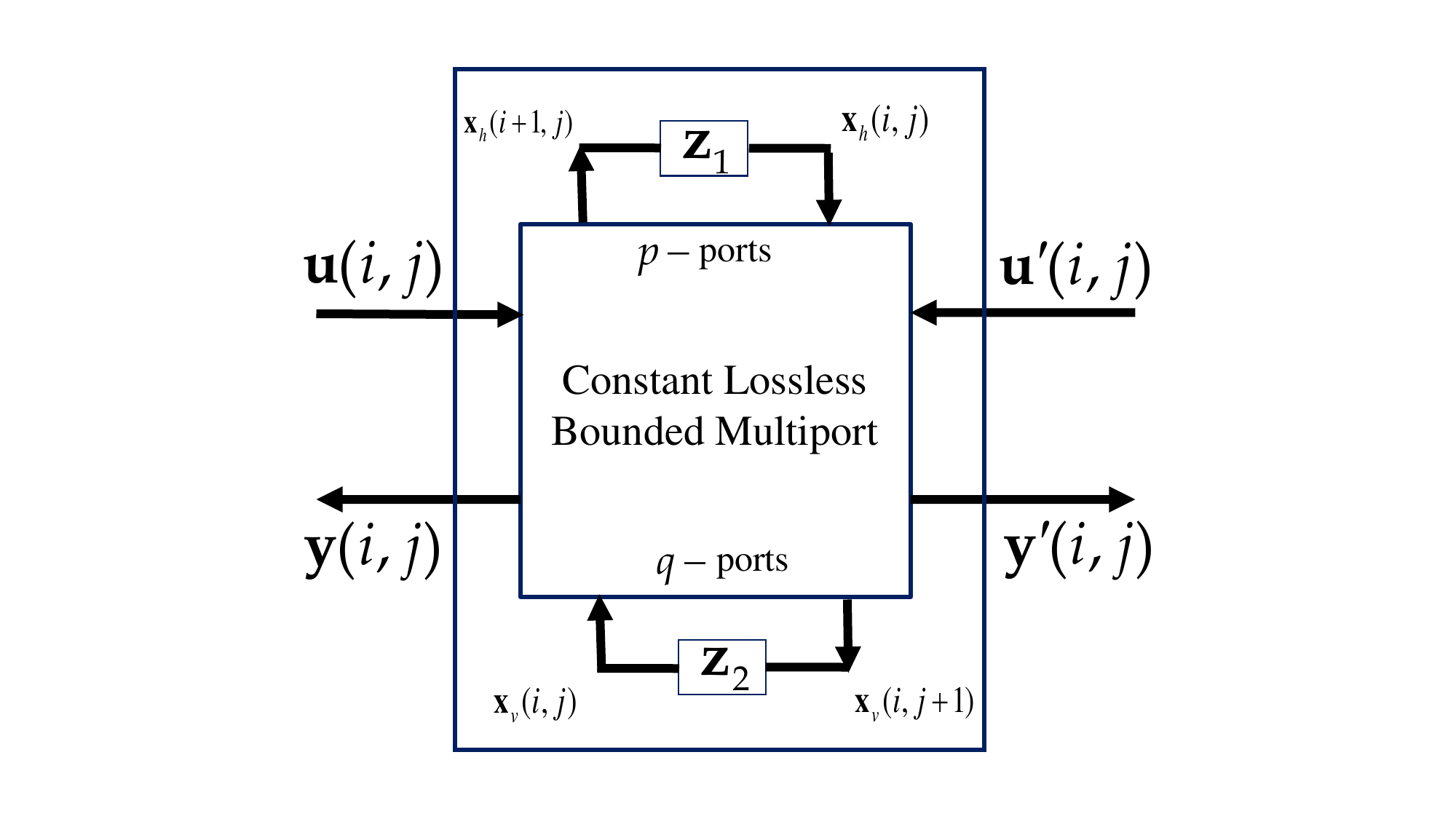}
    \caption{Proof of bounded real lemma. Input-output variables ${\bf u}$, ${\bf y}$, ${\bf u}'$, ${\bf y}'$ constitute $\ell$ ports.}
    \label{fig:BR-lemma}
\end{figure}
 \be\label{scatt:115}
 {\bf G}= 
\left[ \begin{array}{cc}
 {\bf H} & {\bf G}_{12} \\ {\bf G}_{21} & {\bf G}_{22}
 \end{array} \right],
\quad
\left[ \begin{array}{c} {\bf Y} \\ {\bf Y}' \end{array} \right] =
 {\bf G} \left[ \begin{array}{c} {\bf U} \\ {\bf U}' \end{array} \right],
 \ee
where ${\bf U}$, ${\bf U}'$, ${\bf Y}$, ${\bf Y}'$ are 2-D $z$-transforms of input and output variables ${\bf u}$, ${\bf u}'$, ${\bf y}$, ${\bf y}'$ respective sizes $n \times 1$, $(\ell-n) \times 1$, $m \times 1$, and $(\ell -m) \times 1$. Since we have ${\bf y}|_{{\bf u}'=0}= {\bf H}{\bf u}$, an input-output realization of $\bf H$ can thus be obtained from an input-output realization of $\bf G$ by setting ${\bf u}' = 0$.

 Now, it is known from our discussions in Section \ref{scatt:S21} that it is possible to obtain
 an internally  passive (in fact, minimal) synthesis for the 2-D discrete lossless bounded
 matrix ${\bf G}$ . More specifically, one way of
 viewing the synthesis procedure is to extract $p$ of $z_{1}$ type delays and
 $q$ of $z_{2}$ delays in such a way that we are left with a constant lossless
 bounded multiport with as many as $(p+q+\ell)$ ports. Let the transfer
 function of the constant lossless bounded $(p+q+\ell)$-port in  Figure \ref{fig:BR-lemma} be given by  $\bar{\bf T}$, i.e., 
 \be\label{scatt:117}
 \left[ \begin{array}{c} {\bf y}(i,j) \\{\bf x}_{h}(i+1,j) \\{\bf x}_{v}(i,j+1) \\ {\bf y}'(i,j) \end{array} \right] = \bar{\bf T}
 \left[ \begin{array}{c} {\bf u}(i,j) \\ {\bf x}_{h}(i,j) \\ {\bf x}_{v}(i,j) \\ {\bf u}'(i,j) \end{array} \right]
\ee
 and partition $\bar{\bf T}$ as
 \be\label{scatt:118}
 \bar{\bf T}=\left[ \begin{array}{ccc} {\bf A} & {\bf B} & {\bf T}_{13} \\
 {\bf C} & {\bf D} & {\bf T}_{23} \\
 {\bf T}_{31} & {\bf T}_{32} & {\bf T}_{33}
  \end{array} \right]
=
\left[
\begin{array}{c|c}
{\bf T} & \begin{array}{c} {\bf T}_{13}  \\  {\bf T}_{23} \end{array} \\
\hline
\begin{array}{cc} {\bf T}_{31} & {\bf T}_{32} \end{array} & {\bf T}_{33}
\end{array}
\right],
\ee
 where  ${\bf T}$ is as in ({\ref{scatt:102}}). Since $\bar{\bf T}$ is lossless bounded we
 have $\bar{\bf T}^{*}\bar{\bf T}={\bf I}$, which in view of
 ({\ref{scatt:118}}) yields 

 \be\label{scatt:119}
 {\bf I} - {\bf T}^{*}{\bf T}  \geq 0.
 \ee
 Clearly, since in Figure \ref{fig:BR-lemma}, ${\bf x}_{h}(i,j)$ and ${\bf x}_{v}(i,j)$ are
 outputs of $z_{1}$ and $z_{2}$  type delays respectively, they are valid
 state variables. Furthermore, by setting ${\bf u}'(i,j)=0$ it can be verified
 from (\ref{scatt:117}) that ${\bf T}$ obtained as above  indeed
 corresponds to Roesser's state  space model ({\ref{scatt:104}}),
 ({\ref{scatt:105}}). It then is well known and not difficult to see that the transfer function ${\bf
 H}$ between ${\bf u}(i,j)$ and ${\bf y}(i,j)$ is indeed given by the formula
 ({\ref{scatt:100}}), thus completing the proof of the present
 lemma.\hfill{Q.E.D}

{\flushleft{\bf Remark:}}
Next, for arbitrary nonsingular matrices  ${\bf Q}_1$ and ${\bf Q}_2$  of respective sizes $(p\times p)$ and $(q\times q)$ we consider the block similarity transform defined via ${\bf Q}={\bf Q}_{1}\oplus {\bf Q}_{2}$,  and 
\[
\check{\bf A}={\bf A},\quad
\check{\bf B}={\bf B}{\bf Q},\quad
\check{\bf C}={\bf Q}^{-1}{\bf C},\quad
\check{\bf D}={\bf Q}^{-1}{\bf D}{\bf Q}.
\]
Then $\check{\bf A}$, $\check{\bf B}$, $\check{\bf C}$ and $\check{\bf D}$, properly partitioned as in (\ref{scatt:103}), provide an alternative Roesser state space realization of the 2-D transfer function ${\bf H}$, because the expression (\ref{scatt:100}) remains valid with ${\bf A}$, ${\bf B}$, ${\bf C}$, ${\bf D}$ respectively replaced by $\check{\bf A}$, $\check{\bf B}$, $\check{\bf C}$ and $\check{\bf D}$. However, this latter realization may not be {\em internally passive}. Corresponding to  (\ref{scatt:101}) and  (\ref{scatt:102}) we then have
\be\label{general-BR}
{\bf G}  - \check{\bf T}^* {\bf G}\check{\bf T} \geq 0,
\ee
where
\[
\check{\bf T}={\check{\bf Q}}^{-1}{\bf T}\check{\bf Q}, 
\quad 
{\bf G} =\check{\bf Q}^*\check{\bf Q}\geq 0,
\text{ and }
\check{\bf Q}=\diag[\one,\; {\bf Q}].
\]
Expression (\ref{general-BR}) may be viewed as a generalized form of  expression (\ref{scatt:101}) valid for broader class of  realizations that are not necessarily internally passive. In 1-D, due to the state space isomorphism result any two minimal  realizations are related by a similarity transform, and thus in particular to  a minimal  internally passive realization. In 2-D, however, two arbitrary minimal realizations are not necessarily related by a block diagonal similarity transform \cite{kung} even when only minimal passive realizations are being considered \cite[Corollary 6.1]{rob:Basu3}. Equations ({\ref{scatt:101}}) and
 ({\ref{scatt:102}}) together can be viewed as a {\em weak} form of a 2-D bounded
 real lemma for the transfer function ${\bf H}$. It is weak in that it applies
 only to the specific realization obtained in Lemma {\ref{scatt:L2}}. The
 lack of  a 2-D state space isomorphism result \cite{rob:Basu3,kung}
prevents further generalization to arbitrary minimal realizations in 2-D.

{\flushleft{\bf Remark:}}
 The matrix inequality in ({\ref{scatt:101}}) is equivalent to the existence
 of two further matrices ${\bf L}$ and ${\bf W}$ such that
 \be\label{scatt:106}
 {\bf 1} - {\bf T}^{*}{\bf T} =
 \left[ \begin{array}{c} {\bf L}^{*} \\ {\bf W}^{*} \end{array} \right]
 \left[ \begin{array}{cc} {\bf L} & {\bf W} \end{array} \right],
 \ee
where ${\bf L}$ is of size $(r \times \ell)$ and ${\bf W}$ is of size $(r
 \times (p+q))$ for some $r$.  Note further that
 ({\ref{scatt:101}}), ({\ref{scatt:102}})  and (\ref{scatt:106}) combined together can be written as
 \be\label{scatt:107}
 \one_{n} - {\bf A}^{*}{\bf A} - {\bf C}^{*}{\bf C} = {\bf L}^{*}{\bf L},
 \ee
 \be\label{scatt:108}
 -{\bf A}^{*}{\bf B} - {\bf C}^{*}{\bf D} = {\bf L}^{*}{\bf W},
 \ee
 \be\label{scatt:109}
 \one_{p+q} - {\bf B}^{*}{\bf B} - {\bf D}^{*}{\bf D} = {\bf W}^{*}{\bf W}.
 \ee
We then have the following result.
\begin{corollary}\label{scatt:C2}
 Let ${\bf H}={\bf H}(z_{1},z_{2})$ be a (real) rational matrix of size $(m
 \times n)$, which, in addition, is discrete bounded. Let ${\bf A}$, ${\bf
 B}$, ${\bf C}$, ${\bf D}$ as in ({\ref{scatt:102}}) be a passive realization
 of ${\bf H}$. Let ${\bf L}$ and ${\bf W}$ be defined as in
 ({\ref{scatt:106}}) or equivalently, in ({\ref{scatt:107}}),
 ({\ref{scatt:108}}), ({\ref{scatt:109}}). Then we have
\be\label{scatt:110}
 \one -\tilde{{\bf H}}{\bf H} = \tilde{{\bf P}}{\bf P},
 \ee
 where ${\bf P}={\bf P}(z_{1},z_{2})$ is the transfer function of the
 Roesser's state space model given by
 \be\label{scatt:111}
 \left[ \begin{array}{c}
 {\bf x}_{h}(i+1,j) \\ {\bf x}_{v}(i,j+1)
 \end{array} \right]
 =
 \left[ \begin{array}{cc}
 {\bf D}_{11} & {\bf D}_{12} \\ {\bf D}_{21} & {\bf D}_{22}
 \end{array} \right]
 \left[ \begin{array}{c}
 {\bf x}_{h}(i,j) \\ {\bf x}_{v}(i,j)
 \end{array} \right] +
 \left[ \begin{array}{c}
 {\bf C}_{1} \\ {\bf C}_{2}
 \end{array} \right] {\bf u}(i,j),
 \ee
 \be\label{scatt:112}
 \mbox{\boldmath $\eta$}(i,j)=
 \left[ \begin{array}{cc}
 {\bf W}_{1} & {\bf W}_{2} \\
 \end{array} \right]
 \left[ \begin{array}{c}
 {\bf x}_{h}(i,j) \\ {\bf x}_{v}(i,j)
 \end{array} \right] + {\bf L u}(i,j),
 \ee
 with ${\bf W}=\left[ \begin{array}{cc} {\bf W}_{1} & {\bf W}_{2} \\
 \end{array} \right]$. In other words, ${\bf P}$ is given by
 \be\label{scatt:113}
 {\bf P}(z_{1},z_{2}) = {\bf L} + {\bf W}[z_{1}^{-1}{\bf I}_{p} \oplus
 z_{2}^{-1}{\bf I}_{q} - {\bf D}]^{-1}{\bf C}.
 \ee
 \end{corollary}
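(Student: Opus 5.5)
The plan is a direct algebraic verification in the style of the classical 1-D proof that a passive realization yields a spectral factor of $\one-\tilde{\bf H}{\bf H}$. No new factorization is needed, because the factor ${\bf P}$ is already given by (\ref{scatt:113}).

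First I would collect (\ref{scatt:107})--(\ref{scatt:109}) into a single identity. Stacking the rows of ${\bf T}$ in (\ref{scatt:102}) on top of $[{\bf L}\;{\bf W}]$, the three equations say exactly that
\[
\left[\begin{array}{cc}{\bf A}&{\bf B}\\ {\bf C}&{\bf D}\\ {\bf L}&{\bf W}\end{array}\right]^{*}
\left[\begin{array}{cc}{\bf A}&{\bf B}\\ {\bf C}&{\bf D}\\ {\bf L}&{\bf W}\end{array}\right]=\one_{n+p+q},
\]
that is, ${\bf A}^*{\bf A}+{\bf L}^*{\bf L}=\one-{\bf C}^*{\bf C}$, and similarly for the other blocks. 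This is just (\ref{scatt:106}) rewritten.

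Next I would introduce the notation
\[
{\bf Z}=z_1^{-1}\one_p\oplus z_2^{-1}\one_q,\qquad {\bf X}=({\bf Z}-{\bf D})^{-1}{\bf C},
\]
where ${\bf X}$ is a rational matrix of size $(p+q)\times n$, invertible generically. Let ${\bf v}=[\,\one_n;\;{\bf X}\,]$. Then (\ref{scatt:100}) and (\ref{scatt:113}) read ${\bf H}=[{\bf A}\;{\bf B}]{\bf v}$ and ${\bf P}=[{\bf L}\;{\bf W}]{\bf v}$. The defining relation of ${\bf X}$ gives ${\bf Z}{\bf X}={\bf C}+{\bf D}{\bf X}=[{\bf C}\;{\bf D}]{\bf v}$, which is the transform-domain form of the state equation (\ref{scatt:111}).

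For constant matrices the discrete paraconjugate is the conjugate transpose. Using the isometry identity above, I obtain
\[
\tilde{\bf H}{\bf H}+\tilde{\bf P}{\bf P}=\tilde{\bf v}\bigl(\one-[{\bf C}\;{\bf D}]^*[{\bf C}\;{\bf D}]\bigr){\bf v}=\one_n+\tilde{\bf X}{\bf X}-\widetilde{({\bf Z}{\bf X})}({\bf Z}{\bf X}).
\]
The key observation is that $\tilde{\bf Z}=z_1\one_p\oplus z_2\one_q$, so $\tilde{\bf Z}{\bf Z}=\one_{p+q}$. Hence $\widetilde{({\bf Z}{\bf X})}({\bf Z}{\bf X})=\tilde{\bf X}{\bf X}$, and (\ref{scatt:110}) follows.

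The only step needing care is the bookkeeping. I must check that the row and column ordering of ${\bf T}$ in (\ref{scatt:102}) and of $\bar{\bf T}$ in (\ref{scatt:117}) matches the roles of ${\bf A},{\bf B},{\bf C},{\bf D}$ in (\ref{scatt:100}): inputs ${\bf u}$ and states map to outputs ${\bf y}$ and next states. I must also check that the identities hold as rational-function identities, being valid wherever $\det({\bf Z}-{\bf D})\neq0$, hence identically.
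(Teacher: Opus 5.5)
Your proof is correct and is essentially the paper's own argument. The paper expands $\one-\tilde{\bf H}{\bf H}$, substitutes (\ref{scatt:107})--(\ref{scatt:109}), and shows that the leftover ${\bf C}^{*}({\bf t}_{1}+{\bf t}_{2}+{\bf t}_{3}){\bf C}$ vanishes. That leftover is exactly your $\widetilde{({\bf Z}{\bf X})}({\bf Z}{\bf X})-\tilde{\bf X}{\bf X}$, which you dispose of more economically through ${\bf Z}{\bf X}={\bf C}+{\bf D}{\bf X}$ and $\tilde{\bf Z}{\bf Z}=\one$.

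One small slip: ${\bf X}$ is $(p+q)\times n$ and need not be invertible. All you need is $\det({\bf Z}-{\bf D})\not\equiv 0$, which holds automatically because the coefficient of $z_1^{-p}z_2^{-q}$ in it is one.
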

 {\em Proof:} 
 The proof is routine algebraic manipulation for which we adopt
 the compact notation $\boldsymbol\zeta=z_{1}{\bf I}_{p}\oplus z_{2} {\bf I}_{q}$.
 By substituting for ${\bf H}$ from ({\ref{scatt:100}}) and expanding the
 product, then from  ({\ref{scatt:107}}), ({\ref{scatt:108}}), ({\ref{scatt:109}}) by making the
 replacements 
$\one-{\bf A}^{*}{\bf A}= {\bf C}^{*}{\bf C}+{\bf L}^{*}{\bf L}$, 
${\bf A}^{*}{\bf B}=-({\bf C}^{*}{\bf D}+{\bf L}^{*}{\bf W})$, 
${\bf B}^{*}{\bf B}=\one-{\bf D}^{*}{\bf D}-{\bf W}^{*}{\bf W}$,
 and subsequently rearranging terms we can write
 \be\label{scatt:114}
 {\bf 1}-\tilde{\bf H}{\bf H} = \tilde{{\bf P}}{\bf P}
 +{\bf C}^{*}({\bf t}_{1}+{\bf t}_{2}+{\bf t}_{3}){\bf C},
 \ee
 where
 \[ 
{\bf t}_{1}=\one -(\boldsymbol\zeta-{\bf D}^{*})^{-1}(\boldsymbol\zeta^{-1} - {\bf D})^{-1}, 
\]
\[ 
{\bf t}_{2}={\bf D}(\boldsymbol\zeta^{-1}-{\bf D})^{-1} + (\boldsymbol\zeta-{\bf D}^{*})^{-1}{\bf D}^{*},
\]
 and
 \[ 
{\bf t}_{3}=(\boldsymbol\zeta-{\bf D}^{*})^{-1}{\bf D}^{*}{\bf D}(\boldsymbol\zeta^{-1}-{\bf D})^{-1} .
\]
 Since we obviously have
 \begin{eqnarray*}
 {\bf t}_{1} & = & (\boldsymbol\zeta-{\bf D}^{*})^{-1}\{(\boldsymbol\zeta-{\bf  D}^{*})(\boldsymbol\zeta^{-1}-{\bf D})-\one\}(\boldsymbol\zeta^{-1}-{\bf D})^{-1} \\
     & & = (\boldsymbol\zeta-{\bf D}^{*})^{-1}\{{\bf D}^{*}{\bf D}-{\bf  D}^{*}\boldsymbol\zeta^{-1}-{\bf D} \boldsymbol\zeta \}(\boldsymbol\zeta^{-1}-{\bf D})^{-1},
 \end{eqnarray*}

 and

 \begin{eqnarray*}
 {\bf t}_{2} & = & (\boldsymbol\zeta-{\bf D}^{*})^{-1}\{{\bf D}^{*}(\boldsymbol\zeta^{-
 1}-{\bf D})+(\boldsymbol\zeta-{\bf D}^{*}){\bf D}\}(\boldsymbol\zeta^{-1}-
 {\bf D})^{-1} \\
       & = & (\boldsymbol\zeta-{\bf D}^{*})^{-1}\{{\bf D}^{*}\boldsymbol\zeta ^{-1}+{\bf  D}\boldsymbol\zeta -2{\bf D}^{*}{\bf D}\}(\boldsymbol\zeta^{-1}-{\bf D})^{-1},
 \end{eqnarray*}
 it follows after further expansion that ${\bf t}_{1}+{\bf t}_{2}+{\bf t}_{3}=0$, thus proving the corollary.\hfill{Q.E.D}

{\flushleft{\bf Remark:}}
The transfer function $\bf P$ is Corollary \ref{scatt:C2} is essentially the transfer function at the dissipative (resistive) ports of the passive realization. This is only where all dissipation in the system takes place, the rest of the realization being fully lossless. This is further appreciated by observing that for any input vector $\bf u$, it follows from (\ref{scatt:113}) that for real frequencies (i.e., for any $z_i = \exp (j\omega_i)$, $\omega_i$ real, $i=1,2$) we have  $||{\bf u}||^2 - ||{\bf S}{\bf u}||^2 = ||{\bf P}{\bf u}||^2$, which is a manifestation of the fact that all energy absorbed by the system described by the transfer function $\bf S$ goes into the transfer function $\bf P$.  

We can now state the converse of the bounded real Lemma \ref{converse-BR} in the following form. The essential ingredients of the proof are drawn from \cite{rob:Basu3}.
\begin{lemma}\label{converse-BR}
{\bf [Converse form of 2-D BR Lemma \ref{scatt:L2}]}\\
Given a 2-D Roesser state space realization as in equations (\ref{scatt:104}) and (\ref{scatt:105}) such that (\ref{scatt:101}) is satisfied, with $\bf T$ being given as in (\ref{scatt:102}) and (\ref{scatt:103}),  the 2-D transfer function $\bf H$ given by equation (\ref{scatt:100}) is  a discrete bounded rational matrix.
\end{lemma}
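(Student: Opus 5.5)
The plan is to verify the two defining properties of a discrete bounded matrix directly from the state space data: holomorphy of ${\bf H}$ in the open unit bi-disc $|{\bf z}|<1$, and contractivity $\one-{\bf H}^*{\bf H}\geq 0$ there. Contractivity on the distinguished boundary $|{\bf z}|=1$, wherever ${\bf H}$ is well defined, then follows by continuity.

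Throughout, write $\boldsymbol\Delta=z_1\one_p\oplus z_2\one_q$, so that $(z_1^{-1}\one_p\oplus z_2^{-1}\one_q-{\bf D})^{-1}=\boldsymbol\Delta(\one-{\bf D}\boldsymbol\Delta)^{-1}$. Then (\ref{scatt:100}) becomes the linear fractional form
\[
{\bf H}={\bf A}+{\bf B}\boldsymbol\Delta(\one-{\bf D}\boldsymbol\Delta)^{-1}{\bf C}.
\]
Rationality of ${\bf H}$ is evident from this expression.

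\emph{Holomorphy.} Since ${\bf T}$ is a contraction by (\ref{scatt:101}), its compression ${\bf D}$ satisfies $\|{\bf D}\|\leq 1$. For $|z_1|,|z_2|<1$ we have $\|\boldsymbol\Delta\|=\max|z_i|<1$, hence $\|{\bf D}\boldsymbol\Delta\|<1$. Therefore $\one-{\bf D}\boldsymbol\Delta$ is invertible, indeed by a Neumann series, throughout the open bi-disc. It follows that $\det(\one-{\bf D}\boldsymbol\Delta)\neq 0$ for $|{\bf z}|<1$, so every entry of ${\bf H}$ is holomorphic there.

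\emph{Contractivity (energy argument).} Fix ${\bf z}$ in the open bi-disc and a constant input vector ${\bf u}$. Define ${\bf x}=(\one-{\bf D}\boldsymbol\Delta)^{-1}{\bf C}{\bf u}$ and ${\bf v}=\boldsymbol\Delta{\bf x}$. These satisfy ${\bf x}={\bf C}{\bf u}+{\bf D}{\bf v}$, and the corresponding output is ${\bf y}={\bf H}{\bf u}={\bf A}{\bf u}+{\bf B}{\bf v}$. Equivalently,
\[
\left[\begin{array}{c}{\bf y}\\ {\bf x}\end{array}\right]={\bf T}\left[\begin{array}{c}{\bf u}\\ {\bf v}\end{array}\right].
\]
By (\ref{scatt:101}) this gives $\|{\bf y}\|^2+\|{\bf x}\|^2\leq\|{\bf u}\|^2+\|{\bf v}\|^2$. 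Since $\|{\bf v}\|=\|\boldsymbol\Delta{\bf x}\|\leq\|{\bf x}\|$, we conclude $\|{\bf H}{\bf u}\|\leq\|{\bf u}\|$ for every ${\bf u}$, that is, $\one-{\bf H}^*{\bf H}\geq 0$ in $|{\bf z}|<1$.

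\emph{Boundary.} The same inequality persists at every point of $|{\bf z}|=1$ where ${\bf H}$ is well defined, by taking limits from the interior. As a cross-check on the torus, one can also reuse the algebra of Corollary \ref{scatt:C2}. Factor $\one-{\bf T}^*{\bf T}$ as in (\ref{scatt:106}); the same expansion (\ref{scatt:114}), now without the terms having cancelled identically, should yield $\one-\tilde{\bf H}{\bf H}=\tilde{\bf P}{\bf P}$ plus a term ${\bf C}^*(\cdot){\bf C}$ that is nonnegative when $|z_i|=1$.

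The only delicate point is the boundary. The matrix $\one-{\bf D}\boldsymbol\Delta$ may be singular at finitely many or lower dimensional sets of points of the torus, and rationality alone does not make ${\bf H}$ well defined there. However, the definition of discrete boundedness only requires holomorphy in the open polydisc, together with the inequality wherever ${\bf H}$ is defined. Both of these are covered by the argument above, so I expect no real obstruction beyond stating the continuity step carefully.
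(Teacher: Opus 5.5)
Your proof is correct. For the contractivity part it takes a different route from the paper.

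The paper factors $\one-{\bf T}^*{\bf T}$ as in (\ref{scatt:106}) and reruns the algebra of Corollary \ref{scatt:C2}. This gives the exact identity $\one-\tilde{\bf H}{\bf H}=\tilde{\bf P}{\bf P}$, from which it reads off $\one-{\bf H}^*{\bf H}={\bf P}^*{\bf P}\geq 0$ on the distinguished boundary. For holomorphy it argues by contradiction. A pole $(\zeta,\eta)$ in the open bi-disc would give ${\bf v}\neq 0$ with ${\bf D}{\bf v}=(\zeta^{-1}\one_p\oplus\eta^{-1}\one_q){\bf v}$, hence ${\bf v}^*{\bf v}-{\bf v}^*{\bf D}^*{\bf D}{\bf v}<0$, contradicting (\ref{scatt:109}).

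Your holomorphy step uses the same fact, that ${\bf D}$ is a contraction, in norm form. Writing the resolvent as $\boldsymbol\Delta(\one-{\bf D}\boldsymbol\Delta)^{-1}$ has a minor advantage: it also covers points with $z_1=0$ or $z_2=0$, which the paper's $\zeta^{-1},\eta^{-1}$ formulation tacitly excludes.

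Your energy inequality is the genuinely different part. It needs no square root of $\one-{\bf T}^*{\bf T}$. It yields $\one-{\bf H}^*{\bf H}\geq 0$ directly throughout $|{\bf z}|<1$, which is the form of the definition used in Section \ref{S22}, and the boundary statement then follows by continuity. The paper's route gives the inequality directly only on the torus. In exchange it produces the explicit dissipation transfer function ${\bf P}$, which carries the circuit interpretation that all losses occur at the resistive ports.

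Your cross-check remark is slightly off, though it is not load-bearing. The computation in Corollary \ref{scatt:C2} uses only (\ref{scatt:107})--(\ref{scatt:109}), and these hold for any ${\bf L},{\bf W}$ factoring $\one-{\bf T}^*{\bf T}$. So ${\bf t}_1+{\bf t}_2+{\bf t}_3=0$ still holds identically, and you get the exact equality $\one-\tilde{\bf H}{\bf H}=\tilde{\bf P}{\bf P}$; this is precisely the paper's proof. A leftover ${\bf C}^*(\cdot){\bf C}$ term appears only if you use the pointwise adjoint ${\bf H}^*$ off the torus. There it equals ${\bf C}^*[(\one-{\bf D}\boldsymbol\Delta)^{-1}]^*(\one-\boldsymbol\Delta^*\boldsymbol\Delta)(\one-{\bf D}\boldsymbol\Delta)^{-1}{\bf C}$, which vanishes on $|{\bf z}|=1$ and is nonnegative inside. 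That is just your energy argument made exact by including the ${\bf L}{\bf u}+{\bf W}{\bf v}$ term.
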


{\em Proof:} Given the Roesser state space model  (\ref{scatt:104}),  (\ref{scatt:105}) with nonnegative $\bf T$ as in (\ref{scatt:101}) one can, by using the matrix 'square root'  operation, obtain $\bf L$ and $\bf W$ satisfying  (\ref{scatt:106}). Furthermore, by running the same algebraic manipulations as in the proof of Corollary \ref{scatt:C2} we may obtain equation (\ref{scatt:110}). Therefore, by considering $\bf z$ on the so called distinguished boundary of the unit bi-disc $|{\bf z}| =1$, i.e., $z_i = \exp (j\omega_i)$, $\omega_i$ real, $i=1,2$ we may conclude that $\one -{\bf H}^*{\bf H} = {\bf P}^*{\bf P}\geq {\bf 0}$.

In order to prove that $\bf H$ is a bounded rational matrix, it only remains to show that $\bf H$ is holomorphic in the open unit bi-disc $|{\bf z}|<1$. For contradiction, we assume the existence of a fixed zero ${\bf z}=(\zeta, \eta)$ of the denominator of (\ref{scatt:100}) in  $|{\bf z}|<1$, i.e., 
\[
\det(\zeta^{-1}\one_{p} \oplus  \eta^{-1}\one_{q} - {\bf D})=0, 
\quad 
|\zeta|<1 \text{ and } |\eta|<1.
\]
This, in turn, implies the existence of a nonzero constant vector, say, $\bf v$ such that $(\zeta^{-1}\one_{p} \oplus  \eta^{-1}\one_{q} - {\bf D}){\bf v}=0$, i.e.,  ${\bf D}{\bf v}=(\zeta^{-1}\one_{p} \oplus  \eta^{-1}\one_{q}){\bf  v}$. If  ${\bf v}_1$ and  ${\bf v}_2$ are respectively the first $p$ and last $q$ components of ${\bf v}$, then it is easy to see that
\be\label{holomorp-1}
{\bf v}^{*}{\bf v}  - {\bf v}^{*}{\bf D}^{*}{\bf D}{\bf v} =
(1 - |\zeta|^{-2})|{\bf v}_1|^2 + (1-|\eta|^{-2})|{\bf v}_2|^2.
\ee
We now consider (\ref{scatt:109}), which is an immediate algebraic consequence of  (\ref{scatt:101}). Pre-multiplying by ${\bf v}^{*}$ and post-multiplying  by ${\bf v}$  equation (\ref{scatt:109}) yields
\be\label{holomorp-2}
 {\bf v}^{*}{\bf v}  - {\bf v}^{*}{\bf D}^{*}{\bf D}{\bf v} ={\bf v}^{*}{\bf B}^{*}{\bf B}{\bf v} + {\bf v}^{*}{\bf W}^{*}{\bf W}{\bf v}\geq 0.
 \ee
Clearly, (\ref{holomorp-1}) and (\ref{holomorp-2}) are in contradiction due to that fact that $|\zeta|<1$  and  $|\eta|<1$,  proving that ${\bf H}$ is  holomorphic in the open unit bi-disc $|{\bf z}|<1$, and thus concluding the proof of the present lemma.\hfill{Q.E.D} 
{\flushleft{\bf Remark:}}
Lemma \ref{scatt:L2} and \ref{converse-BR}, as presented, provides characterization of passivity of a scattering matrix in terms of (Roesser) state space realization of a discrete domain 2-D system. A similar result can be obtained from it for continuous time systems essentially via the use of double bilinear transformation (\ref{S_fact:a*}). Furthermore, analogous characterization of the positivity property (e.g., as for the 1-D Positive Real Lemma), both for the continuous and the discrete case, can also be conveniently derived from it via the use of Cayley transform, namely, that rational matrix $\bf S$ is  a  bounded function if and only if the rational matrix  ${\bf Z}=(\one-{\bf S})(\one+{\bf S})^{-1}$ is a positive function.

 \section{Alternative approaches to synthesis}\label{Sec:7}
In order to round out the status of synthesis of multidimensional lossless transfer functions we next briefly describe two alternative approaches to the synthesis problem that have been pursued in the literature.  Admittedly, the discussion of this section does not rest upon the SOS problem - at least directly. While the approach described in previous sections involves extracting inductive or capacitive elements (or delays in the discrete domain formulation), and can thus be thought as a `global' approach to synthesis in an unconstrained structure (i.e., in an unconstrained topology in a graph theoretic sense), the two  approaches described in the present section either consists of attempts at decomposing the prescribed system into smaller subsystems of the same type, or alternatively at generating in a bottom up fashion a hopefully rich and interesting class of transfer functions that are synthesizable. 

\subsection{Multidimensional lossless two-ports }\label{other-synthesis}
This approach has only been considered for 2-ports described by the associated  transfer scattering matrix ${\bf T}$ or the scattering matrix ${\bf S}$ of the system. Factorization of ${\bf T}$  leads to decomposition in cascade structure in the synthesis, whereas other topological structures result from factoring ${\bf S}$. We will use the multidimensional extension of the  Belevitch canonical form parameterization  \cite{stb:Bele1} of a two-port given by a triplet of polynomials $\{f,g,h\}$ and a constant $\sigma$ such that
\[
{\bf S}=\frac{1}{g}\left[ \begin{matrix}
   h & \sigma {{f}_{*}}  \\
   f & -\sigma {{h}_{*}}  \\
\end{matrix} \right];
\quad 
{\bf T}=\frac{1}{f}\left[ \begin{matrix}
   \sigma {{g}_{*}} & h  \\
   \sigma {{h}_{*}} & g  \\
\end{matrix} \right],
\]
$g$ is a scattering Hurwitz polynomial, $g{{g}_{*}}=hh*+f{{f}_{*}}$, and $|\sigma |=1$. The zeros of the polynomial $f$ are called the transmission zeros of the system. We shall only consider the factorization ${\bf T}={\bf T}'{\bf T}''$, where $ {\bf T}'$ and ${\bf T}''$ are transfer scattering matrices of smaller `size'. The factorization of $\bf S$ being similar, but more straightforward. We refer to \cite{fb:basu+tan} and references therein for details including discrete versions of the problem. 

Let the desired ${\bf T}'$ and ${\bf T}''$ be parameterized  \`{a}  la Belevitch  respectively by $\{{f}',{g}',{h}'\}$, ${\sigma }'$ and  $\{{f}'',{g}'',{h}''\}$, ${\sigma }''$. Then, given the polynomial factorization $f={f}'{f}''$, unimodular constants ${\sigma }'$, ${\sigma }''$ and nonnegative integers ${{{n}'}_{i}}$ , ${{{n}''}_{i}}$ such that $\sigma ={\sigma }'{\sigma }''$, $\text{de}{{\text{g}}_{i}}{f}'\le {{{n}'}_{i}}$ ,  $\text{de}{{\text{g}}_{i}}{f}''\le {{{n}''}_{i}}$, $\text{de}{{\text{g}}_{i}}g={{n}_{i}}={{{n}'}_{i}}+{{{n}''}_{i}}$, we seek  polynomials ${g}'$, ${h}'$, ${g}''$, and ${h}''$. Note that the problem as described so far is highly nonlinear from the algebraic point of view. We will call this solution an {\em algebraic solution} to the problem. However, in order for  $\{{f}',{g}',{h}'\}$, ${\sigma }'$ and  $\{{f}'',{g}'',{h}''\}$, ${\sigma }''$ to be valid Belevitch canonical parameterizations of ${\bf T}'$ and ${\bf T}''$, we {\em additionally} need both ${g}'$ and ${g}''$ to be {\em scattering Hurwitz polynomials}. 

Two most unexpected results allows us to approach this apparently intractable problem.  
\begin{fact}\label{surprise-1}
An algebraic solution to the problem described above is such that the resulting ${g}'$ and ${g}''$ are necessarily scattering Hurwitz polynomials. 
\end{fact}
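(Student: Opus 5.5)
The plan is to reduce the claim to a statement about univariate polynomials obtained by freezing one variable, and then to use the degree bookkeeping $\deg_i g=n_i=n'_i+n''_i$ to rule out zeros of $g'$ and $g''$ in $\Re{\bf p}>0$.

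First I would write out the algebraic content of ${\bf T}={\bf T}'{\bf T}''$ with $f=f'f''$. Multiplying the two Belevitch forms and comparing entries gives the polynomial identities
\[
g=g'g''+\sigma'h'_*h'',\qquad h=\sigma''g''_*h'+h''g',
\]
together with the conjugate relations. Together with the normalizations $g'g'_*=h'h'_*+f'f'_*$ and $g''g''_*=h''h''_*+f''f''_*$, these form the only data available.

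Second, I would invert the cascade, exploiting that ${\bf T}''$ is $J$-unitary, i.e.\ ${\bf T}''^{-1}$ is expressible through ${\bf T}''_*$ and $\det{\bf T}''=\sigma''g''_*/g''\cdot(\ldots)$. This yields ${\bf T}'={\bf T}{\bf T}''^{-1}$ and hence a relation of the type
\[
g'\,f''f''_*=g\,g''_*-\sigma''h\,h''_*
\]
(up to the unimodular constants). A symmetric relation holds for $g''$ with $f'f'_*$. Dividing by $g\,g''_*$ gives
\[
\frac{g'f''f''_*}{g\,g''_*}=1-\sigma''\frac{h}{g}\frac{h''_*}{g''_*}.
\]
Here $h/g$ is bounded, being an entry of the scattering matrix ${\bf S}$. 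On the imaginary axes we have $|h''|\le|g''|$ by the normalization of ${\bf T}''$. Hence the right-hand side has nonnegative real part on $\Re{\bf p}=0$ wherever it is defined.

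Third, I would pass to slices. Fix $p_1=p_{10}$ with $\Re p_{10}\ge 0$ generic, so that partial degrees in $p_2$ are preserved and no nonessential singularities of the second kind sit on the slice. Then $g(p_{10},p_2)$ is a univariate Hurwitz polynomial of degree $n_2$, by the standard characterization of scattering Hurwitz polynomials via slices \cite{stb:Fett2}. I would apply the argument principle along $\Re p_2=0$ to $g'(p_{10},\cdot)$ and $g''(p_{10},\cdot)$, using the identity $g=g'g''\bigl(1+\sigma'\tfrac{h'_*}{g'}\tfrac{h''}{g''}\bigr)$. The inequalities $|h'|\le|g'|$ and $|h''|\le|g''|$ on the boundary make the correction factor a Rouch\'e-type perturbation that winds by zero. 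The number of right half-plane zeros of $g'g''$ on the slice then equals that of $g$, namely zero. This is provided the degree constraints $\deg_2 g'\le n'_2$, $\deg_2 g''\le n''_2$ with $n'_2+n''_2=n_2$ force the degrees to be attained, so that no zeros escape to infinity. Repeating with the roles of $p_1$ and $p_2$ interchanged, and invoking the slice criterion, would give that $g'$ and $g''$ are nonzero in $\Re{\bf p}>0$. Coprimeness of $g'$ with $g'_*$ (and likewise for $g''$) would then follow from the coprimeness of $g$ and $g_*$ via the identity for $g$, since a common factor of $g'$ and $g'_*$ would divide both $g$ and $g_*$.

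The main obstacle is the third step. Equality $|h'|=|g'|$ can occur exactly at the zeros of $f'$ on the imaginary axes, and at such points the Rouch\'e argument degenerates. In addition, the normalizations only give $|h'|\le|g'|$ after one already knows $g'$ has no imaginary-axis zeros shared with $h'$. I would first try to handle this by perturbing the slice ($p_{10}\mapsto p_{10}+\epsilon$ with $\epsilon>0$, where the inequality should become strict by the boundedness of $h/g$) and then using continuity of the roots. The degree equality $n_i=n'_i+n''_i$ is what prevents roots from migrating to infinity in the limit. If that fails, the fallback is to prove positivity of the rational function displayed in the second step directly and to use the result of \cite{rob:Basu1} that the denominators of positive functions are immittance Hurwitz.
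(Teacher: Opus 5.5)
The paper states this Fact without proof and defers to the literature, so your argument has to stand on its own. Your overall route is viable: restrict to slices, compare the zeros of $g$ and $g'g''$ via $g=g'g''+\sigma' h'_*h''$, use the degree bound, and lift back. Your degree bookkeeping is also fine: on a slice, $|h'|\le|g'|$ on the axis gives $\deg_2 h'\le\deg_2 g'$, hence $n_2\le\deg_2g'+\deg_2g''\le n_2$.

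The gap is in the decisive step, which you do not carry out, and your proposed repair cannot work. The only source of $|h'|\le|g'|$ is the identity $g'g'_*=h'h'_*+f'f'_*$ evaluated where $g'_*(j\boldsymbol\omega)=\overline{g'(j\boldsymbol\omega)}$, i.e. on the distinguished boundary. There it holds identically as $|g'|^2=|h'|^2+|f'|^2$, with no prerequisite about shared zeros. Off the distinguished boundary (your $\Re p_{10}>0$, or $p_{10}+\epsilon$) there is no inequality at all. Boundedness of $h/g$ says nothing about $h'/g'$; bounding $h'/g'$ is essentially the claim itself.

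So you must use slices $p_1=j\omega_{10}$, and there your ``obstacle'' is not one. For all but finitely many $\omega_{10}$, the polynomial $G=g(j\omega_{10},\cdot)$ has degree $n_2$ and no zeros in $\Re p_2\ge 0$, because zeros of $g$ on the distinguished boundary are common zeros of the coprime $g,g_*$. Hence on $\Re p_2=0$
\[
|G-G'G''|=|H'|\,|H''|\le|G'G''|<|G|+|G'G''| .
\]
The symmetric form of Rouch\'e's theorem on large half-discs (with the arc controlled by leading coefficients) then shows $G'G''$ has no zeros in $\Re p_2>0$, wherever $f'$ and $f''$ vanish. Moreover, $G'(j\omega_0)=0$ forces $H'(j\omega_0)=F'(j\omega_0)=0$ and hence $G(j\omega_0)=0$. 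So $G'$ and $G''$ have all their zeros in the \emph{open} left half-plane.

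That strictness is exactly what your unspecified ``slice criterion'' needs, and you never establish it. Having no zeros in the open right half-plane on every distinguished-boundary slice, in both variables, does not imply $g'\neq 0$ in $\Re\mathbf{p}>0$: $p_1-p_2$ satisfies this and vanishes at $(1,1)$. With strictly stable generic slices in both directions the lift does work:
\begin{itemize}
\item Hurwitz's theorem, together with the other family of slices, rules out zeros of $g'$ with $\Re p_1>0,\ \Re p_2=0$.
\item It also rules out zeros of the leading $p_2$-coefficient of $g'$ in $\Re p_1>0$.
\item Hence the number of zeros of $g'(p_1,\cdot)$ in $\Re p_2>0$ is constant on $\Re p_1>0$, and it equals $0$ near a generic $j\omega_{10}$.
\end{itemize}

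Your coprimeness step is also unsupported. A common factor $q$ of $g'$ and $g'_*$ divides $g'g''$, but nothing makes it divide $\sigma'h'_*h''$, so it need not divide $g$. Argue instead as follows. From $q\mid g'_*$ you get $q_*\mid g'$. On a generic slice, the roots of $q_*(j\omega_{10},\cdot)$ are the mirror images $-\bar b$ of the roots $b$ of $q(j\omega_{10},\cdot)$. Both sets must lie in the open left half-plane, which is impossible unless $\deg_2 q=0$; symmetrically $\deg_1 q=0$.

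Some smaller errors, none of which is needed once the slice argument is done correctly:
\begin{itemize}
\item Your step-two identity is wrong as written; it should be $g'f''f''_*=gg''_*-\sigma'h_*h''$.
\item Your $h$-identity does not match the paper's convention, which gives $h=h'g''+\sigma'g'_*h''$.
\item The fallback via positive functions is too vague to count as an argument.
\end{itemize}
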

The unexpected nature of the result in Fact \ref{surprise-1} arises from the realization that a stringent analytic property of the solution, namely stability, is enforced by purely algebraic constraints of the solution. However, while it suffices to solve the algebraic problem only, its solution is still formidable due to apparent high nonlinearity. Towards this end, the equation 
\be\label{fundamental-eqn}
 h{g}'-g{h}'={\sigma }'{h}''{f}'{f}'_{*}, \; \deg_{i}{g}'\leq {n}',\; \deg_{i}{h}'\leq {n}'_{i},\;\deg_{i}{h}''\leq {n}''_{i}  
\ee
plays a fundamental role. Since the only unknowns  are the polynomials ${g}'$, ${h}'$, and ${h}''$ (\ref{fundamental-eqn}) is a linear equation.  Then we have the following most unexpected result.
\begin{fact}\label{surprise-2}
The   factorization problem ${\bf T}={\bf T}' {\bf T}''$, as described above, has a solution if and only if the fundamental equation (\ref{fundamental-eqn}), viewed as a linear simultaneous equation in terms of the coefficients of the polynomials  ${g}'$, ${h}'$, and ${h}''$, has \underline{two} linearly independent solutions.
\end{fact}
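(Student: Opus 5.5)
The plan is to translate the matrix identity ${\bf T}={\bf T}'{\bf T}''$ into polynomial identities among the Belevitch triplets, and to read off the fundamental equation (\ref{fundamental-eqn}) from them. Multiplying out the Belevitch forms and using $f=f'f''$ gives two relations from the second column:
\[
g=g'g''+\sigma' h'_{*}h'', \qquad h=h'g''+\sigma' g'_{*}h''.
\]
The first-column relations are the para-conjugates of these, up to $\sigma=\sigma'\sigma''$.

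\emph{Necessity.} Multiply the second relation by $g'$ and the first by $h'$, then subtract. This eliminates $g''$:
\[
hg'-gh'=\sigma' h''(g'g'_{*}-h'h'_{*})=\sigma' h'' f'f'_{*},
\]
where the last step uses the losslessness identity $g'g'_{*}=h'h'_{*}+f'f'_{*}$ for ${\bf T}'$. Hence any factorization gives a solution $(g',h',h'')$ of (\ref{fundamental-eqn}) within the prescribed degree bounds.

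A second, linearly independent solution should come from the symmetry of the Belevitch parametrization under reflection. In the discrete setting, replace each polynomial by its inverse-conjugate polynomial (with respect to the degree bounds $n'_i$ and $n''_i$) and swap the roles of $g'$ and $h'$. Equivalently, use the factorization of the para-conjugate/inverse two-port ${\bf T}^{-1}={\bf T}''^{-1}{\bf T}'^{-1}$, whose Belevitch data are again built from $\{f,g,h\}$. I would check that the resulting triple satisfies (\ref{fundamental-eqn}) and obeys the same degree bounds. Independence from $(g',h',h'')$ would follow because $g'$ is scattering Hurwitz, so $g'$ and its reflection are coprime and cannot be proportional.

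\emph{Sufficiency.} Let $(X_k,Y_k,Z_k)$, $k=1,2$, be two independent solutions and consider the pencil $(X,Y,Z)=c_1(X_1,Y_1,Z_1)+c_2(X_2,Y_2,Z_2)$. From (\ref{fundamental-eqn}) and $gg_{*}=hh_{*}+ff_{*}$, I would show that $f'f'_{*}$ divides $XX_{*}-YY_{*}$ for every member of the pencil. By comparing degrees, this forces
\[
XX_{*}-YY_{*}=Q(c)\,f'f'_{*},
\]
where $Q(c)=c^{*}Mc$ for a constant $2\times 2$ Hermitian matrix $M$. The key claim is that the existence of two independent solutions forces $M$ to be indefinite or positive somewhere. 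That lets us choose $c$ with $Q(c)=1$, and then set $g'=X$, $h'=Y$, $h''=Z$. Next I would define $g''$ by the division $g''=(g-\sigma' h'_{*}h'')/g'$. Showing that this is a polynomial with the right degree uses (\ref{fundamental-eqn}) once more, together with the coprimality of $g'$ and $g'_{*}$. Finally, Fact \ref{surprise-1} upgrades this algebraic solution to one where $g'$ and $g''$ are scattering Hurwitz, so ${\bf T}'$ and ${\bf T}''$ are genuine lossless transfer scattering matrices.

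The main obstacle is the sufficiency normalization: proving that $XX_{*}-YY_{*}$ is a constant multiple of $f'f'_{*}$ on the solution pencil, and that the Hermitian form $Q$ takes the value $1$. I would first establish the divisibility by writing $g(XX_{*}-YY_{*})$ and using (\ref{fundamental-eqn}) and its para-conjugate. I would then obtain the constancy of $Q$ from the degree bounds $\deg_i g'\le n'_i$.
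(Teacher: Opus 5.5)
The paper gives no proof of this Fact; it only refers to the Basu--Tan work on two-port factorization. Your plan therefore has to stand on its own, and as written it has gaps in both directions.

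\textbf{Necessity.} Your product relations $g=g'g''+\sigma' h'_{*}h''$ and $h=h'g''+\sigma' g'_{*}h''$ are correct, and so is the elimination of $g''$ that yields the solution $(g',h',h'')$.

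The second solution, however, is misidentified. Reflecting every polynomial and swapping $g'$ and $h'$ puts the reflection of $h''$ in the third slot, and that triple does not satisfy (\ref{fundamental-eqn}). Passing to ${\bf T}^{-1}=({\bf T}'')^{-1}({\bf T}')^{-1}$ is not equivalent either, because reversing the order of the factors changes the linear system.

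The correct partner comes from eliminating $h''$ instead of $g''$. Multiply the $g$-relation by $g'_{*}$ and the $h$-relation by $h'_{*}$, then subtract. This gives $hh'_{*}-gg'_{*}=g''(h'h'_{*}-g'g'_{*})=-g''f'f'_{*}$. Hence $(h'_{*},g'_{*},-\bar\sigma' g'')$ solves (\ref{fundamental-eqn}) within the same degree bounds.

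Its independence has nothing to do with coprimality of $g'$ and $g'_{*}$. If $(h'_{*},g'_{*})=\lambda(g',h')$, then $|\lambda|=1$, and so $g'g'_{*}-h'h'_{*}\equiv 0$. This contradicts $f'\not\equiv 0$.

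\textbf{Sufficiency.} The step you flag as the main obstacle is genuinely missing, and two further steps fail as stated.

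(i) For an arbitrary pair of independent solutions, nothing forces the form $Q=(XX_{*}-YY_{*})/f'f'_{*}$ to take a positive value on their span. What controls the sign is the map $\iota:(X,Y,Z)\mapsto(Y_{*},X_{*},W)$ suggested by the necessity computation, which reverses the sign of $XX_{*}-YY_{*}$. You would have to show three things:
\begin{itemize}
\item $\iota$ preserves the solution space, i.e.\ $f'f'_{*}$ divides $hY_{*}-gX_{*}$ with a quotient of degree at most $n''_i$;
\item $Q\not\equiv 0$ there;
\item then pick $v$ with $Q(v)>0$, replacing $v$ by $\iota v$ if necessary, and rescale.
\end{itemize}

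(ii) Constancy of $Q$ does not follow from $\deg_i X\le n'_i$ alone when $\deg_i f'<n'_i$, which the setup allows. It follows from the identity $gg_{*}(XX_{*}-YY_{*})=f'f'_{*}\bigl(f''f''_{*}XX_{*}+\sigma' h_{*}X_{*}Z+\bar\sigma' hXZ_{*}-f'f'_{*}ZZ_{*}\bigr)$. The bracket has $\deg_i\le 2n_i$, so $\deg_i(XX_{*}-YY_{*})\le\deg_i(f'f'_{*})$. Divisibility additionally needs $g$ coprime to $f'f'_{*}$.

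(iii) Obtaining $g''$ by dividing by $g'$ and citing coprimality of $g'$ and $g'_{*}$ is circular. At that point $g'=X$ is just a member of the pencil, and the scattering Hurwitz property only comes afterwards from Fact~\ref{surprise-1}. Moreover, the natural identity $Y(g-\sigma' Y_{*}Z)=X(h-\sigma' X_{*}Z)$ only lets you cancel if $X$ and $Y$ are coprime.

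\textbf{The missing idea.} The second solution hands you $g''$ directly. Cross-multiply (\ref{fundamental-eqn}) for $v=(X,Y,Z)$ and $\iota v=(Y_{*},X_{*},W)$. This gives $g(XX_{*}-YY_{*})=\sigma' f'f'_{*}(ZY_{*}-WX)$ and $h(XX_{*}-YY_{*})=\sigma' f'f'_{*}(ZX_{*}-WY)$.

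So once $XX_{*}-YY_{*}=f'f'_{*}$, you read off $g=Xg''+\sigma' Y_{*}Z$ and $h=Yg''+\sigma' X_{*}Z$ with $g''=-\sigma' W$. This $g''$ is a polynomial with $\deg_i g''\le n''_i$, obtained with no division. The relation $g''g''_{*}-ZZ_{*}=f''f''_{*}$ then follows by cancelling $f'f'_{*}$ in $gg_{*}-hh_{*}=ff_{*}$. This is exactly why the criterion asks for two linearly independent solutions.
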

By counting the degrees of the polynomials in (\ref{fundamental-eqn}), it is straightforward to observe that there are ${{N}_{e}}$ equations involving ${{N}_{u}}$ unknown coefficients, where 
\[
{{N}_{e}}=\prod\limits_{i}{(2{{{{n}'}}_{i}}+{{{{n}''}}_{i}}+1)},\ {{N}_{u}}=2\prod\limits_{i}{({{{{n}'}}_{i}}+1)}+\prod\limits_{i}{({{{{n}''}}_{i}}+1)}.
\] 
In the univariate case it is trivial to see that ${{N}_{u}}-{{N}_{e}}=2$, and thus the factorization problem is solvable. However, in two or more dimensions we have ${{N}_{e}}>{{N}_{u}}$ in general, i.e., the system of fundamental equations  (\ref{fundamental-eqn}) is overdetermined, thus indicating the failure of the factorization  ${\bf T}={\bf T}' {\bf T}''$. A concrete numerical example of this phenomenon is not hard to construct and has been made available both for the continuous and for the discrete \cite{basu-wavelet} version  of the problem. Indeed, the system of linear equations becomes increasingly more overdetermined as the number of dimensions goes up.

{\flushleft{\bf Remark:}}
The problem of factoring the 2-port scattering matrix $S$ proceeds in a similar way. In this case, however, we begin at the very outset by considering the polynomial factorization $g={g}'{g}''$. The scattering Hurwitz properties of  ${g}'$  and ${g}''$ then trivially follow from the scattering Hurwitz property of $g$, thus making an analog of Fact \ref{surprise-1} unnecessary. However, an analog of Fact \ref{surprise-2} remains fully in force, and it again turns out that the factorization  ${\bf S}={\bf S}' {\bf S}''$ is feasible if and only if the linear simultaneous equations corresponding to the fundamental equation (\cite{basu-wavelet}) has two linearly independent solutions - a condition that is generically satisfied in 1-D but fails in higher dimensions. 

{\flushleft{\bf Remark:}}
The approach of this section could  conceivably be explored in systems with a larger number of ports than two. Unavailability of parameterizations of lossless $n$-ports akin to the Belevitch canonical form, makes this more clumsy but should not be a serious hindrance. 

 \subsection{Ratios of elementary symmetric functions}\label{scatt:S25}
 The synthesis problem seems to suffer from the following hiatus.
 Lossless functions are provably not synthesizable in three or higher
 dimensions.  Synthesis of bounded or positive (disspative) transfer functions suffers from the additional
 bottleneck that due to lack of spectral factorability type results, lossless
 embedding fails to hold.  This raises the question of characterization of the
 classes of synthesizable (lossless) bounded rational functions or matrices in
 three and higher dimensions.  While at present this problem is largely open,
 in this section we describe a class of $n$-D $(n \geq 3)$ functions that
 admits passive synthesis.  Although, it proves to be convenient to describe
 these functions as {\em immittances of continuous domain networks}, corresponding
 classes of discrete domain and/or scattering functions can be 
 identified by using the bilinear transformation either on the transform
 variables, or on the functions themselves, or on both.

 Such a subclass can be obtained from the well-known elementary
 symmetric functions.  Indeed, it can be shown \cite{fettweis-symm} that
\begin{fact}
If $s_{m,n}$ denotes an elementary symmetric function of (total) degree $m$ in $n$ variables then  $z_{m,n}=s_{m,n}/s_{m-1,n}$  is a reactance function in irreducible rational form that can be synthesized as a series-parallel network. 
\end{fact}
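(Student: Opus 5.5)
I will argue by induction on $m$, using the classical recursion for elementary symmetric functions in the last variable:
\[
s_{m,n}=s_{m,n-1}+p_n\, s_{m-1,n-1},\qquad 1\le m\le n,
\]
with the conventions $s_{0,k}=1$ and $s_{m,k}=0$ for $m>k$. The base case $m=1$ gives $z_{1,n}=s_{1,n}=p_1+\cdots+p_n$. This is the series connection of $n$ unit inductors of types $p_1,\dots,p_n$, so it is trivially a series-parallel reactance. Throughout I read ``reactance'' as positive (holomorphic with nonnegative real part in $\Re{\bf p}>0$) and odd, i.e. $z_*=-z$.

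Oddness is immediate. The polynomial $s_{m,n}$ is homogeneous of degree $m$, so $s_{m,n}(-{\bf p})=(-1)^m s_{m,n}({\bf p})$, and hence $z_{m,n}(-{\bf p})=-z_{m,n}({\bf p})$.

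For positivity and synthesis, I look for a continued-fraction or ladder-type decomposition that expresses $z_{m,n}$ through $z_{\cdot,n-1}$ by series and parallel connections with the element $p_n$. Rather than guessing, I first check small cases. For $m=n$ we have $z_{n,n}=s_{n,n}/s_{n-1,n}=1/\sum_i p_i^{-1}$, which is the parallel connection of the $n$ inductors. That suggests the general structure mixes series and parallel connections.

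The natural candidate identity comes from the recursion: dividing numerator and denominator gives
\[
z_{m,n}=\frac{s_{m,n-1}+p_n s_{m-1,n-1}}{s_{m-1,n-1}+p_n s_{m-2,n-1}}.
\]
I will try to rewrite this as a series-parallel combination of $p_n$, $z_{m,n-1}$ and $z_{m-1,n-1}$. The difficulty is that a genuine series-parallel combination only produces ratios of this bilinear form when a Newton-type identity among the $s_{k,n-1}$ holds, which is not exact in general. So the main obstacle is finding the correct network decomposition.

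My first attempt at that obstacle is to pass to the admittance. One has $1/z_{m,n}=s_{m-1,n}/s_{m,n}$, and for distinct values the partial-fraction expansion in $p_n$ has only a simple pole at $p_n=-s_{m,n-1}/s_{m-1,n-1}$. This suggests a Foster-like extraction in the variable $p_n$, with residue expressed through $z_{m,n-1}$ and $z_{m-1,n-1}$. By the induction hypothesis both are series-parallel reactances. If the extraction closes, the induction step follows. If it does not, I fall back to an explicit construction in the spirit of \cite{fettweis-symm}, namely a symmetric lattice of unit inductors whose port impedance is read off combinatorially as $s_{m,n}/s_{m-1,n}$.

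Irreducibility is a separate step, handled by showing that $s_{m,n}$ and $s_{m-1,n}$ are coprime. Each $s_{k,n}$ is irreducible for $1\le k\le n-1$; for $k=n$ it is the product $p_1\cdots p_n$, and a factor $p_i$ does not divide $s_{m-1,n}$ when $m-1<n$. Since the two polynomials have different degrees they are not associates, so they are coprime, and the fraction $z_{m,n}$ is in irreducible form.
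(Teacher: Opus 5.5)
\textbf{There is a genuine gap.} Your oddness argument is fine, and so is the coprimality argument, which relies on the standard irreducibility of $s_{k,n}$ for $1\le k\le n-1$. But the substance of the statement is never proved. That substance is positivity together with series-parallel synthesizability. Your proposal ends with a conditional (``if the extraction closes'') and an unspecified fallback, and positivity was deferred to exactly that step.

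\textbf{The proposed extraction cannot close.} Carrying out your partial-fraction expansion in $p_n$ gives
\[
\frac{1}{z_{m,n}}=\frac{1}{z_{m-1,n-1}}+\frac{\Delta}{s_{m-1,n-1}^{2}}\cdot\frac{1}{p_n+z_{m,n-1}},\qquad
\Delta=s_{m-1,n-1}^{2}-s_{m-2,n-1}\,s_{m,n-1},
\]
with all $s_{k,n-1}$ taken in $p_1,\dots,p_{n-1}$. The second branch is therefore $p_n+z_{m,n-1}$ multiplied by the nonconstant degree-zero factor $s_{m-1,n-1}^2/\Delta$. No series or parallel connection produces such a factor.

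This failure is intrinsic, not an artifact of your particular decomposition. Suppose $p_n$ occurs in a single element $Lp_n$ of any reciprocal network. The remaining two-port then has chain parameters $A,B,C,D$ over $\mathbb{R}(p_1,\dots,p_{n-1})$ with $AD-BC=1$. Matching $(ALp_n+B)/(CLp_n+D)$ with $z_{m,n}$ forces $\kappa^2\Delta=L$ for some rational $\kappa$, so $\Delta$ must be a constant times a square. A single capacitor of type $p_n$ leads to the same conclusion.

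This condition does hold for $m=1$ and $m=n$, where $\Delta=1$ and $\Delta=s_{n-1,n-1}^2$ respectively. That is why the paper's examples $z_{2,2}$ and $z_{3,3}=p_3\|z_{2,2}$ are immediate. But already for $z_{2,3}$ one gets $\Delta=p_1^2+p_1p_2+p_2^2$, which is not a square. Your remark about a ``Newton-type identity'' had in fact located exactly this obstruction. The fallback lattice does not help either: it is never constructed, and a lattice (bridge) network is not series-parallel.

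\textbf{The missing idea: each variable must be used several times.} Every monomial of $s_{m,n}$ contains exactly $m$ variables, so
\[
m\,s_{m,n}=\sum_{i=1}^{n}p_i\,s_{m-1,n-1}(\mathbf p^{(i)}),
\]
where $\mathbf p^{(i)}$ omits $p_i$. Divide the $i$-th term by $s_{m-1,n}=s_{m-1,n-1}(\mathbf p^{(i)})+p_i\,s_{m-2,n-1}(\mathbf p^{(i)})$. For $2\le m\le n$ this yields
\[
z_{m,n}(\mathbf p)=\frac{1}{m}\sum_{i=1}^{n}\Bigl(p_i\,\|\,z_{m-1,n-1}(\mathbf p^{(i)})\Bigr),\qquad x\,\|\,y=\frac{xy}{x+y}.
\]
This is a series connection of $n$ branches. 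The $i$-th branch is the inductor $p_i$ in parallel with a copy of the network for $z_{m-1,n-1}$ in the other variables, and all element values are scaled by $1/m$. For instance,
\[
z_{2,3}=\tfrac12\bigl[p_1\|(p_2+p_3)+p_2\|(p_1+p_3)+p_3\|(p_1+p_2)\bigr].
\]
Induction on $m$, starting from $z_{1,k}=p_1+\cdots+p_k$, then gives the series-parallel realization. Positivity follows at no extra cost, since series and parallel connections of positive functions are positive.

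The paper itself gives no argument beyond the citation to Fettweis and the two $m=n$ examples. An identity of this kind is what your proof must supply.
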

Simple  examples, for which series-parallel realizability is obvious, are 
\[
z_{2,2} = \frac{p_1 p_2}{p_1 +p_2},\quad z_{3,3}=\frac{p_{3}z_{22}}{p_{3}+z_{22}}=\frac{p_{1} p_{2} p_{3}}{p_{1} p_{2} + p_{2} p_{3} + p_{3}  p_{1}} \text{  etc.}
\]
While the functions thus  obtained are of first degree in each one of the variables, a richer class of synthesizable reactance functions of  higher degree can easily be generated by starting from elementary symmetric functions of a sufficiently larger number of variables and then setting some  of the variables identically equal. Furthermore, lossy (dissipative) impedance functions could be generated by setting some of the variables equal to constants having positive real part.

{\flushleft\bf Remark:}
The richness of the above  class of functions generated in this way, which are all synthesizable,  within the class of all multivariable positive real functions is unknown.

\section{Conclusions}\label{conclude}
We reviewed the status of multidimensional passive synthesis to date. We have seen that the synthesis both lossy and lossless  transfer functions critically hinges upon the of the classical SOS problem, and the (partial) solutions  available for it. In summary, lossless (minimal) synthesis is feasible in two-dimensions, but not in three or higher dimensions.  As for dissipative transfer functions, they can be embedded in a lossless system and subsequently synthesized, but again this works in two-dimensions at the most, and issues of detailed nature, e.g., minimality, number of dissipative elements  etc. are still unclear. In higher dimensions the embedding fails. A little known, or under appreciated fact attributed to Cassel reduction \cite{S_fact:land} that  a positive polynomial in two variables can be expressed as a sum of squares of rational functions, the denominators of which are univariate polynomials play, a critical role in our 2-D theory. It may also be noted that at the current state of our understanding,  2-D bounded real functions (i.e., bounded rational functions or matrices with real coefficients) may not necessarily  be realizable by using only real valued elements (capacitors/inductors), but complex valued elements may be required. The reason for the latter is that a procedure for obtaining real rational spectral factors of a 2-D parahermitian real rational matrix positive definite on the frequency axis, although desirable, is not presently known. The latter goal  may, however, be attainable via further sharpening of solutions to the SOS problem, which remains to be explored.

As reflected in the references cited, the problem considered here has a long history spanning approximately half a century, and various elements of the discussion presented have roots at various points of time. Relatively recently, however, an exposition with somewhat different flavor \cite{Joseph_Ball} became available from the community of operator theorists in mathematics. While we have not undertaken a detailed comparison of the techniques exploited in  that exposition, it is fair to remark that present paper is more aligned with perspectives of circuits and systems theory in the engineering community. Thus, classical network synthesis, as for example, originally developed in \cite{stb:Bele1} is more emphasized, and is our starting point. The present discussion focuses on rational matrices, which are crucial for network synthesis. We also present and discuss algebraic aspects of the Sum of Squares (SOS) problem as a critical element of the theory in much greater detail, while the treatment of \cite{Joseph_Ball} mentions it only briefly. In this vein, a little known result in \cite{S_fact:land}, that can be understood in most elementary terms, plays an important role in our discussions.  Also, relevant aspects of minimal synthesis, synthesis of first-order $3$-D allpass functions, and treatment of dissipative $2$-D synthesis are not undertaken in \cite{Joseph_Ball}, as are the discussions on alternate possibilities of synthesis in constrained structures discussed in Section \ref{Sec:7}.

\section*{Acknowledgement}
The author would like to thank Professor Anton Kummert of University of Wuppertal, Germany for many years of  extended discussions on relevant problems, and in particular, for his comments on the present manuscript.

\end{document}